\def\blindreview{}
\newif\ifTR
\newcounter{claimcounter}
\renewenvironment{claim}{\refstepcounter{claimcounter}{\medskip\noindent \underline{Claim \theclaimcounter:}}\itshape}{\smallskip}
\crefname{claimcounter}{Claim}{Claims}
\Crefname{algocf}{Algorithm}{Algorithms}
\crefname{figure}{Figure}{Figures}
\newcommand{\vh}[1]{\textcolor{orange}{\ifmmode \text{[#1]}\else [VH: #1] \fi}}
\newcommand{\ol}[1]{\textcolor{blue}{\ifmmode \text{[OL: #1]}\else [OL: #1] \fi}}
\renewcommand{\ol}[1]{}
\newcommand{\blinded}[1]{\ifx\blindreview\undefined #1 \else \textcolor{black!65}{[blinded for review]}\fi}
\newcommand{\sofi}[0]{(S,O,f,i)}
\newcommand{\sofiprime}[0]{(S',O',f', i')}
\newcommand{\sofiof}[1]{(S_{#1},O_{#1},f_{#1},i_{#1})}
\newcommand{\sofiprimeof}[1]{(S_{#1},O'_{#1},f'_{#1},i'_{#1})}
\newcommand{\rank}[0]{\mathit{rank}}
\newcommand{\reach}[0]{\mathit{reach}}
\newcommand{\rankof}[1]{\mathit{rank}(#1)}
\newcommand{\rankofin}[2]{\mathit{rank}_{#2}(#1)}
\newcommand{\rankwof}[1]{\rank_{\word}(#1)}
\newcommand{\partto}[0]{\mathrel{\rightharpoonup}}
\newcommand{\infof}[1]{\inf(#1)}
\newcommand{\variant}[2]{#1 \lhd #2}
\newcommand{\img}[0]{\mathrm{img}}
\newcommand{\imgof}[1]{\img(#1)}
\newcommand{\restrof}[2]{#1 \raisebox{-.5ex}{$|$}_{#2}}
\newcommand{\aut}[0]{\mathcal{A}}
\newcommand{\but}[0]{\mathcal{B}}
\newcommand{\cut}[0]{\mathcal{C}}
\newcommand{\dut}[0]{\mathcal{D}}
\newcommand{\M}[0]{\mathcal{M}}
\newcommand{\J}[0]{\mathcal{J}}
\newcommand{\cT}{\mathcal{T}}
\newcommand{\cR}{\mathcal{R}}
\newcommand{\stack}[0]{\mathcal{S}}
\newcommand{\states}[0]{\mathcal{Q}}
\newcommand{\autex}[0]{\mathcal{A}_{\mathit{ex}}}
\newcommand{\transover}[1]{\overset{#1}{\rightarrow}}
\newcommand{\ltr}[1]{\transover{#1}}
\newcommand{\word}[0]{\alpha}
\newcommand{\wordof}[1]{\seqof{\word}{#1}}
\newcommand{\dagg}[0]{\mathcal{G}}
\newcommand{\dagof}[1]{\dagg_{#1}}
\newcommand{\dagw}[0]{\dagof{\word}}
\newcommand{\seqof}[2]{#1_{#2}}
\newcommand{\dagwiof}[1]{\dagw^{#1}}
\newcommand{\graph}[0]{\mathcal{G}}
\newcommand{\graphof}[1]{\graph_{#1}}
\newcommand{\scc}[0]{\mathit{SCC}}
\newcommand{\sccof}[1]{\scc(#1)}
\newcommand{\infreach}[0]{\textit{inf-reach}}
\newcommand{\infreachof}[1]{\infreach(#1)}
\newcommand{\maxinfreachof}[1]{\lceil #1 \rceil}
\newcommand{\mininfreachof}[1]{\lfloor #1 \rfloor}
\newcommand{\condcoarse}[0]{\varphi_{\mathit{coarse}}}
\newcommand{\condfine}[0]{\varphi_{\mathit{fine}}}
\newcommand{\oddof}[1]{\mathit{odd}(#1)}
\newcommand{\algkv}[0]{\textsc{KV}\xspace}
\newcommand{\algfkv}[0]{\textsc{FKV}\xspace}
\newcommand{\algschewe}[0]{\textsc{Schewe}\xspace}
\newcommand{\algschewerao}[0]{\textsc{Schewe}_{\textsc{RedAvgOut}}\xspace}
\newcommand{\algmaxrank}[0]{\textsc{MaxRank}\xspace}
\newcommand{\algbackoff}[0]{\textsc{BackOff}\xspace}
\newcommand{\direct}[0]{\mathit{di}}
\newcommand{\simof}[1]{\mathrel{\preceq_{#1}}}
\newcommand{\simdi}[0]{\mathrel{\simof \direct}}
\newcommand{\defiff}[0]{\mathrel{\stackrel{\mathrm{def}}{\equiv}}}
\newcommand{\lang}[0]{\mathcal{L}}
\newcommand{\langof}[1]{\lang(#1)}
\newcommand{\langautof}[2]{\lang_{#1}(#2)}
\newcommand{\langaut}[1]{\lang_{\aut}(#1)}
\newcommand{\subseteqlang}[0]{\mathrel{\subseteq_{\lang}}}
\newcommand{\numsetof}[1]{[#1]}
\newcommand{\simby}[0]{\preceq}
\newcommand{\dirsimby}[0]{\mathrel{\simby_{\mathit{di}}}}
\newcommand{\limpl}[0]{\Rightarrow}
\newcommand{\rankrestr}[0]{\textsc{RankRestr}\xspace}
\newcommand{\algdelay}[0]{\textsc{Delay}\xspace}
\newcommand{\ranksimred}[0]{\textsc{RankSim}\xspace}
\newcommand{\succrankred}[0]{\textsc{SuccRank}\xspace}
\newcommand{\ignore}[0]{\cdot}
\newcommand{\auxdelta}[0]{\Delta^{\bullet}}
\newcommand{\fpmax}[0]{f'_{\mathit{max}}}
\newcommand{\maxrank}[0]{\textit{max-rank}}
\newcommand{\maxrankof}[1]{\maxrank(#1)}
\newcommand{\purgedi}[0]{\textsc{Purge}_{\mathit{di}}}
\newcommand{\ors}[0]{\mathrel{\preceq_{\mathit{ors}}}}
\newcommand{\orsf}[0]{\mathrel{\preceq^f_{\mathit{ors}}}}
\newcommand{\orsfT}[0]{\mathrel{\preceq^{fT}_{\mathit{ors}}}}
\newcommand{\orsr}[0]{\mathrel{\preceq_R}}
\newcommand{\orsraa}[0]{\mathrel{\preceq_{R}^{\forall\forall}}}
\newcommand{\orsrf}[0]{\mathrel{\preceq_R^f}}
\newcommand{\orsrfT}[0]{\mathrel{\preceq_R^{fT}}}
\newcommand{\bigO}[0]{\mathcal{O}}
\newcommand{\bigOof}[1]{\bigO(#1)}
\newcommand{\StateSize}[0]{\mathit{StateSize}}
\newcommand{\RankMax}[0]{\mathit{RankMax}}
\newcommand{\ranker}[0]{\textsc{Ranker}\xspace}
\newcommand{\rankermaxrank}[0]{\ranker_{\textsc{MaxR}}\xspace}
\newcommand{\rankerrankrestr}[0]{\ranker_{\textsc{RRestr}}\xspace}
\newcommand{\rabit}[0]{\textsc{Rabit}\xspace}
\newcommand{\spot}[0]{\textsc{Spot}\xspace}
\newcommand{\seminator}[0]{\textsc{Seminator}~2\xspace}
\newcommand{\goal}[0]{\textsc{GOAL}\xspace}
\newcommand{\roll}[0]{\textsc{ROLL}\xspace}
\newcommand{\fribourg}[0]{\textsc{Fribourg}\xspace}
\newcommand{\piterman}[0]{\textsc{Piterman}\xspace}
\newcommand{\safra}[0]{\textsc{Safra}\xspace}
\newcommand{\autfilt}[0]{\texttt{autfilt}\xspace}
\newcommand{\ltldstar}[0]{\textsc{LTL2dstar}\xspace}
\newcommand{\goalpic}[0]{{\scriptsize \faSoccerBallO}}
\definecolor{rowgray}{gray}{0.85}
\newcommand{\emphcell}[0]{\cellcolor{rowgray}}
\newcommand{\centercell}[1]{\multicolumn{1}{c}{#1}}
\title{Reducing (to) the Ranks (Technical Report)\\ {\Large Efficient Rank-based B\"{u}chi Automata Complementation}} %TODO Please add
\titlerunning{Reducing (to) the Ranks} %TODO optional, please use if title is longer than one line
\author{Vojt\v{e}ch Havlena}{Faculty of Information Technology, Brno University of Technology, Czech Republic}{ihavlena@fit.vutbr.cz}{https://orcid.org/0000-0003-4375-7954}{}%TODO mandatory, please use full name; only 1 author per \author macro; first two parameters are mandatory, other parameters can be empty. Please provide at least the name of the affiliation and the country. The full address is optional
\author{Ond\v{r}ej Leng\'{a}l}{Faculty of Information Technology, Brno University of Technology, Czech Republic}{lengal@fit.vutbr.cz}{https://orcid.org/0000-0002-3038-5875}{}
\authorrunning{V. Havlena, O. Leng\'{a}l} %TODO mandatory. First: Use abbreviated first/middle names. Second (only in severe cases): Use first author plus 'et al.'
\keywords{
  B\"{u}chi automata,
rank-based complementation,
super-tight runs
  } %TODO mandatory; please add comma-separated list of keywords
\begin{document}

\maketitle

%TODO mandatory: add short abstract of the document
\begin{abstract}
  This paper provides several optimizations of the rank-based approach for
  complementing B\"{u}chi automata.
  We start with Schewe's theoretically optimal construction and develop
  a~set of techniques for pruning its state space that are key to obtaining
  small complement automata in practice.
  In particular, the reductions (except one) have the property that they
  preserve (at least some) so-called \emph{super-tight runs}, which are runs
  whose ranking is \emph{as tight as possible}.
  Our evaluation on a~large benchmark shows that the optimizations indeed
  significantly help the rank-based approach and that, in a~large number of
  cases, the obtained complement is the smallest from those produced by
  state-of-the-art tools for B\"{u}chi complementation.
\end{abstract}

%*******************************************************************************
\vspace{-0.0mm}
\section{Introduction}\label{sec:intro}
\vspace{-0.0mm}
%*******************************************************************************

B\"uchi automata (BA) complementation remains an intensively studied problem
since 1962, when B\"uchi introduced the automata model over infinite words as
a~foundation for a~decision procedure of a~fragment of a~second-order
arithmetic~\cite{buchi1962decision}. Since then, efficient BA complementation
became an important task from both theoretical and practical side. It~is
a~crucial operation in some approaches for termination analysis of
programs~\cite{fogarty2009buchi,heizmann2014termination,ChenHLLTTZ18} as well as
in decision procedures concerning reasoning about programs and computer
systems, such as S1S~\cite{buchi1962decision} or the temporal logics ETL and
QPTL~\cite{SistlaVW85}.

B\"uchi launched a hunt for an optimal and efficient complementation technique with
his doubly exponential complementation approach~\cite{buchi1962decision}.
% based on the infinite Ramsey theorem.
A couple of years later, Safra proposed a~complementation via deterministic
Rabin automata with an $n^{\bigOof{n}}$
\emph{upper bound} of the size of the complement.
Simultaneously with finding an efficient complementation algorithm,
another search for the theoretical \emph{lower bound} was under way.
Michel showed in~\cite{michel1988complementation} that a~lower bound of the size of
a complement BA is $n!$ (approx.\ $(0.36n)^n$). This result was further refined
by Yan to $(0.76n)^n$ in~\cite{yan}.
From the theoretical point of view, it seemed that the
problem was already solved since Safra's construction asymptotically matched
the lower bound.
From the practical point of view, however, a~factor in the exponent
has a~great impact on the size of the complemented automaton
(and, consequently, also affects the performance of real-world applications).
This gap became a topic of many
works~\cite{KupfermanV01,FriedgutKV06,vardi2008automata,kahler2008complementation,yan2006lower}.
The efforts finally led to the~construction of Schewe in~\cite{Schewe09}
producing complement BAs whose sizes match the lower bound modulo
a~$\bigOof{n^2}$ polynomial factor.

Schewe's construction stores in a macrostate partial information about all
runs over some word in an input BA. In order to track the information about all
runs, a macrostate contains a set of states representing a single level in a run
DAG of some word with a number assigned to each state representing its rank. The
number of macrostates (and hence the size of the complement) is
combinatorially related to the maximum rank that occurs in macrostates.

Although the construction of Schewe is worst-case optimal, it may in practice
still generate a~lot of states that are not necessary.
% and negatively affect the
% size of the complemented automaton.
%
% \ifTR
% In our previous work in~\cite{ChenHL19}, we employed direct
% and delay simulation between states of the original automaton to remove states
% from the complement that are not necessary for accepting a~word.
% In the current paper,
% we build upon~\cite{ChenHL19} and develop other optimizations for reducing the
% size of the complement that push the
% rank-based approach by a~significant step further.
% \else
% The first attempt of optimizing real-world performance of the Schewe's
% construction was in~\cite{ChenHL19}.
% In~\cite{ChenHL19}, direct and delay simulations between states of the original
% automaton are employed to remove states from the complement that are not
% necessary for accepting a~word.
% In the current paper, we develop new optimizations for reducing the size of
% the complement that push the rank-based approach by a~significant step further.
% \fi
% \ol{this needs to be changed -- give a bigger overview, talk about super-tight
% runs, decreasing the max rank, etc}
%
In this work, we propose novel optimizations that (among others) reduce this
maximum considered rank.
We build on the novel notion of a~\emph{super-tight run}, i.e., a run
in the complement that uses as small ranks as possible.
The macrostates not occurring in some super-tight run can be safely removed from the
automaton. % without affecting its language.
Further, based on reasoning about super-tight runs, we are able to
reduce the maximum rank within a~macrostate.
In particular, we reduce the maximum
considered ranking using a~reasoning about the deterministic support of an input
automaton or by a relation based on direct simulation implying rank ordering
computed \emph{a priori} from the input automaton.
The developed optimizations give, to the best of our knowledge, \textbf{the
most competitive BA complementation procedure}, as witnessed by our %extensive
experimental evaluation.

% Namely,
% \begin{inparaenum}[(i)]
% 	\item \emph{Delaying transitions:} reducing the number of nondeterministical
% 		guesses of states introducing tight ranks.
% 	\item \emph{Successors constraints:} reducing the maximum considered ranking
% 	in the tight part based on reasoning about the DFA support of an input automaton.
% 	\item \emph{Rank simulation:} removing macrostates with incompatible rankings
% 	based on reasoning about states with an odd ranking.
% 	\item \emph{Max rank construction:} considering only runs with the maximal
% 	ranking.
% \end{inparaenum}

These optimizations require some additional computational cost,
but from the perspective of BA complementation, their cost is still negligible
and, as we show in our experimental evaluation, their effect on the size of the
output is often profound, in many cases by one or more orders of magnitude.
Rank-based complementation with our optimizations is now competitive with other
approaches, in a~large number of cases (21\,\%) obtaining
a~strictly smaller complement than \emph{any other existing tool} and in the majority of
cases (63\,\%) obtaining an automaton at least as small as the smallest
automaton provided by any other tool.

%*******************************************************************************
\vspace{-0.0mm}
\section{Preliminaries}\label{sec:prelims}
\vspace{-0.0mm}
%*******************************************************************************

% %*******************************************************************************
% \vspace{-0.0mm}
% \subsection{B\"{u}chi Automata}\label{sec:label}
% \vspace{-0.0mm}
% %*******************************************************************************

% \subparagraph*{Functions, words, and alphabets.}
We fix a~finite nonempty alphabet~$\Sigma$ and the first infinite ordinal
$\omega = \{0, 1, \ldots\}$.
For $n\in\omega$, by $\numsetof{n}$ we denote the
set $\{ 0, \dots, n \}$.
An (infinite) word~$\word$ is represented as a~function
$\word\colon \omega \to \Sigma$ where the $i$-th symbol is denoted as $\wordof i$.
% \vh{A~finite word~$w$ of length $n+1$ is represented as a~function $w: \numsetof{n}
% \to \Sigma$.
% The finite word of length~$0$ is denoted as~$\epsilon$.}
We~abuse
notation and sometimes also represent~$\word$ as an~infinite sequence $\word =
\wordof 0 \wordof 1 \dots$
% \vh{ and $w$ as a finite sequence $w = w_0\dots w_{n-1}$.}
The suffix $\wordof i \wordof{i+1} \ldots$ of~$\word$ is denoted
by~$\wordof {i:\omega}$.
We~use~$\Sigma^\omega$ to denote the set of all infinite words over~$\Sigma$.
% \vh{ and~$\Sigma^*$ to denote the set of all finite words.
% For $L\subseteq \Sigma^*$ we define $L^* = \{
% u\in\Sigma^*~|~u=w_1\cdots w_n \wedge \forall 1 \leq i \leq n: w_i\in L \}$ and
% $L^\omega = \{ \word\in\Sigma^\omega~|~\alpha=w_1w_2\cdots \wedge \forall i \geq
% 1: w_i\in L \}$ (note that $\{\epsilon\}^\omega = \emptyset$).
% Given $L_1, L_2 \subseteq \Sigma^*$, we use $L_1 L_2$ to denote the set
% $\{w_1 w_2 \mid w_1 \in L_1, w_2 \in L_2\}$.}
Furthermore, for a~total function $f\colon X \to Y$ and a partial function $h\colon X
\partto Y$, we use $\variant f h$ to denote the total function $g\colon  X \to Y$
defined as $g(x) = h(x)$ when $h(x)$ is defined and $g(x) = f(x)$ otherwise.
Moreover, we use $\imgof f$ to denote the \emph{image} of~$f$, i.e., $\imgof f =
\{f(x) \in Y \mid x \in X\}$ and for a set $C\subseteq X$ we use $f_{|C}$ to
denote the \emph{restriction} of $f$ to $C$, i.e., $f_{|C} = f\cap(C\times Y)$.

\vspace{-0mm}
\subparagraph*{B\"{u}chi automata.}
A~(nondeterministic) \emph{B\"{u}chi automaton} (BA) over~$\Sigma$
is a~quadruple $\aut = (Q, \delta, I, F)$ where $Q$ is a~finite set of
\emph{states}, $\delta$ is a~\emph{transition function} $\delta\colon Q \times
\Sigma \to 2^Q$, and $I, F \subseteq Q$ are the sets of \emph{initial} and
\emph{accepting} states respectively.
We sometimes treat~$\delta$ as a~set of transitions $p \ltr a q$, for instance,
we use $p \ltr a q \in \delta$ to denote that $q \in \delta(p, a)$.
Moreover, we extend $\delta$ to sets of states $P
\subseteq Q$ as $\delta(P, a) = \bigcup_{p \in P} \delta(p,a)$.
We use $\delta^{-1}(q, a)$ to denote the set $\{s \in Q \mid s \ltr{a} q \in \delta\}$.
For a set of states $S$ we define \emph{reachability} from $S$ as $\reach_\delta(S) = \mu Z.\,S \cup \bigcup_{a\in\Sigma}\delta(Z, a)$. A~\emph{run}
of~$\aut$ from~$q \in Q$ on an input word $\word$ is an infinite sequence $\rho\colon
\omega \to Q$ that starts in~$q$ and respects~$\delta$, i.e., $\rho_0 = q$ and
$\forall i \geq 0\colon \rho_i \ltr{\wordof i}\rho_{i+1} \in \delta$.
Let $\infof \rho$ denote the states occurring in~$\rho$ infinitely often.
We say that~$\rho$ is \emph{accepting} iff $\infof \rho \cap F \neq \emptyset$.
A~word $\word$ is accepted by~$\aut$ from a~state~$q \in Q$ if there is an
accepting run $\rho$ of $\aut$ from~$q$, i.e., $\rho_0 = q$. The set
$\langautof{\aut} q = \{\word \in \Sigma^\omega \mid \aut \text{ accepts } \word
\text{ from } q\}$ is called the \emph{language} of~$q$ (in~$\aut$). Given a~set
of states~$R \subseteq Q$, we define the language of~$R$ as $\langautof \aut R =
\bigcup_{q \in R} \langautof \aut q$ and the language of~$\aut$ as~$\langof \aut =
\langautof \aut I$. For a~pair of states~$p$ and~$q$ in~$\aut$, we use $p
\subseteqlang q$ to denote $\langaut p \subseteq \langaut q$.
%
% Without loss of generality, in this paper, we assume~$\aut$ to be complete, i.e., for every state~$q$ and
% symbol~$a$, it holds that $\delta(q, a) \neq \emptyset$.
$\aut$~is complete iff for every state~$q$ and
symbol~$a$, it holds that $\delta(q, a) \neq \emptyset$.
% \vh{A~\emph{trace} over a~word~$\word$ is an infinite sequence $\pi = q_0
% \ltr{\wordof 0} q_1 \ltr{\wordof 1} \cdots$ such that $\rho = q_0 q_1 \ldots$ is
% a~run of~$\aut$ over~$\word$ from~$q_0$.
% We say~$\pi$ is \emph{fair} if it contains infinitely many accepting states.
% Moreover, we use $p \transover{w} q$ for $w \in \Sigma^*$ to denote that~$q$ is
% reachable from~$p$ over the word~$w$; if a~path from~$p$ to~$q$ over~$w$ contains an
% accepting state, we can write $p \ftransover{w} q$.}
In this paper, we fix a~BA $\aut = (Q, \delta, I, F)$.

\vspace{-0mm}
\subparagraph*{Simulation.} The \emph{(maximum) direct simulation} on~$\aut$ is the relation
${\dirsimby} \subseteq Q\times Q$ defined as the largest relation s.t.~$p\dirsimby q$ implies
\begin{inparaenum}[(i)]
	\item $p \in F \limpl q \in F$ and
	\item $p \ltr{a} p' \in \delta \limpl \exists q'\in Q\colon q \ltr{a} q'
    \in \delta \wedge p' \dirsimby q'$ for each $a\in\Sigma$.
\end{inparaenum}
Note that~$\dirsimby$ is a~preorder and $\dirsimby\ \subseteq\
\subseteqlang$~\cite{MayrC13}.

\vspace{-0.0mm}
\section{Complementing B\"{u}chi Automata}\label{sec:complement}
\vspace{-0.0mm}
%%%%%%%%%%%%%%%%%%%%%%%%%%%%%%%%%%%%%%%%%%%%%%%%%%%%%%%%%%%%%%%%%%%%%%%%%%%%%%%%
In this section we first describe the basic rank-based complementation algorithm
proposed by Kupferman and Vardi in~\cite{KupfermanV01} and then its optimization
presented by Schewe in~\cite{Schewe09}.
% In order to help the reader build an intuition about the algorithm, we start
% from the simplest rank-based complementation algorithm proposed by Kupferman and
% Vardi in~\cite{KupfermanV01} and then refine it in two steps.
% Readers familiar with the constructions can skim through these parts
% (\crefrange{sec:rundag}{sec:schewe}).
After that, we present some results related to runs with the minimal ranking.
Missing proofs for this and the following section can be found in~\cite{techrep}.

%*******************************************************************************
\vspace{-0.0mm}
\subsection{Run DAGs}\label{sec:rundag}
\vspace{-0.0mm}
%*******************************************************************************
\newcommand{
% \begin{figure}
% \begin{center}
% \input{figs/aut_ex.tikz}
% \end{center}
% \vspace*{-5mm}
% \caption{The BA $\autex$}
% \label{fig:aut_ex}
% \end{figure}
%
\begin{figure}
  \begin{minipage}{0.22\textwidth}
	\begin{subfigure}{\textwidth}
		\centering
  \scalebox{0.9}{
    \begin{tikzpicture}[->,>=stealth',shorten >=0pt,auto,node distance=1.5cm,
                    scale=0.8,transform shape,initial text={}]
  \tikzstyle{every state}=[inner sep=3pt,minimum size=5pt]
  \tikzstyle{empty}=[]
  \tikzstyle{initstate}=[fill=yellow!30]

  \node[state,initstate,accepting] (r) {$r$};
  \node[state,initstate,right of=r] (s) {$s$};
  \node[state,accepting,right of=s] (t) {$t$};

  \node[empty,above of=r,node distance=8mm] (above_r) {};
  \node[empty,above of=s,node distance=8mm] (above_s) {};

  \path (above_r) edge (r);
  \path (above_s) edge (s);

  \path (r) edge[loop below]  node {$a$} (r)
        (r) edge  node {$b$} (s)
        (s) edge[loop below] node {$b$} (s)
        (s) edge  node {$b$} (t)
        (t) edge[loop below] node {$a$} (t);

\end{tikzpicture}
  }
	\caption{\centering}
	\label{fig:example_aut}
	\end{subfigure}\\
	\begin{subfigure}{\textwidth}
		\centering
  \scalebox{0.9}{
    \begin{tikzpicture}[->,>=stealth',shorten >=0pt,auto,node distance=1.2cm,
                    scale=0.8,transform shape,initial text={}]
  \tikzstyle{every state}=[inner sep=3pt,minimum size=5pt,rectangle,rounded corners=1mm]
  \tikzstyle{empty}=[]

  \node[state,accepting]   (r0) {$r,0$};
  \node[state,right of=r0] (s0) {$s,0$};

  \node[state,below of=s0]           (s1) {$s,1$};
  \node[state,accepting,right of=s1] (t1) {$t,1$};

  \node[state,below of=s1]           (s2) {$s,2$};
  \node[state,accepting,right of=s2] (t2) {$t,2$};

  % \node[state,below of=s2]           (s3) {$s,3$};
  % \node[state,accepting,right of=s3] (t3) {$t,3$};

  \node[empty,below of=s2,node distance=4mm] (s3) {$\vdots$};
  \node[empty,right of=s3,node distance=4mm] (t3) {$\ddots$};

  \draw (r0) edge (s1);
  \draw (s0) edge (s1);
  \draw (s0) edge (t1);

  \draw (s1) edge (s2);
  \draw (s1) edge (t2);

  % \draw (s2) edge (s3);
  % \draw (s2) edge (t3);

  \node[empty, node distance=8mm,below left of=r0] (sym1) {$b$};
  \node[empty, below of=sym1] (sym2) {$b$};
  % \node[empty, below of=sym2] (sym3) {$b$};
  \node[empty, below of=sym2,node distance=10mm] (sym3) {$\vdots$};

  \begin{pgfonlayer}{background}
  \node[draw,dashed,rectangle,fill=YellowGreen!80,draw=black!70,rounded corners=8pt,inner sep=3pt,fit=(r0) (r0) (r0) (r0)] (a) {};
  \node[draw,dashed,rectangle,fill=YellowGreen!50,draw=black!70,rounded corners=8pt,inner sep=3pt,fit=(s0) (s0) (s2) (s2)] (b) {};
  \node[draw,dashed,rectangle,fill=YellowGreen!20,draw=black!70,rounded corners=8pt,inner sep=3pt,fit=(t1) (t1) (t2) (t2)] (c) {};
  \end{pgfonlayer}

  \node[empty, above of=r0, node distance=8mm, text=black] (rank2) {\it rank 2};
  \node[empty, above of=s0, node distance=8mm, text=black] (rank1) {\it rank 1};
  \node[empty, above of=t1, node distance=8mm, text=black] (rank0) {\it rank 0};

\end{tikzpicture}
  }
	\caption{\centering}
	\label{fig:example_dag}
	\end{subfigure}
  \end{minipage}
  \begin{subfigure}{0.32\textwidth}
    \centering
    \vspace*{8mm}
    \scalebox{0.9}{
      \begin{tikzpicture}[->,>=stealth',shorten >=0pt,auto,node distance=1.2cm,
                    scale=0.8,transform shape,initial text={}]
  \tikzstyle{every state}=[inner sep=3pt,minimum size=5pt,rectangle,rounded corners=1mm]
  \tikzstyle{empty}=[inner sep=0pt]
  \tikzstyle{initstate}=[fill=yellow!30]

  \node[state, initial,initstate,accepting] (r1) {$\big(\{r{:}4, s{:}4\}, \emptyset\big)$};
  \node[state, below of=r1] (r2) {$\big(\{s{:}4, t{:}4\}, \{s,t\}\big)$};
  \node[state, below of=r2] (r3) {$\big(\{s{:}3, t{:}4\}, \{t\}\big)$};
  \node[state, accepting, below of=r3] (r4) {$\big(\{s{:}3, t{:}2\}, \emptyset\big)$};
  \node[state, below of=r4] (r5) {$\big(\{s{:}3, t{:}2\}, \{t\}\big)$};

  \node[empty,minimum size=30pt,xshift=5.5mm] at(r2.west) (r2inv) {};

  \node[state, white, fill=white!20, right of=r3, node distance=25mm, minimum height=50mm, minimum width=8mm] (sink) {};

  \path (r1) edge  node [right] {$b$} (r2)
        (r2inv) edge[loop left]  node[pos=0.15,below] {$b$} (r2inv)
        (r2) edge  node [right] {$b$} (r3)
        (r3) edge  node [right] {$b$} (r4)
        (r4) edge [bend left]  node [right] {$b$} (r5)
        (r5) edge [bend left]  node [left] {$b$} (r4);

  \begin{pgfonlayer}{background}
    \path[dashed,gray] (r1)  edge [bend left]  node {} (sink)
      (r2) edge [bend left] node {} (sink)
      (r3) edge node {} (sink)
      (r4) edge [bend right]  node {} (sink)
      (r5) edge [bend right]  node {} (sink);
  \end{pgfonlayer}

\end{tikzpicture}
    }
    \vspace*{4mm}
    \caption{\centering }
    \label{fig:kv_example}
  \end{subfigure}
	\begin{subfigure}{0.37\textwidth}
    \vspace*{4mm}
 	   \centering
		 \scalebox{0.9}{
	 	 \begin{tikzpicture}[->,>=stealth',shorten >=0pt,auto,node distance=1.8cm,
                    scale=0.8,transform shape,initial text={}]
  \tikzstyle{every state}=[inner sep=3pt,minimum size=5pt,rectangle,rounded corners=1mm]
  \tikzstyle{empty}=[]
  \tikzstyle{initstate}=[fill=yellow!30]
  \tikzstyle{wobbly}=[decorate, decoration={snake,amplitude=.2mm,segment length=2mm,post length=1mm}]

  \node[state,initial,initstate] (rs) {$\{r,s\}$};
  \node[state, below of=rs] (r) {$\{r\}$};
  \node[state, left of=rs,yshift=10mm] (st) {$\{s,t\}$};
  \node[state, below of=r] (s) {$\{s\}$};
  \node[state, left of=r] (t) {$\{t\}$};
  \node[state, accepting, left of=s] (em) {$\emptyset$};

  \node[state, accepting, right of=rs, node distance=30mm,yshift=10mm] (r1) {$\big(\{s{:}1, t{:}0\}, \emptyset\big)$};
  \node[state, below of=r1,xshift=-3mm] (r2) {$\big(\{s{:}1, t{:}0\}, \{t\}\big)$};
  \node[state, accepting,below of=r2,yshift=-4mm] (r3) {$\big(\{s{:}1\},\emptyset\big)$};

  % \node[state, white, fill=white, below of=r2, xshift=-8mm, node distance=17mm, minimum width=20mm, minimum height=10mm] (sink) {};

  \path (rs) edge  node[right] {$a$} (r)
        (rs) edge  node[above,pos=0.2] {$b$} (st)
        (r) edge  node[right] {$b$} (s)
        (r) edge  node[below,pos=0.2] {$b$} (r3)
        (s) edge  node[xshift=1mm] {$b$} (st)
        % (s) edge[bend right=50]  node {$b$} (r1)
        (s) edge  node[above] {$a$} (em)
        (st) edge  node[left] {$a$} (t)
        (t) edge  node[left] {$b$} (em)
        (r) edge [loop right] node[pos=0.2,above] (lab2) {$a$} (r)
        (t) edge [loop left] node {$a$} (t)
        (em) edge [loop left] node (lab1) {$a,b$} (em)
        (st) edge [loop left] node {$b$} (st)
        (rs) edge   node[pos=0.3] {$b$} (r1)
        (st) edge  node {$b$} (r1)
        (r1) edge [bend left]  node {$b$} (r2)
        (r2) edge [bend left]  node {$b$} (r1)
        (r3) edge  node[right] {$b$} (r2);

  \draw (s) .. controls +(50mm,-10mm) and +(10mm,-20mm) .. node[pos=0.1,above,yshift=-0.5mm] {$b$} (r1.south east);

  \begin{pgfonlayer}{background}
  \node[draw,dashed,rectangle,fill=red!15,draw=black!70,rounded corners=8pt,inner sep=3pt,fit=(r1) (r2) (r3)] (tight) {};
  \node[draw,dashed,rectangle,fill=blue!15,draw=black!70,rounded corners=8pt,inner sep=3pt,fit=(st) (rs) (r) (t) (em) (s) (lab1) (lab2)] (waiting) {};
  \end{pgfonlayer}

  \node[above of=st,node distance=6.5mm,xshift=-8mm] {\emph{waiting}};
  \node[above of=r1,node distance=7mm,xshift=8mm] {\emph{tight}};

\end{tikzpicture}
		 }
	 	 \caption{\centering}
	 	 \label{fig:fkv_example}
	\end{subfigure}
	\caption{
    (\subref{fig:example_aut})~$\autex$.
    (\subref{fig:example_dag})~The run DAG of~$\autex$ over~$b^\omega$.
    (\subref{fig:kv_example})~A~part of $\algkv(\autex)$.
    (\subref{fig:fkv_example})~$\algfkv(\autex)$; we highlight the
      \emph{waiting} and the \emph{tight} parts.}
    \vspace{-2mm}
\end{figure}
}[0]{
% \begin{figure}
% \begin{center}
% \input{figs/aut_ex.tikz}
% \end{center}
% \vspace*{-5mm}
% \caption{The BA $\autex$}
% \label{fig:aut_ex}
% \end{figure}
%
\begin{figure}
  \begin{minipage}{0.22\textwidth}
	\begin{subfigure}{\textwidth}
		\centering
  \scalebox{0.9}{
    \begin{tikzpicture}[->,>=stealth',shorten >=0pt,auto,node distance=1.5cm,
                    scale=0.8,transform shape,initial text={}]
  \tikzstyle{every state}=[inner sep=3pt,minimum size=5pt]
  \tikzstyle{empty}=[]
  \tikzstyle{initstate}=[fill=yellow!30]

  \node[state,initstate,accepting] (r) {$r$};
  \node[state,initstate,right of=r] (s) {$s$};
  \node[state,accepting,right of=s] (t) {$t$};

  \node[empty,above of=r,node distance=8mm] (above_r) {};
  \node[empty,above of=s,node distance=8mm] (above_s) {};

  \path (above_r) edge (r);
  \path (above_s) edge (s);

  \path (r) edge[loop below]  node {$a$} (r)
        (r) edge  node {$b$} (s)
        (s) edge[loop below] node {$b$} (s)
        (s) edge  node {$b$} (t)
        (t) edge[loop below] node {$a$} (t);

\end{tikzpicture}
  }
	\caption{\centering}
	\label{fig:example_aut}
	\end{subfigure}\\
	\begin{subfigure}{\textwidth}
		\centering
  \scalebox{0.9}{
    \begin{tikzpicture}[->,>=stealth',shorten >=0pt,auto,node distance=1.2cm,
                    scale=0.8,transform shape,initial text={}]
  \tikzstyle{every state}=[inner sep=3pt,minimum size=5pt,rectangle,rounded corners=1mm]
  \tikzstyle{empty}=[]

  \node[state,accepting]   (r0) {$r,0$};
  \node[state,right of=r0] (s0) {$s,0$};

  \node[state,below of=s0]           (s1) {$s,1$};
  \node[state,accepting,right of=s1] (t1) {$t,1$};

  \node[state,below of=s1]           (s2) {$s,2$};
  \node[state,accepting,right of=s2] (t2) {$t,2$};

  % \node[state,below of=s2]           (s3) {$s,3$};
  % \node[state,accepting,right of=s3] (t3) {$t,3$};

  \node[empty,below of=s2,node distance=4mm] (s3) {$\vdots$};
  \node[empty,right of=s3,node distance=4mm] (t3) {$\ddots$};

  \draw (r0) edge (s1);
  \draw (s0) edge (s1);
  \draw (s0) edge (t1);

  \draw (s1) edge (s2);
  \draw (s1) edge (t2);

  % \draw (s2) edge (s3);
  % \draw (s2) edge (t3);

  \node[empty, node distance=8mm,below left of=r0] (sym1) {$b$};
  \node[empty, below of=sym1] (sym2) {$b$};
  % \node[empty, below of=sym2] (sym3) {$b$};
  \node[empty, below of=sym2,node distance=10mm] (sym3) {$\vdots$};

  \begin{pgfonlayer}{background}
  \node[draw,dashed,rectangle,fill=YellowGreen!80,draw=black!70,rounded corners=8pt,inner sep=3pt,fit=(r0) (r0) (r0) (r0)] (a) {};
  \node[draw,dashed,rectangle,fill=YellowGreen!50,draw=black!70,rounded corners=8pt,inner sep=3pt,fit=(s0) (s0) (s2) (s2)] (b) {};
  \node[draw,dashed,rectangle,fill=YellowGreen!20,draw=black!70,rounded corners=8pt,inner sep=3pt,fit=(t1) (t1) (t2) (t2)] (c) {};
  \end{pgfonlayer}

  \node[empty, above of=r0, node distance=8mm, text=black] (rank2) {\it rank 2};
  \node[empty, above of=s0, node distance=8mm, text=black] (rank1) {\it rank 1};
  \node[empty, above of=t1, node distance=8mm, text=black] (rank0) {\it rank 0};

\end{tikzpicture}
  }
	\caption{\centering}
	\label{fig:example_dag}
	\end{subfigure}
  \end{minipage}
  \begin{subfigure}{0.32\textwidth}
    \centering
    \vspace*{8mm}
    \scalebox{0.9}{
      \begin{tikzpicture}[->,>=stealth',shorten >=0pt,auto,node distance=1.2cm,
                    scale=0.8,transform shape,initial text={}]
  \tikzstyle{every state}=[inner sep=3pt,minimum size=5pt,rectangle,rounded corners=1mm]
  \tikzstyle{empty}=[inner sep=0pt]
  \tikzstyle{initstate}=[fill=yellow!30]

  \node[state, initial,initstate,accepting] (r1) {$\big(\{r{:}4, s{:}4\}, \emptyset\big)$};
  \node[state, below of=r1] (r2) {$\big(\{s{:}4, t{:}4\}, \{s,t\}\big)$};
  \node[state, below of=r2] (r3) {$\big(\{s{:}3, t{:}4\}, \{t\}\big)$};
  \node[state, accepting, below of=r3] (r4) {$\big(\{s{:}3, t{:}2\}, \emptyset\big)$};
  \node[state, below of=r4] (r5) {$\big(\{s{:}3, t{:}2\}, \{t\}\big)$};

  \node[empty,minimum size=30pt,xshift=5.5mm] at(r2.west) (r2inv) {};

  \node[state, white, fill=white!20, right of=r3, node distance=25mm, minimum height=50mm, minimum width=8mm] (sink) {};

  \path (r1) edge  node [right] {$b$} (r2)
        (r2inv) edge[loop left]  node[pos=0.15,below] {$b$} (r2inv)
        (r2) edge  node [right] {$b$} (r3)
        (r3) edge  node [right] {$b$} (r4)
        (r4) edge [bend left]  node [right] {$b$} (r5)
        (r5) edge [bend left]  node [left] {$b$} (r4);

  \begin{pgfonlayer}{background}
    \path[dashed,gray] (r1)  edge [bend left]  node {} (sink)
      (r2) edge [bend left] node {} (sink)
      (r3) edge node {} (sink)
      (r4) edge [bend right]  node {} (sink)
      (r5) edge [bend right]  node {} (sink);
  \end{pgfonlayer}

\end{tikzpicture}
    }
    \vspace*{4mm}
    \caption{\centering }
    \label{fig:kv_example}
  \end{subfigure}
	\begin{subfigure}{0.37\textwidth}
    \vspace*{4mm}
 	   \centering
		 \scalebox{0.9}{
	 	 \begin{tikzpicture}[->,>=stealth',shorten >=0pt,auto,node distance=1.8cm,
                    scale=0.8,transform shape,initial text={}]
  \tikzstyle{every state}=[inner sep=3pt,minimum size=5pt,rectangle,rounded corners=1mm]
  \tikzstyle{empty}=[]
  \tikzstyle{initstate}=[fill=yellow!30]
  \tikzstyle{wobbly}=[decorate, decoration={snake,amplitude=.2mm,segment length=2mm,post length=1mm}]

  \node[state,initial,initstate] (rs) {$\{r,s\}$};
  \node[state, below of=rs] (r) {$\{r\}$};
  \node[state, left of=rs,yshift=10mm] (st) {$\{s,t\}$};
  \node[state, below of=r] (s) {$\{s\}$};
  \node[state, left of=r] (t) {$\{t\}$};
  \node[state, accepting, left of=s] (em) {$\emptyset$};

  \node[state, accepting, right of=rs, node distance=30mm,yshift=10mm] (r1) {$\big(\{s{:}1, t{:}0\}, \emptyset\big)$};
  \node[state, below of=r1,xshift=-3mm] (r2) {$\big(\{s{:}1, t{:}0\}, \{t\}\big)$};
  \node[state, accepting,below of=r2,yshift=-4mm] (r3) {$\big(\{s{:}1\},\emptyset\big)$};

  % \node[state, white, fill=white, below of=r2, xshift=-8mm, node distance=17mm, minimum width=20mm, minimum height=10mm] (sink) {};

  \path (rs) edge  node[right] {$a$} (r)
        (rs) edge  node[above,pos=0.2] {$b$} (st)
        (r) edge  node[right] {$b$} (s)
        (r) edge  node[below,pos=0.2] {$b$} (r3)
        (s) edge  node[xshift=1mm] {$b$} (st)
        % (s) edge[bend right=50]  node {$b$} (r1)
        (s) edge  node[above] {$a$} (em)
        (st) edge  node[left] {$a$} (t)
        (t) edge  node[left] {$b$} (em)
        (r) edge [loop right] node[pos=0.2,above] (lab2) {$a$} (r)
        (t) edge [loop left] node {$a$} (t)
        (em) edge [loop left] node (lab1) {$a,b$} (em)
        (st) edge [loop left] node {$b$} (st)
        (rs) edge   node[pos=0.3] {$b$} (r1)
        (st) edge  node {$b$} (r1)
        (r1) edge [bend left]  node {$b$} (r2)
        (r2) edge [bend left]  node {$b$} (r1)
        (r3) edge  node[right] {$b$} (r2);

  \draw (s) .. controls +(50mm,-10mm) and +(10mm,-20mm) .. node[pos=0.1,above,yshift=-0.5mm] {$b$} (r1.south east);

  \begin{pgfonlayer}{background}
  \node[draw,dashed,rectangle,fill=red!15,draw=black!70,rounded corners=8pt,inner sep=3pt,fit=(r1) (r2) (r3)] (tight) {};
  \node[draw,dashed,rectangle,fill=blue!15,draw=black!70,rounded corners=8pt,inner sep=3pt,fit=(st) (rs) (r) (t) (em) (s) (lab1) (lab2)] (waiting) {};
  \end{pgfonlayer}

  \node[above of=st,node distance=6.5mm,xshift=-8mm] {\emph{waiting}};
  \node[above of=r1,node distance=7mm,xshift=8mm] {\emph{tight}};

\end{tikzpicture}
		 }
	 	 \caption{\centering}
	 	 \label{fig:fkv_example}
	\end{subfigure}
	\caption{
    (\subref{fig:example_aut})~$\autex$.
    (\subref{fig:example_dag})~The run DAG of~$\autex$ over~$b^\omega$.
    (\subref{fig:kv_example})~A~part of $\algkv(\autex)$.
    (\subref{fig:fkv_example})~$\algfkv(\autex)$; we highlight the
      \emph{waiting} and the \emph{tight} parts.}
    \vspace{-2mm}
\end{figure}
}

\newcommand{
\begin{figure}[t]
	% \begin{subfigure}{0.37\textwidth}
  %   \vspace*{4mm}
 	   \centering
		 \scalebox{0.9}{
	 	 \begin{tikzpicture}[->,>=stealth',shorten >=0pt,auto,node distance=1.8cm,
                    scale=0.8,transform shape,initial text={}]
  \tikzstyle{every state}=[inner sep=3pt,minimum size=5pt,rectangle,rounded corners=1mm]
  \tikzstyle{empty}=[]
  \tikzstyle{initstate}=[fill=yellow!30]
  \tikzstyle{wobbly}=[decorate, decoration={snake,amplitude=.2mm,segment length=2mm,post length=1mm}]

  \node[state,initial,initstate] (rs) {$\{r,s\}$};
  \node[state, below of=rs] (r) {$\{r\}$};
  \node[state, left of=rs,yshift=10mm] (st) {$\{s,t\}$};
  \node[state, below of=r] (s) {$\{s\}$};
  \node[state, left of=r] (t) {$\{t\}$};
  \node[state, accepting, left of=s] (em) {$\emptyset$};

  \node[state, accepting, right of=rs, node distance=30mm,yshift=10mm] (r1) {$\big(\{s{:}1, t{:}0\}, \emptyset,0\big)$};
  \node[state, below of=r1,xshift=-3mm] (r2) {$\big(\{s{:}1, t{:}0\}, \{t\},0\big)$};
  \node[state, fill=gray!40,accepting,below of=r2,yshift=-4mm] (r3) {$\big(\{s{:}1\},\emptyset,0\big)$};

  % \node[state, white, fill=white, below of=r2, xshift=-8mm, node distance=17mm, minimum width=20mm, minimum height=10mm] (sink) {};

  \path (rs) edge  node[right] {$a$} (r)
        (rs) edge  node[above,pos=0.2] {$b$} (st)
        (r) edge  node[right] {$b$} (s)
        (r) edge[wobbly]  node[below,pos=0.2] {$b$} (r3)
        (s) edge  node[xshift=1mm] {$b$} (st)
        % (s) edge[bend right=50]  node {$b$} (r1)
        (s) edge  node[above] {$a$} (em)
        (st) edge  node[left] {$a$} (t)
        (t) edge  node[left] {$b$} (em)
        (r) edge [loop right] node[pos=0.2,above] (lab2) {$a$} (r)
        (t) edge [loop left] node {$a$} (t)
        (em) edge [loop left] node (lab1) {$a,b$} (em)
        (st) edge [loop left] node {$b$} (st)
        (rs) edge[wobbly]  node[pos=0.3] {$b$} (r1)
        (st) edge  node {$b$} (r1)
        (r1) edge [bend left]  node {$b$} (r2)
        (r2) edge [bend left]  node {$b$} (r1)
        (r3) edge [wobbly]  node[right] {$b$} (r2);

  \draw[wobbly] (s) .. controls +(50mm,-10mm) and +(10mm,-20mm) .. node[pos=0.1,above,yshift=-0.5mm] {$b$} (r1.south east);

  \begin{pgfonlayer}{background}
  \node[draw,dashed,rectangle,fill=red!15,draw=black!70,rounded corners=8pt,inner sep=3pt,fit=(r1) (r2) (r3)] (tight) {};
  \node[draw,dashed,rectangle,fill=blue!15,draw=black!70,rounded corners=8pt,inner sep=3pt,fit=(st) (rs) (r) (t) (em) (s) (lab1) (lab2)] (waiting) {};
  \end{pgfonlayer}

  \node[above of=st,node distance=6.5mm,xshift=-8mm] {\emph{waiting}};
  \node[above of=r1,node distance=7mm,xshift=8mm] {\emph{tight}};

\end{tikzpicture}
		 }
	%  	 \label{fig:schewe_example}
	% \end{subfigure}
\caption{
    $\algschewe(\autex)$.
      The optimization \algdelay (\cref{sec:delay}) will remove the 4~wobbly transitions and macrostate $(\{s{:}1\},\emptyset,0)$.}
\label{fig:schewe_example}
\end{figure}
}[0]{
\begin{figure}[t]
	% \begin{subfigure}{0.37\textwidth}
  %   \vspace*{4mm}
 	   \centering
		 \scalebox{0.9}{
	 	 \begin{tikzpicture}[->,>=stealth',shorten >=0pt,auto,node distance=1.8cm,
                    scale=0.8,transform shape,initial text={}]
  \tikzstyle{every state}=[inner sep=3pt,minimum size=5pt,rectangle,rounded corners=1mm]
  \tikzstyle{empty}=[]
  \tikzstyle{initstate}=[fill=yellow!30]
  \tikzstyle{wobbly}=[decorate, decoration={snake,amplitude=.2mm,segment length=2mm,post length=1mm}]

  \node[state,initial,initstate] (rs) {$\{r,s\}$};
  \node[state, below of=rs] (r) {$\{r\}$};
  \node[state, left of=rs,yshift=10mm] (st) {$\{s,t\}$};
  \node[state, below of=r] (s) {$\{s\}$};
  \node[state, left of=r] (t) {$\{t\}$};
  \node[state, accepting, left of=s] (em) {$\emptyset$};

  \node[state, accepting, right of=rs, node distance=30mm,yshift=10mm] (r1) {$\big(\{s{:}1, t{:}0\}, \emptyset,0\big)$};
  \node[state, below of=r1,xshift=-3mm] (r2) {$\big(\{s{:}1, t{:}0\}, \{t\},0\big)$};
  \node[state, fill=gray!40,accepting,below of=r2,yshift=-4mm] (r3) {$\big(\{s{:}1\},\emptyset,0\big)$};

  % \node[state, white, fill=white, below of=r2, xshift=-8mm, node distance=17mm, minimum width=20mm, minimum height=10mm] (sink) {};

  \path (rs) edge  node[right] {$a$} (r)
        (rs) edge  node[above,pos=0.2] {$b$} (st)
        (r) edge  node[right] {$b$} (s)
        (r) edge[wobbly]  node[below,pos=0.2] {$b$} (r3)
        (s) edge  node[xshift=1mm] {$b$} (st)
        % (s) edge[bend right=50]  node {$b$} (r1)
        (s) edge  node[above] {$a$} (em)
        (st) edge  node[left] {$a$} (t)
        (t) edge  node[left] {$b$} (em)
        (r) edge [loop right] node[pos=0.2,above] (lab2) {$a$} (r)
        (t) edge [loop left] node {$a$} (t)
        (em) edge [loop left] node (lab1) {$a,b$} (em)
        (st) edge [loop left] node {$b$} (st)
        (rs) edge[wobbly]  node[pos=0.3] {$b$} (r1)
        (st) edge  node {$b$} (r1)
        (r1) edge [bend left]  node {$b$} (r2)
        (r2) edge [bend left]  node {$b$} (r1)
        (r3) edge [wobbly]  node[right] {$b$} (r2);

  \draw[wobbly] (s) .. controls +(50mm,-10mm) and +(10mm,-20mm) .. node[pos=0.1,above,yshift=-0.5mm] {$b$} (r1.south east);

  \begin{pgfonlayer}{background}
  \node[draw,dashed,rectangle,fill=red!15,draw=black!70,rounded corners=8pt,inner sep=3pt,fit=(r1) (r2) (r3)] (tight) {};
  \node[draw,dashed,rectangle,fill=blue!15,draw=black!70,rounded corners=8pt,inner sep=3pt,fit=(st) (rs) (r) (t) (em) (s) (lab1) (lab2)] (waiting) {};
  \end{pgfonlayer}

  \node[above of=st,node distance=6.5mm,xshift=-8mm] {\emph{waiting}};
  \node[above of=r1,node distance=7mm,xshift=8mm] {\emph{tight}};

\end{tikzpicture}
		 }
	%  	 \label{fig:schewe_example}
	% \end{subfigure}
\caption{
    $\algschewe(\autex)$.
      The optimization \algdelay (\cref{sec:delay}) will remove the 4~wobbly transitions and macrostate $(\{s{:}1\},\emptyset,0)$.}
\label{fig:schewe_example}
\end{figure}
}

% \newcommand{\figexampledag}[0]{
% \begin{wrapfigure}[9]{r}{3.8cm}
% \vspace*{-37mm}
% \hspace*{-1mm}
% \begin{minipage}{3.8cm}
% \input{figs/run_dag_ex.tikz}
% \end{minipage}
% \caption{The run DAG of~$\autex$ over~$b^\omega$}
% \label{fig:example_dag}
% \end{wrapfigure}
% }
%
% \figexampledag

% \begin{figure}[t]
%   \begin{subfigure}[b]{0.49\linewidth}
%   % \includegraphics[width=0.49\linewidth,keepaspectratio]{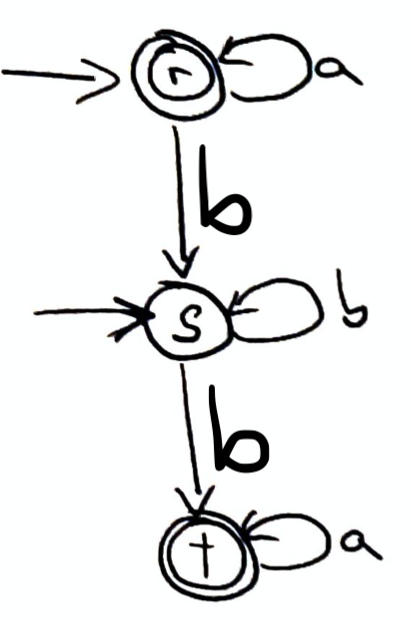}
%   \input{figs/aut_ex.tikz}
%   % \caption{An example of a~B\"{u}chi automaton~$\autex$ with the language $a^\omega + (a^*
%   %   bb^+ + b^+)a^\omega$}
%   \caption{The BA $\autex$}
%   \label{label}
%   \end{subfigure}
%   \begin{subfigure}[b]{0.49\linewidth}
%   % \includegraphics[width=0.49\linewidth,keepaspectratio]{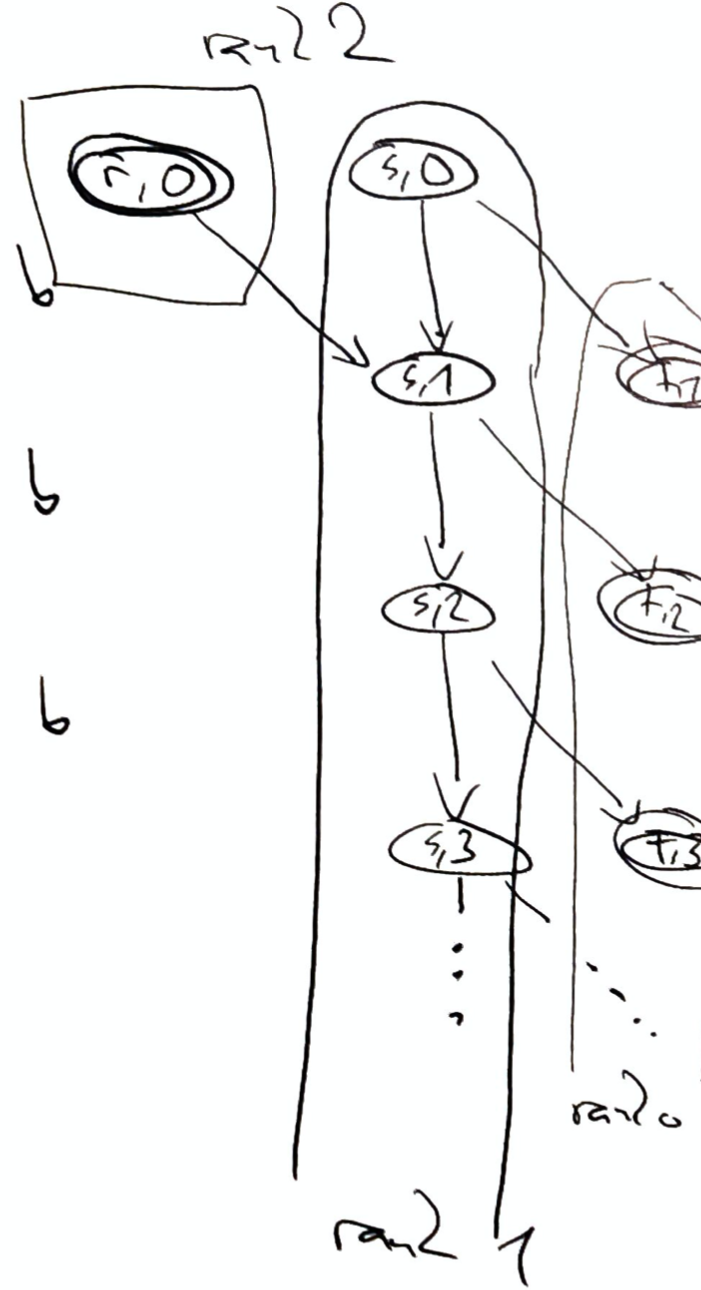}
%   \input{figs/run_dag_ex.tikz}
%   \caption{The run DAG of~$\autex$ over~$b^\omega$}
%   \end{subfigure}
% \caption{A B\"{u}chi automaton and its run DAG}
% \label{fig:example_dag}
% \end{figure}

In this section, we recall the terminology from~\cite{Schewe09} (which is
a~minor modification of the terminology from~\cite{KupfermanV01}), which we use
heavily in the paper. We fix the definition of the \emph{run DAG} of~$\aut$ over
a~word~$\word$ to be a~DAG (directed acyclic graph) $\dagw = (V,E)$ of
vertices~$V$ and edges~$E$ where
\vspace{-0mm}
\begin{itemize}
  \setlength{\itemsep}{0mm}
  \item  $V \subseteq Q \times \omega$ s.t. $(q, i) \in V$ iff there is
    a~run~$\rho$ of $\aut$ from $I$ over $\word$ with $\rho_i = q$,
  \item  $E \subseteq V \times V$ s.t.~$((q, i), (q',i')) \in E$ iff $i' = i+1$
    and $q' \in \delta(q, \wordof i)$.
\end{itemize}
\vspace{-0mm}

\noindent
Given $\dagw$ as above, we will write $(p, i) \in \dagw$ to denote that $(p, i) \in V$.
We call $(p,i)$ \emph{accepting} if $p$ is an accepting state.
$\dagw$ is \emph{rejecting} if it contains no path with infinitely many
accepting vertices.
A~vertex~$v \in \dagw$ is \emph{finite} if the set of vertices reachable
from $v$ is finite, \emph{infinite} if it is not finite, and
\emph{endangered} if $v$ cannot reach an accepting vertex.

% \begin{figure}
% \begin{center}
% \input{figs/aut_ex.tikz}
% \end{center}
% \vspace*{-5mm}
% \caption{The BA $\autex$}
% \label{fig:aut_ex}
% \end{figure}
%
\begin{figure}
  \begin{minipage}{0.22\textwidth}
	\begin{subfigure}{\textwidth}
		\centering
  \scalebox{0.9}{
    \begin{tikzpicture}[->,>=stealth',shorten >=0pt,auto,node distance=1.5cm,
                    scale=0.8,transform shape,initial text={}]
  \tikzstyle{every state}=[inner sep=3pt,minimum size=5pt]
  \tikzstyle{empty}=[]
  \tikzstyle{initstate}=[fill=yellow!30]

  \node[state,initstate,accepting] (r) {$r$};
  \node[state,initstate,right of=r] (s) {$s$};
  \node[state,accepting,right of=s] (t) {$t$};

  \node[empty,above of=r,node distance=8mm] (above_r) {};
  \node[empty,above of=s,node distance=8mm] (above_s) {};

  \path (above_r) edge (r);
  \path (above_s) edge (s);

  \path (r) edge[loop below]  node {$a$} (r)
        (r) edge  node {$b$} (s)
        (s) edge[loop below] node {$b$} (s)
        (s) edge  node {$b$} (t)
        (t) edge[loop below] node {$a$} (t);

\end{tikzpicture}
  }
	\caption{\centering}
	\label{fig:example_aut}
	\end{subfigure}\\
	\begin{subfigure}{\textwidth}
		\centering
  \scalebox{0.9}{
    \begin{tikzpicture}[->,>=stealth',shorten >=0pt,auto,node distance=1.2cm,
                    scale=0.8,transform shape,initial text={}]
  \tikzstyle{every state}=[inner sep=3pt,minimum size=5pt,rectangle,rounded corners=1mm]
  \tikzstyle{empty}=[]

  \node[state,accepting]   (r0) {$r,0$};
  \node[state,right of=r0] (s0) {$s,0$};

  \node[state,below of=s0]           (s1) {$s,1$};
  \node[state,accepting,right of=s1] (t1) {$t,1$};

  \node[state,below of=s1]           (s2) {$s,2$};
  \node[state,accepting,right of=s2] (t2) {$t,2$};

  % \node[state,below of=s2]           (s3) {$s,3$};
  % \node[state,accepting,right of=s3] (t3) {$t,3$};

  \node[empty,below of=s2,node distance=4mm] (s3) {$\vdots$};
  \node[empty,right of=s3,node distance=4mm] (t3) {$\ddots$};

  \draw (r0) edge (s1);
  \draw (s0) edge (s1);
  \draw (s0) edge (t1);

  \draw (s1) edge (s2);
  \draw (s1) edge (t2);

  % \draw (s2) edge (s3);
  % \draw (s2) edge (t3);

  \node[empty, node distance=8mm,below left of=r0] (sym1) {$b$};
  \node[empty, below of=sym1] (sym2) {$b$};
  % \node[empty, below of=sym2] (sym3) {$b$};
  \node[empty, below of=sym2,node distance=10mm] (sym3) {$\vdots$};

  \begin{pgfonlayer}{background}
  \node[draw,dashed,rectangle,fill=YellowGreen!80,draw=black!70,rounded corners=8pt,inner sep=3pt,fit=(r0) (r0) (r0) (r0)] (a) {};
  \node[draw,dashed,rectangle,fill=YellowGreen!50,draw=black!70,rounded corners=8pt,inner sep=3pt,fit=(s0) (s0) (s2) (s2)] (b) {};
  \node[draw,dashed,rectangle,fill=YellowGreen!20,draw=black!70,rounded corners=8pt,inner sep=3pt,fit=(t1) (t1) (t2) (t2)] (c) {};
  \end{pgfonlayer}

  \node[empty, above of=r0, node distance=8mm, text=black] (rank2) {\it rank 2};
  \node[empty, above of=s0, node distance=8mm, text=black] (rank1) {\it rank 1};
  \node[empty, above of=t1, node distance=8mm, text=black] (rank0) {\it rank 0};

\end{tikzpicture}
  }
	\caption{\centering}
	\label{fig:example_dag}
	\end{subfigure}
  \end{minipage}
  \begin{subfigure}{0.32\textwidth}
    \centering
    \vspace*{8mm}
    \scalebox{0.9}{
      \begin{tikzpicture}[->,>=stealth',shorten >=0pt,auto,node distance=1.2cm,
                    scale=0.8,transform shape,initial text={}]
  \tikzstyle{every state}=[inner sep=3pt,minimum size=5pt,rectangle,rounded corners=1mm]
  \tikzstyle{empty}=[inner sep=0pt]
  \tikzstyle{initstate}=[fill=yellow!30]

  \node[state, initial,initstate,accepting] (r1) {$\big(\{r{:}4, s{:}4\}, \emptyset\big)$};
  \node[state, below of=r1] (r2) {$\big(\{s{:}4, t{:}4\}, \{s,t\}\big)$};
  \node[state, below of=r2] (r3) {$\big(\{s{:}3, t{:}4\}, \{t\}\big)$};
  \node[state, accepting, below of=r3] (r4) {$\big(\{s{:}3, t{:}2\}, \emptyset\big)$};
  \node[state, below of=r4] (r5) {$\big(\{s{:}3, t{:}2\}, \{t\}\big)$};

  \node[empty,minimum size=30pt,xshift=5.5mm] at(r2.west) (r2inv) {};

  \node[state, white, fill=white!20, right of=r3, node distance=25mm, minimum height=50mm, minimum width=8mm] (sink) {};

  \path (r1) edge  node [right] {$b$} (r2)
        (r2inv) edge[loop left]  node[pos=0.15,below] {$b$} (r2inv)
        (r2) edge  node [right] {$b$} (r3)
        (r3) edge  node [right] {$b$} (r4)
        (r4) edge [bend left]  node [right] {$b$} (r5)
        (r5) edge [bend left]  node [left] {$b$} (r4);

  \begin{pgfonlayer}{background}
    \path[dashed,gray] (r1)  edge [bend left]  node {} (sink)
      (r2) edge [bend left] node {} (sink)
      (r3) edge node {} (sink)
      (r4) edge [bend right]  node {} (sink)
      (r5) edge [bend right]  node {} (sink);
  \end{pgfonlayer}

\end{tikzpicture}
    }
    \vspace*{4mm}
    \caption{\centering }
    \label{fig:kv_example}
  \end{subfigure}
	\begin{subfigure}{0.37\textwidth}
    \vspace*{4mm}
 	   \centering
		 \scalebox{0.9}{
	 	 \begin{tikzpicture}[->,>=stealth',shorten >=0pt,auto,node distance=1.8cm,
                    scale=0.8,transform shape,initial text={}]
  \tikzstyle{every state}=[inner sep=3pt,minimum size=5pt,rectangle,rounded corners=1mm]
  \tikzstyle{empty}=[]
  \tikzstyle{initstate}=[fill=yellow!30]
  \tikzstyle{wobbly}=[decorate, decoration={snake,amplitude=.2mm,segment length=2mm,post length=1mm}]

  \node[state,initial,initstate] (rs) {$\{r,s\}$};
  \node[state, below of=rs] (r) {$\{r\}$};
  \node[state, left of=rs,yshift=10mm] (st) {$\{s,t\}$};
  \node[state, below of=r] (s) {$\{s\}$};
  \node[state, left of=r] (t) {$\{t\}$};
  \node[state, accepting, left of=s] (em) {$\emptyset$};

  \node[state, accepting, right of=rs, node distance=30mm,yshift=10mm] (r1) {$\big(\{s{:}1, t{:}0\}, \emptyset\big)$};
  \node[state, below of=r1,xshift=-3mm] (r2) {$\big(\{s{:}1, t{:}0\}, \{t\}\big)$};
  \node[state, accepting,below of=r2,yshift=-4mm] (r3) {$\big(\{s{:}1\},\emptyset\big)$};

  % \node[state, white, fill=white, below of=r2, xshift=-8mm, node distance=17mm, minimum width=20mm, minimum height=10mm] (sink) {};

  \path (rs) edge  node[right] {$a$} (r)
        (rs) edge  node[above,pos=0.2] {$b$} (st)
        (r) edge  node[right] {$b$} (s)
        (r) edge  node[below,pos=0.2] {$b$} (r3)
        (s) edge  node[xshift=1mm] {$b$} (st)
        % (s) edge[bend right=50]  node {$b$} (r1)
        (s) edge  node[above] {$a$} (em)
        (st) edge  node[left] {$a$} (t)
        (t) edge  node[left] {$b$} (em)
        (r) edge [loop right] node[pos=0.2,above] (lab2) {$a$} (r)
        (t) edge [loop left] node {$a$} (t)
        (em) edge [loop left] node (lab1) {$a,b$} (em)
        (st) edge [loop left] node {$b$} (st)
        (rs) edge   node[pos=0.3] {$b$} (r1)
        (st) edge  node {$b$} (r1)
        (r1) edge [bend left]  node {$b$} (r2)
        (r2) edge [bend left]  node {$b$} (r1)
        (r3) edge  node[right] {$b$} (r2);

  \draw (s) .. controls +(50mm,-10mm) and +(10mm,-20mm) .. node[pos=0.1,above,yshift=-0.5mm] {$b$} (r1.south east);

  \begin{pgfonlayer}{background}
  \node[draw,dashed,rectangle,fill=red!15,draw=black!70,rounded corners=8pt,inner sep=3pt,fit=(r1) (r2) (r3)] (tight) {};
  \node[draw,dashed,rectangle,fill=blue!15,draw=black!70,rounded corners=8pt,inner sep=3pt,fit=(st) (rs) (r) (t) (em) (s) (lab1) (lab2)] (waiting) {};
  \end{pgfonlayer}

  \node[above of=st,node distance=6.5mm,xshift=-8mm] {\emph{waiting}};
  \node[above of=r1,node distance=7mm,xshift=8mm] {\emph{tight}};

\end{tikzpicture}
		 }
	 	 \caption{\centering}
	 	 \label{fig:fkv_example}
	\end{subfigure}
	\caption{
    (\subref{fig:example_aut})~$\autex$.
    (\subref{fig:example_dag})~The run DAG of~$\autex$ over~$b^\omega$.
    (\subref{fig:kv_example})~A~part of $\algkv(\autex)$.
    (\subref{fig:fkv_example})~$\algfkv(\autex)$; we highlight the
      \emph{waiting} and the \emph{tight} parts.}
    \vspace{-2mm}
\end{figure}

We assign ranks to vertices of run DAGs as follows:
Let $\dagwiof 0 = \dagw$ and~$j = 0$.
Repeat the following steps until the~fixpoint or for at most $2n + 1$ steps,
where $n = |Q|$.
% \figexampledag
\vspace{-0mm}
\begin{itemize}
  \setlength{\itemsep}{0mm}
  \item  Set $\rankwof{v} := j$ for all finite vertices $v$ of~$\dagwiof j$
    and let $\dagwiof{j+1}$ be $\dagwiof{j}$ minus the vertices with the
    rank~$j$.
  \item  Set $\rankwof{v} := j+1$ for all endangered vertices $v$
    of~$\dagwiof {j+1}$ and let $\dagwiof{j+2}$ be $\dagwiof{j+1}$ minus the
    vertices with the rank~$j+1$.
  \item  Set $j := j + 2$.
\end{itemize}
\vspace{-0mm}

\noindent
For all vertices~$v$ that have not been assigned a~rank yet, we assign
$\rankwof{v} := \omega$.
% (Note that since~$\aut$ is complete, then $\dagwiof 1 = \dagwiof 0$.)
See \cref{fig:example_aut} for an example BA~$\autex$ and
\cref{fig:example_dag} for the~run DAG of $\autex$ over~$b^\omega$.

%*******************************************************************************
\vspace{-0.0mm}
\subsection{Basic Rank-Based Complementation}
\vspace{-0.0mm}
%*******************************************************************************

The intuition in rank-based complementation algorithms is that states in
the complemented automaton~$\cut$ \emph{track} all runs of the original
automaton~$\aut$ on the given word and the possible \emph{ranks} of each of the
runs.
Loosely speaking, an accepting run of a~complement automaton~$\cut$ on
a~word~$\word \notin \langof \aut$
represents the run DAG of~$\aut$ over~$\word$ (in the complement, each state in
a~\emph{macrostate} is assigned a~\emph{rank})\footnote{%
This is not entirely true; there may be more accepting runs of~$\cut$
over~$\word$, with ranks assigned to states of~$\aut$ that are higher than the
ranks in the run DAG.
There will, however, be a~\emph{minimum} run of~$\cut$ that matches the run DAG
(in the terminology of \cref{sec:super-tight-runs}, such a~run corresponds to
a~\emph{super-tight run}).
}.%

The complementation procedure works with the notion of level rankings of states of~$\aut$, originally proposed
in~\cite{KupfermanV01,FriedgutKV06}.
For $n = |Q|$, a~\emph{(level) ranking} is a~function $f\colon Q \to \numsetof{2n}$ such that
$\{f(q_f) \mid q_f \in F\} \subseteq \{0, 2, \ldots, 2n\}$, i.e., $f$~assigns even
ranks to accepting states of~$\aut$.
We use $\cR$ to denote the set of all rankings and $\oddof f$ to denote the set
of states given an~odd ranking by~$f$, i.e.~$\oddof f = \{q \in Q \mid f(q) \text{ is odd}\}$.
For a~ranking~$f$, the \emph{rank} of~$f$ is defined as~$\rankof f = \max\{f(q)
\mid q \in Q\}$.
We use $f \leq f'$ iff for every state $q \in Q$ we have $f(q) \leq f'(q)$
and $f < f'$ iff $f \leq f'$ and there is a~state $p \in Q$ with $f(p) <
f'(p)$.

The simplest rank-based procedure, called \algkv, constructs the BA~$\algkv(\aut) = (Q',
\delta', I', F')$ whose components are defined as follows~\cite{KupfermanV01}:

\vspace{-2mm}
\begin{itemize}
  \setlength{\itemsep}{0mm}
  \item  $Q' = 2^Q \times 2^Q \times \cR$ is a~set of \emph{macrostates} denoted as $(S, O, f)$,
  \item  $I' = \{I\} \times \{\emptyset\} \times \cR$,
  \item  $(S', O', f') \in \delta'((S, O, f), a)$ iff
    \begin{itemize}
      \item  $S' = \delta(S, a)$,
      \item  for every $q \in S$ and $q' \in \delta(q, a)$ it holds that
        $f'(q') \leq f(q)$, and
      \item
        $O' = \left\{\begin{array}{l}
            \delta(S, a) \setminus \oddof{f'} \text{ if } O = \emptyset, \\
            \delta(O, a) \setminus \oddof{f'} \text{ otherwise, and}
          \end{array}\right.$
        % \begin{itemize}
        %   \item  $O' = \delta(S, a) \setminus \oddof{f'}$ if $O = \emptyset$ or
        %   \item  $O' = \delta(O, a) \setminus \oddof{f'}$ if $O \neq \emptyset$, and
        % \end{itemize}
    \end{itemize}
  \item  $F' = 2^Q \times \{\emptyset\} \times \cR$.
\end{itemize}
\vspace{-2mm}

%\figkvexample
The macrostates $(S,O,f)$ of~$\algkv(\aut)$ are composed of three components.
The $S$~component tracks all runs of~$\aut$ over the input word in the same way
as determinization of an NFA.
The $O$~component, on the other hand, tracks all runs whose rank has been even
since the last cut-point (a~point where $O = \emptyset$).
The last component, $f$, assigns every state in~$S$ a~rank.
Note that the $f$~component is responsible for the nondeterminism of the
complement (and also for the content of the $O$~component).
A~run of $\algkv(\aut)$ is accepting if it manages to empty the $O$~component of
states occurring on the run infinitely often.
We often merge~$S$ and~$f$ components and use, e.g., $(\{r{:}4, s{:}4\},
\emptyset)$ to denote the macrostate $(\{r,s\}, \emptyset, \{r \mapsto 4, s
\mapsto 4\})$ (we also omit ranks of states not in~$S$).
See \cref{fig:kv_example} for a~part of $\algkv(\autex)$ that starts in
$(\{r{:}4, s{:}4\}, \emptyset)$ and keeps ranks as high as possible (the whole
automaton is prohibitively large to be shown here---the implementation of \algkv
in GOAL~\cite{goal} outputs a~BA with 98 states).
Note that in order to accept the word
$b^\omega$, the accepting run needs to nondeterministically decrease the rank
of the successor of~$s$ (the transition $(\{s{:}4, t{:}4\}, \{s,t\}) \ltr b
(\{s{:}3, t{:}4\}, \{t\})$).

In the worst case, \algkv constructs a~BA with approximately~$(6n)^n$ states~\cite{KupfermanV01}.

%*******************************************************************************
\vspace{-0.0mm}
\subsection{Complementation with Tight Rankings}\label{sec:label}
\vspace{-0.0mm}
%*******************************************************************************

% \newcommand{\figfkvexample}[0]{
% \begin{wrapfigure}[13]{r}{6.2cm}
% \vspace*{-13mm}
% \hspace*{-2mm}
% \begin{minipage}{6.3cm}
%   \centering
% \input{figs/fkv-ex.tikz}
% \end{minipage}
% \vspace*{-11mm}
% \caption{$\algfkv(\autex)$; the \emph{waiting} and the \emph{tight} parts are highlighted.
%   The 4~wobbly transitions and the macrostate $(\{s{:}1\},\emptyset)$ will be
%   removed by \algdelay (\cref{sec:delay}).}
%   % \vh{why the wobbly transition within the tight part?}}
% \label{fig:fkv_example}
% \end{wrapfigure}
% }

%\figfkvexample
Friedgut, Kupferman, and Vardi observed in~\cite{FriedgutKV06} that the
$\algkv$~construction generates macrostates with many rankings that are not
strictly necessary in the loop part of the lasso for an accepting run on a~word.
Their optimization, denoted as $\algfkv$, is based on composing the complement
automaton from two parts:
the first part (called by us the \emph{waiting} part) just tracks all runs
of~$\aut$ over the input word (in a~similar manner as in a~determinized NFA)
and the second part (the \emph{tight} part) in addition tracks the rank of each run in
a~similar manner as the $\algkv$ construction, with the difference that the
rankings are \emph{tight}.

For a~set of states $S \subseteq Q$, we call~$f$ to be $S$-\emph{tight} if
\begin{inparaenum}[(i)]
  \item  it has an odd rank~$r$,
  \item  $\{f(s) \mid s \in S\} \supseteq \{1, 3, \ldots, r\}$, and
  \item  $\{f(q) \mid q \notin S\} = \{0\}$.
\end{inparaenum}
A~ranking is \emph{tight} if it is $Q$-tight; we use $\cT$ to denote the set of
all tight rankings.

The \algfkv procedure constructs the BA~$\algfkv(\aut) = (Q',
\delta', I', F')$ whose components are defined as follows:

\vspace{-2mm}
\begin{itemize}
  \setlength{\itemsep}{0mm}
  \item  $Q' = Q_1 \cup Q_2$ where
    \begin{itemize}
    \setlength{\itemsep}{0mm}
      \item  $Q_1 = 2^Q$ and
      \item  $Q_2 = \{(S,O,f) \in 2^Q \times 2^Q \times \cT\mid
          f \text{ is $S$-tight}, O \subseteq S\}$,
    \end{itemize}
  \item  $I' = \{I\}$,
  \item  $\delta' = \tau_1 \cup \tau_2 \cup \tau_3$ where
    \begin{itemize}
      \item  $\tau_1\colon Q_1 \times \Sigma \to 2^{Q_1}$ such that $\tau_1(S, a) =
        \{\delta(S,a)\}$,
      \item  $\tau_2\colon Q_1 \times \Sigma \to 2^{Q_2}$ such that $\tau_2(S, a) =
        \{(S', \emptyset, f) \in Q_2 \mid S' = \delta(S, a), f \text{ is } S'\text{-tight}\}$,
      \item  $\tau_3\colon Q_2 \times \Sigma \to 2^{Q_2}$ such that $(S', O', f') \in
        \tau_3((S, O, f), a)$ iff
          \begin{itemize}
            \item  $S' = \delta(S, a)$,
            \item  for every $q \in S$ and $q' \in \delta(q, a)$ it holds that
              $f'(q') \leq f(q)$,
            \item  $\rankof f = \rankof{f'}$, and
            \item  $O' = \left\{\begin{array}{l}
              \delta(S, a) \setminus \oddof{f'} \text{ if } O = \emptyset, \\
              \delta(O, a) \setminus \oddof{f'} \text{ otherwise, and}
            \end{array}\right.$
            % \item  and
            %   \begin{itemize}[$\circ$]
            %     \item  $O' = \delta(S, a) \setminus \oddof{f'}$ if $O = \emptyset$ or
            %     \item  $O' = \delta(O, a) \setminus \oddof{f'}$ if $O \neq \emptyset$, and
            %   \end{itemize}
          \end{itemize}
    \end{itemize}
  \item  $F' = \{\emptyset\} \cup ((2^Q \times \{\emptyset\} \times \cT) \cap Q_2)$.
\end{itemize}
\vspace{-2mm}

%\figfkvexample
\noindent
We call the part of $\algfkv(\aut)$ with the states in~$Q_1$ the
\emph{waiting} part and the part with the states in~$Q_2$ the \emph{tight}
part (an accepting run in $\algfkv(\aut)$ simulates the run DAG of $\aut$
over a~word~$w$ by \emph{waiting} in~$Q_1$ until it can
generate \emph{tight rankings} only; then it moves to~$Q_2$).
See \cref{fig:fkv_example} for $\algfkv(\autex)$.
In order to accept the word
$b^\omega$, the accepting run needs to nondeterministically move from the waiting to the tight part.
Note that $\algfkv(\autex)$ is significantly smaller than
$\algkv(\autex)$ (which had 98 states).
In the worst case, \algfkv constructs a~BA with~$\bigOof{(0.96n)^n}$ states~\cite{FriedgutKV06}.

%*******************************************************************************
\vspace{-0.0mm}
\subsection{Optimal Rank-Based Complementation}
\vspace{-0.0mm}
%*******************************************************************************

\begin{figure}[t]
	% \begin{subfigure}{0.37\textwidth}
  %   \vspace*{4mm}
 	   \centering
		 \scalebox{0.9}{
	 	 \begin{tikzpicture}[->,>=stealth',shorten >=0pt,auto,node distance=1.8cm,
                    scale=0.8,transform shape,initial text={}]
  \tikzstyle{every state}=[inner sep=3pt,minimum size=5pt,rectangle,rounded corners=1mm]
  \tikzstyle{empty}=[]
  \tikzstyle{initstate}=[fill=yellow!30]
  \tikzstyle{wobbly}=[decorate, decoration={snake,amplitude=.2mm,segment length=2mm,post length=1mm}]

  \node[state,initial,initstate] (rs) {$\{r,s\}$};
  \node[state, below of=rs] (r) {$\{r\}$};
  \node[state, left of=rs,yshift=10mm] (st) {$\{s,t\}$};
  \node[state, below of=r] (s) {$\{s\}$};
  \node[state, left of=r] (t) {$\{t\}$};
  \node[state, accepting, left of=s] (em) {$\emptyset$};

  \node[state, accepting, right of=rs, node distance=30mm,yshift=10mm] (r1) {$\big(\{s{:}1, t{:}0\}, \emptyset,0\big)$};
  \node[state, below of=r1,xshift=-3mm] (r2) {$\big(\{s{:}1, t{:}0\}, \{t\},0\big)$};
  \node[state, fill=gray!40,accepting,below of=r2,yshift=-4mm] (r3) {$\big(\{s{:}1\},\emptyset,0\big)$};

  % \node[state, white, fill=white, below of=r2, xshift=-8mm, node distance=17mm, minimum width=20mm, minimum height=10mm] (sink) {};

  \path (rs) edge  node[right] {$a$} (r)
        (rs) edge  node[above,pos=0.2] {$b$} (st)
        (r) edge  node[right] {$b$} (s)
        (r) edge[wobbly]  node[below,pos=0.2] {$b$} (r3)
        (s) edge  node[xshift=1mm] {$b$} (st)
        % (s) edge[bend right=50]  node {$b$} (r1)
        (s) edge  node[above] {$a$} (em)
        (st) edge  node[left] {$a$} (t)
        (t) edge  node[left] {$b$} (em)
        (r) edge [loop right] node[pos=0.2,above] (lab2) {$a$} (r)
        (t) edge [loop left] node {$a$} (t)
        (em) edge [loop left] node (lab1) {$a,b$} (em)
        (st) edge [loop left] node {$b$} (st)
        (rs) edge[wobbly]  node[pos=0.3] {$b$} (r1)
        (st) edge  node {$b$} (r1)
        (r1) edge [bend left]  node {$b$} (r2)
        (r2) edge [bend left]  node {$b$} (r1)
        (r3) edge [wobbly]  node[right] {$b$} (r2);

  \draw[wobbly] (s) .. controls +(50mm,-10mm) and +(10mm,-20mm) .. node[pos=0.1,above,yshift=-0.5mm] {$b$} (r1.south east);

  \begin{pgfonlayer}{background}
  \node[draw,dashed,rectangle,fill=red!15,draw=black!70,rounded corners=8pt,inner sep=3pt,fit=(r1) (r2) (r3)] (tight) {};
  \node[draw,dashed,rectangle,fill=blue!15,draw=black!70,rounded corners=8pt,inner sep=3pt,fit=(st) (rs) (r) (t) (em) (s) (lab1) (lab2)] (waiting) {};
  \end{pgfonlayer}

  \node[above of=st,node distance=6.5mm,xshift=-8mm] {\emph{waiting}};
  \node[above of=r1,node distance=7mm,xshift=8mm] {\emph{tight}};

\end{tikzpicture}
		 }
	%  	 \label{fig:schewe_example}
	% \end{subfigure}
\caption{
    $\algschewe(\autex)$.
      The optimization \algdelay (\cref{sec:delay}) will remove the 4~wobbly transitions and macrostate $(\{s{:}1\},\emptyset,0)$.}
\label{fig:schewe_example}
\end{figure}

An optimal algorithm whose space complexity matches the theoretical lower bound
$\bigOof{(0.76n)^n}$ was given by Schewe in~\cite[Section 3.1]{Schewe09}.
We denote this algorithm as \algschewe.
Apart from the optimization of \algfkv, in \algschewe,
macrostates of the tight part contain one
additional component, i.e., a~macrostate has the form $\sofi$, where the last
component $i \in \{0, 2, \ldots, 2n - 2\}$, for $n = |Q|$, denotes the rank of
states that are in~$O$.
Then, at a~cut-point (when $O$ is being reset), $O$~is not filled with all states
having an even rank, but only those whose rank is~$i$ (at every
cut-point, $i$ changes to $i+2$ modulo the rank of~$f$).

Formally,
$\algschewe(\aut) = (Q', \delta', I', F')$ is constructed as
follows:

% The \CompSchewe procedure constructs the BA~$\BSven = (Q', \delta', I', F')$
% s.t.~$\langof \BSven = \overline{\langof \aut}$.  The components of~$\BSven$ are
% defined as follows:
%
%
% \ol{OLD:}
% Further, we consider the rank-based complementation procedure KV of Kupferman
% and Vardi~\cite{KupfermanV01}
% producing the automaton~$\BKV$ (with the state-language $\langkvof{S, O, f})$
% using the notation from the paper of Schewe~\cite{Schewe09}.
% \ol{say that a function $f: Q \to \{0, \ldots, 2n\}$ that maps all accepting
% states to even numbers is called a~\emph{ranking}}
%
% The construction constructs for a~BA $\aut = (Q, \delta, I, F)$ the complement BA
% $\B = (Q', \delta', I', F')$ where
%
\vspace{-2mm}
\begin{itemize}
  \setlength{\itemsep}{0mm}
  \item  $Q' = Q_1 \cup Q_2$ where
    \vspace{-1mm}
    \begin{itemize}
    \setlength{\itemsep}{0mm}
      \item  $Q_1 = 2^Q$ and
      \item  $Q_2 = \hspace*{-1mm}
        \begin{array}[t]{ll}
          \{(S,O,f, i) \in \hspace*{0mm}& 2^Q \times 2^Q \times \cT \times \{0, 2, \ldots, 2n - 2\} \mid f \text{ is $S$-tight},
          O \subseteq S \cap f^{-1}(i)\},
        \end{array}$
    \end{itemize}
    \vspace{-1mm}
  \item  $I' = \{I\}$,
  \item  $\delta' = \delta_1 \cup \delta_2 \cup \delta_3$ where
    \vspace{-1mm}
    \begin{itemize}
  \setlength{\itemsep}{0mm}
      \item  $\delta_1: Q_1 \times \Sigma \to 2^{Q_1}$ such that $\delta_1(S, a) =
        \{\delta(S,a)\}$,
      \item $\delta_2 : Q_1 \times \Sigma \to 2^{Q_2}$ such that $\delta_2(S, a) =
        \{(S', \emptyset, f, 0) \mid S' = \delta(S, a), f \text{ is } S'\text{-tight}\}$, and
      \item $\delta_3: Q_2 \times \Sigma \to 2^{Q_2}$ such that $(S', O', f', i') \in
        \delta_3((S, O, f, i), a)$ iff
          \begin{itemize}
  \setlength{\itemsep}{0mm}
            \item  $S' = \delta(S, a)$,
            \item  for every $q \in S$ and $q' \in \delta(q, a)$ it holds that
              $f'(q') \leq f(q)$,
            \item  $\rankof f = \rankof{f'}$,
            \item  and~~
              \begin{minipage}[t]{10cm}
              \begin{itemize}[$\circ$]
                \item  $i' = (i+2) \mod (\rankof{f'} + 1)$ and $O' = f'^{-1}(i')$ if
                  $O = \emptyset$ or
                \item  $i' = i$ and $O' = \delta(O, a) \cap f'^{-1}(i)$ if $O \neq
                  \emptyset$, and
              \end{itemize}
              \end{minipage}
          \end{itemize}
    \end{itemize}
    \vspace{-1mm}
  \item  $F' = \{\emptyset\} \cup ((2^Q \times \{\emptyset\} \times \cT \times
    \omega) \cap Q_2)$.
\end{itemize}
\vspace{-2mm}

% The macrostates $(S,O,f,i) \in Q_2$ of~$\algschewe(\aut)$ are composed of four components.
% The $S$~component tracks all runs of~$\aut$ over the input word in the same way
% as determinization of an NFA.
% The $O$~component, on the other hand, tracks all runs whose rank has been~$i$
% since the last cut-point (a~point where $O = \emptyset$).
% The last component, $f$, assigns every state in~$S$ a~rank (intuitively, $f$
% corresponds to the ranks assigned in the run DAG to the states on the given
% level).
% \ol{added the parenthesis}
% Note that the $f$~component is responsible for the nondeterminism of the
% complement (and also for the content of the $O$~component).
% A~run of $\algschewe(\aut)$ is accepting if it manages to empty the $O$~component of
% states occurring on the run infinitely often.
% We often merge~$S$ and~$f$ components and use, e.g., $(\{r{:}4, s{:}4\},
% \emptyset, 2)$ to denote the macrostate $(\{r,s\}, \emptyset, \{r \mapsto 4, s
% \mapsto 4\}, 2)$ (we also omit ranks of states not in~$S$).
See \cref{fig:schewe_example} for $\algschewe(\autex)$.
In this case, $\algschewe(\autex)$ is isomorphic to
$\algfkv(\autex)$; note this is just due to the simplicity of the example,
which does not contain higher ranks.
In general, $\algschewe(\autex)$ can indeed be much smaller.
% $b^\omega$, the accepting run needs to nondeterministically decrease the rank
% of the successor of~$s$ (the transition $(\{s{:}4, t{:}4\}, \{s,t\}) \ltr b
% (\{s{:}3, t{:}4\}, \{t\})$).

% The correctness of the construction is then given by the following theorem.
%
\vspace{0mm}
\begin{theorem}
 \textrm{\bf (\cite[Corollary~3.3]{Schewe09})}
	Let $\but = \algschewe(\aut)$.
  Then $\langof \but = \overline{\langof \aut}$.
	% Let $\aut$ be a BA and $\but = \algschewe(\aut)$. Then $\langof \but =
	% \overline{\langof \aut}$.
\end{theorem}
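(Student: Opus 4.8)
The plan is to prove the two inclusions $\langof \but \subseteq \overline{\langof \aut}$ and $\overline{\langof \aut} \subseteq \langof \but$ separately, both times reasoning through the run DAG $\dagw$ and the ranks $\rankwof{\cdot}$ assigned in \cref{sec:rundag}. The conceptual bridge is the Kupferman--Vardi characterization that $\word \notin \langof \aut$ holds iff $\dagw$ is rejecting iff $\dagw$ carries a bounded \emph{odd ranking}: a rank assignment that is non-increasing along every edge, stays within $\numsetof{2n}$, and in which the rank of every infinite path converges to an \emph{odd} value. First I would record two facts about the procedure of \cref{sec:rundag} that I will use as black boxes: (i)~when $\dagw$ is rejecting, $\rankwof{\cdot}$ is exactly such a bounded odd ranking (even ranks are given only to \emph{finite} vertices, hence no path lingers forever at a fixed even rank); and (ii)~the level rankings it induces become $S$-tight from some level $\ell$ on.

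For $\overline{\langof \aut} \subseteq \langof \but$, assume $\word \notin \langof \aut$, so $\dagw$ is rejecting and $\rankwof{\cdot}$ is finite, bounded by $2n$, and tight past level~$\ell$. I would construct an accepting run of $\but$ that stays in the waiting part $Q_1$ tracking $\delta(I,\cdot)$ up to level~$\ell$, then takes a $\delta_2$-transition into the tight part with the tight level ranking induced by $\rankwof{\cdot}$, $O=\emptyset$, and $i=0$. Afterwards the $\delta_3$-transitions follow $\rankwof{\cdot}$ level by level; the constraints $f'(q')\le f(q)$ and $\rankof f = \rankof{f'}$ hold because $\rankwof{\cdot}$ is non-increasing along edges and tight of constant maximal odd rank. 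The remaining obligation is that this run is accepting, i.e.\ that it visits $F'$ (equivalently, has $O=\emptyset$) infinitely often.

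This emptying of $O$ is the crux, and it is where Schewe's index component~$i$ is used. At each cut-point the index advances to $(i+2)\bmod(\rankof{f'}+1)$ and $O$ is refilled with the vertices of even rank~$i$; while $O\ne\emptyset$ it is propagated as the forward image that keeps rank exactly~$i$. Since every even-ranked vertex is finite in the sense of \cref{sec:rundag}, only finitely many vertices are reachable from it at that rank, so every branch placed in $O$ must eventually drop below~$i$ or die; hence $O$ empties, a cut-point recurs, and the run visits $F'$ infinitely often, giving $\word\in\langof\but$. The delicate point I expect to be the main obstacle is precisely this interplay: one must check that cycling $i$ through all even values $0,2,\ldots,\rankof{f'}-1$ does not stall the run, i.e.\ that between two resets at the same value of $i$ the index sweeps every even rank, so that a branch surviving in some $O$ forever would contradict rejectingness at its rank.

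For the converse $\langof\but\subseteq\overline{\langof\aut}$, take an accepting run $\rho'=(S_0,O_0,f_0,i_0)(S_1,O_1,f_1,i_1)\cdots$ of $\but$ over~$\word$. If $\rho'$ never leaves $Q_1$ then it must visit the sink state $\emptyset$ infinitely often, so $S_j=\emptyset$ eventually and $\dagw$ has no infinite path, whence $\word\notin\langof\aut$ trivially. Otherwise $\rho'$ enters $Q_2$ (and, since no transition leaves $Q_2$, stays there). The map $(q,j)\mapsto f_j(q)$ then ranks the vertices of $\dagw$, is non-increasing along edges by the $\delta_3$-constraints, and has constant maximal rank, so the ranks along every infinite path converge. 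If some path had infinitely many accepting vertices, its limit rank would be \emph{even} (accepting states receive even ranks in a tight ranking); since $\rho'$ empties $O$ infinitely often, the index $i$ takes this even value at infinitely many cut-points, so at one of them past the path's stabilization the path's vertex is placed in $O$ and then propagated forever, contradicting that $O$ empties again. Hence no path of $\dagw$ is accepting, $\dagw$ is rejecting, and $\word\notin\langof\aut$. The two inclusions give $\langof\but=\overline{\langof\aut}$.
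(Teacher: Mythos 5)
The paper does not actually prove this theorem; it imports it as Corollary~3.3 of \cite{Schewe09}. Your reconstruction follows what is essentially the standard Kupferman--Vardi/Schewe argument that the citation points to, and its two core steps check out. For $\langof\but\subseteq\overline{\langof\aut}$: reading the rankings $f_j$ along an accepting run of $\but$ as a non-increasing rank assignment on $\dagw$, an accepting path of $\aut$ would stabilize at an \emph{even} rank $r$; since the maximal rank is constant in the tight part, the index $i$ cycles through all even values $0,2,\ldots,\rankof{f}-1$ at cut-points, so it hits $r$ after stabilization, the path is trapped in $O$ forever, and acceptance of $\but$'s run is contradicted. For $\overline{\langof\aut}\subseteq\langof\but$: switch to the tight part once the DAG ranking is tight, follow $\rankwof{\cdot}$ (which satisfies the $\delta_3$ constraints because it is non-increasing along edges with constant odd maximum), and use the fact that an even-rank vertex has only finitely many descendants of the same rank to conclude that $O$ empties infinitely often. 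Treating the two ranking lemmas (boundedness and odd stabilization on rejecting DAGs; eventual tightness) as black boxes is legitimate here, since they are precisely the supporting lemmas of the cited source.

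There is, however, one step that fails as written. Your fact~(ii) --- that on a rejecting $\dagw$ the induced level rankings become $S$-tight from some level on --- is false when $\aut$ has \emph{no} infinite run on $\word$, i.e., when the levels of $\dagw$ are eventually empty (possible because $\aut$ need not be complete). In that case every vertex is finite, every rank is $0$, and no $S$-tight ranking exists at any level (tightness requires an odd maximal rank; in particular $Q_2$ contains no macrostate whose $S$-component is $\emptyset$), so the run you construct for $\overline{\langof\aut}\subseteq\langof\but$ can never enter the tight part, and no run that does enter it can be continued forever. The missing idea is a case split that you in fact already use in the \emph{other} direction: if the levels of $\dagw$ die out, keep the run of $\but$ in the waiting part forever; it eventually reaches the macrostate $\emptyset$, which is a sink under $\delta_1$ and belongs to $F'$, hence the run is accepting. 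This is exactly why $\emptyset$ is declared accepting in $\algschewe$. With this case added (and with your tightness argument restricted to the complementary case, where all levels of $\dagw$ are nonempty and hence an infinite path exists), your proof is complete.
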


In the following, we assume that $\algschewe(\aut)$ contains only the states
and transitions reachable from $I'$.
We use \algschewe as the base algorithm in the rest of the
paper.
% We use \algschewe as the basis for further optimizations in the rest of the
% paper.

% See \cref{fig:fkv_example} for $\algfkv(\autex)$.
% Note that $\algfkv(\autex)$ is significantly smaller than
% $\algkv(\autex)$ (which had 98 states).

% %%%%%%%%%%%%%%%%%%%%%%%%%%%%%%%%%%%%%%%%%%%%%%%%%%%%%%%%%%%%%%%%%%%%%%%%%%%%%%%%
% \vspace{-0.0mm}
% \subsection{Optimal Rank-Based Complementation}\label{sec:schewe}
% \vspace{-0.0mm}
% %*******************************************************************************
%
% An optimal algorithm whose space complexity matches the theoretical lower bound
% $\bigOof{(0.76n)^n}$ was given by Schewe in~\cite[Section 3.1]{Schewe09}.
% We denote this algorithm as \algschewe.
% The difference from \algfkv is that in \algschewe, macrostates contain one
% additional component, i.e., a~macrostate has the form $\sofi$, where the last
% component $i \in \{0, 2, \ldots, 2n - 2\}$, for $n = |Q|$, denotes the rank of
% states that are in~$O$.
% Then, at a~cut-point (when $O$ is being reset), $O$~is not filled with all states
% having an even rank (as in \algfkv), but only those whose rank is~$i$ (at every
% cut-point, $i$ changes to $i+2$ modulo the rank of~$f$).
%
% For a~pair of rankings~$f$ and~$f'$, a~set $S \subseteq Q$, and a~symbol~$a \in
% \Sigma$, we use $f' \rankleq f$ iff for every $q \in S$ and $q' \in \delta(q,
% a)$ it holds that $f'(q') \leq f(q)$.
% \ol{change $\rankleq$ to sth else --- there is only one use, so maybe remove it}
%

%*******************************************************************************
\vspace{-0.0mm}
\subsection{Super-Tight Runs}\label{sec:super-tight-runs}
\vspace{-0.0mm}
%*******************************************************************************

Let $\but = \algschewe(\aut)$.
Each accepting run of $\but$ on $\word \in
\langof \but$ is \emph{tight}, i.e., the rankings of macrostates it
traverses in~$Q_2$ are tight (this follows from the definition of~$Q_2$).
In this section, we show that there exists a~\emph{super-tight run} of
$\but$ on~$\word$, which is, intuitively, a~run that uses as little
ranks as possible.
% The concept of super-tight runs is essential for our optimizations
% in~\cref{sec:optimizations}.
Our optimizations in~\cref{sec:optimizations} are based on
preserving super-tight runs of~$\but$.
% The concept of super-tight runs is essential for our optimizations
% in~\cref{sec:optimizations}.

Let
$
\rho = S_0 \ldots S_m
\sofiof{m+1} \sofiof{m+2}\ldots
$
be an accepting run of $\but$ over a~word~$\word \in
\Sigma^{\omega}$.
Given a~macrostate $\sofiof k$ for $k > m$, we define its \emph{rank} as
$\rankof{\sofiof k} = \rankof{f_k}$.
Further, we define the \emph{rank of the run~$\rho$} as $\rankof \rho =
\min\{\rankof{\sofiof k} \mid k > m \}$.
% \begin{lemma}
% Let $\word \in \langof{\but}$ and $\rho$ be an accepting run of $\but$
% on~$\word$ s.t.
% $$
% \rho = S_0 \ldots S_m
% \sofiof{m+1} \sofiof{m+2}\ldots
% $$
% Then there exists $\ell > m$ s.t. for all $k \geq \ell$, we have
% $\rankof{\sofiof k} = \rankof \rho$.
% \end{lemma}
%
% \begin{proof}
% The lemma follows directly from the definition of $\delta_3$ in the \algschewe
% procedure (a~successor has the same maximal rank as a~predecessor).
% \qed
% \end{proof}
Let~$\dagw$ be the run DAG of~$\aut$ over~$\word$ and~$\rank_\word$ be the
ranking of vertices in~$\dagw$.
We say that the run~$\rho$ is \emph{super-tight} if
for all~$k > m$ and all $q \in S_k$, it holds that
$f_k(q) = \rankofin{q, k} \word$.
Intuitively, super-tight runs correspond to runs whose ranking faithfully copies
the ranks assigned in~$\dagw$ (from some position~$m$ corresponding to the transition
from the waiting to the tight part of~$\but$).

% We define the rank of a~word $\word \in \langof \but$ in~$\but$ as
% $\rankofin \word \but = \min\{\rankof \rho \mid \rho \text{ is an accepting run of }
% \but \text{ on } \word\}$.
% We say that the run~$\rho$ is \emph{super-tight} if it holds that
% $\rankof{\sofiof{m+1}} = \rankofin \word \but$, i.e., when the run moves from
% the waiting to the tight part of~$\but$, it moves to a~macrostate whose rank is
% the same as the rank of the whole run (so all macrostates on the run have the
% same rank).

\begin{restatable}{lemma}{lemSuperTight}
\label{lem:super-tight-run}
Let $\word \in \langof{\but}$. Then there is a~super-tight accepting run~$\rho$
of $\but$ on~$\word$.
\end{restatable}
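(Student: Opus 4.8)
The plan is to construct the super-tight run explicitly from the canonical ranking $\rank_\word$ of the run DAG $\dagw$ and to verify that this construction yields a legal accepting run of $\but = \algschewe(\aut)$. Since $\word \in \langof\but = \overline{\langof\aut}$, the automaton $\aut$ has no accepting run on $\word$, so $\dagw$ is rejecting; by the standard Kupferman--Vardi rank analysis, the ranking procedure terminates within $2n+1$ steps and assigns every vertex of $\dagw$ a finite rank bounded by $2n$, with accepting vertices receiving even ranks and the rank being non-increasing along every edge of $\dagw$. The essential output of this analysis is that for each level $k$ the function $q \mapsto \rankofin{q,k}\word$ (restricted to states actually present at level $k$) is a tight level ranking from some level onward.

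First I would fix $m$ to be a level beyond which the set of ranks appearing in $\dagw$ has stabilised, i.e.\ a level after which the ranking at each subsequent level is $S_k$-tight where $S_k = \{q \mid (q,k)\in\dagw\}$. Such an $m$ exists because the image of the ranking is a finite set of values that can only shrink (or stay equal) as $k$ grows, so it eventually stabilises; once stabilised, the three tightness conditions (odd top rank $r$, all odd values $1,3,\dots,r$ hit on $S_k$, and value $0$ outside $S_k$) hold at every level. For $k \le m$ I set the run to sit in the waiting part, taking $S_0 \ldots S_m$ to be the levels of $\dagw$ (this is exactly the subset-tracking behaviour of $\delta_1$). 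At level $m$ I use a $\delta_2$-transition into the tight part, landing in $(S_{m+1}, \emptyset, f_{m+1}, 0)$ where $f_{m+1}(q) = \rankofin{q,m+1}\word$; this is legal precisely because $f_{m+1}$ is $S_{m+1}$-tight. For $k > m+1$ I define $f_k(q) = \rankofin{q,k}\word$ and let the $O$ and $i$ components evolve deterministically according to the $\delta_3$ rules.

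Next I would check that each step $\sofiof k \ltr{\wordof k} \sofiof{k+1}$ is a genuine $\delta_3$-transition. The condition $S_{k+1} = \delta(S_k,\wordof k)$ is immediate from the run-DAG definition of edges. The rank-monotonicity condition $f_{k+1}(q') \le f_k(q)$ for $q' \in \delta(q,\wordof k)$ is exactly the statement that ranks do not increase along $\dagw$-edges, which is a property of the Kupferman--Vardi ranking. The condition $\rankof{f_k} = \rankof{f_{k+1}}$ holds because we chose $m$ past the point where the top rank stabilises. The $O$- and $i$-components I let follow the mandated update rules verbatim, so those conditions hold by construction; it then only remains to see that $O_k \subseteq S_k \cap f_k^{-1}(i_k)$ is maintained, which is an easy induction since the update always intersects with $f^{-1}$ of the current index. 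This shows $\rho$ is a valid run, and it is super-tight by definition since $f_k = \rankofin{\cdot,k}\word$ for all $k > m$.

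The main obstacle, and the step deserving the most care, is proving that $\rho$ is \emph{accepting}, i.e.\ that $O$ is emptied infinitely often. This is where the rotating index $i$ is crucial: because the top rank $r = \rankof{f_k}$ is fixed for $k > m$ and $i$ cycles through $\{0,2,\dots,r-1\}$ each time $O$ empties, I must argue that for each even rank value the tracked $O$-set cannot persist forever. The key point is that a run of $\aut$ confined to an even rank $j$ without ever touching a higher rank would have to remain in non-endangered, non-finite vertices of even rank, but the ranking construction assigns even ranks $j$ only to vertices that become finite at stage $j$ (after the odd layers below have been peeled away); hence every branch tracked in $O$ at rank $i$ eventually dies out, emptying $O$ and advancing $i$. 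I would formalise this by appealing to the fact that no path in $\dagw$ stabilises at an even rank (a consequence of $\dagw$ being rejecting together with the alternating finite/endangered peeling), so the $O$-component is repeatedly purged, placing $\rho$ in $F'$ infinitely often. I expect this liveness argument, rather than the transition-legality bookkeeping, to be the technical heart of the proof.
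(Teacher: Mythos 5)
Your overall strategy is sound and is, in essence, exactly what the paper's own one-line proof appeals to: the paper says the lemma ``follows directly from the definition of a super-tight run and the \algschewe construction,'' i.e., it defers to the completeness argument for \algschewe, which builds an accepting run whose rankings are precisely the run-DAG ranks $\rankofin{q,k}{\word}$ --- the very run you construct. Your verification of $\delta_3$-legality (rank non-increase along DAG edges, constant top rank past stabilisation, the invariant $O \subseteq S \cap f^{-1}(i)$) and your acceptance argument (no path of $\dagw$ can stay at an even rank $j$ forever, since even-ranked vertices are finite in the peeled DAG $\dagwiof{j}$) are the right ingredients; note only that you additionally need K\"onig's lemma (finite branching) to pass from ``every branch tracked in $O$ eventually dies or drops rank'' to ``$O$ is emptied after finitely many steps.''

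There is, however, one step that is wrong as written: your justification for the existence of the stabilisation level $m$. You claim that ``the image of the ranking is a finite set of values that can only shrink (or stay equal) as $k$ grows, so it eventually stabilises,'' and that stabilisation yields tightness. Neither half holds. First, the set of ranks occurring at a level is not monotone: ranks materialise \emph{downstream} (e.g., a rank-$4$ vertex may have a rank-$3$ successor whose rank value was absent from all earlier levels), so the image can grow, and in fact it need not ever stabilise --- only its odd part does (even-rank vertices can appear and disappear indefinitely). Second, even if the image did stabilise, that alone would not give tightness condition (ii): a stable image $\{1,5\}$ violates it. The correct, standard argument (Friedgut--Kupferman--Vardi~\cite{FriedgutKV06}) is: the per-level maximum rank is non-increasing (every vertex has a predecessor), hence stabilises, and the stable maximum is odd (infinitely many levels with a maximal \emph{even} rank $j$ would yield, via K\"onig, an infinite path inside $\dagwiof{j}$, contradicting finiteness of its vertices); every odd-ranked vertex has a successor of the same odd rank, so odd ranks persist once present; and every odd value below the stable maximum is eventually realised. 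Since this eventual-tightness lemma is exactly what legalises your $\delta_2$-jump and the condition $\rankof{f_k}=\rankof{f_{k+1}}$, you must either prove it along these lines or cite it explicitly; with that repair your proof goes through and coincides with the argument the paper implicitly relies on.
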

\begin{proof}
	This follows directly from the definition of a super-tight run and the
	\algschewe construction.
\end{proof}

Let
$
\rho = S_0 \ldots S_m
\sofiof{m+1} \sofiof{m+2}\ldots
$
be a~run and consider a~macrostate $\sofiof k$ for $k > m$.
We call a~set
$C_k \subseteq S_k$ a~\emph{tight core of a~ranking~$f_k$} if
$f_k(C_k) = \{ 1, 3, \dots, \rankof{f_k} \}$ and
$\restrof{f_k}{C_k}$ is injective (i.e., every state in the tight core has
a~unique odd rank).
Moreover, $C_k$ is a~\emph{tight core of a~macrostate~$\sofiof k$} if it is
a~tight core of $f_k$. We say that an infinite sequence $\tau = C_{m+1} C_{m+2}
\ldots$ is a~\emph{trunk} of run~$\rho$ if for all $k > m$ it holds that $C_k$
is a~tight core of $\rho(k)$ and there is a bijection $\theta: C_k \to C_{k+1}$
s.t. if $\theta(q_k) = q_{k+1}$ then $q_{k+1} \in\delta(q_k, \alpha_k)$.
%\vh{previous: $C_{k+1} \subseteq \delta(C_k, \wordof k)$.}
%
% \ol{}
% and~$C_k$ is a~\emph{tight core of the run~$\rho$ at
% position~$k$} if~$C_k$ is a~tight core of $\sofiof k$ and there is
% a~subset~$C_{k+1} \subseteq \delta(C_k, \wordof k)$ s.t.\ $C_{k+1}$ is a~tight
% core of run~$\rho$ at position~$k+1$
% (note that the previous definition is co-inductive \ol{is it true?}).
We will, in particular, be interested in trunks of super-tight runs.
In these runs, a~trunk (there can be several)
\emph{represents runs of~$\aut$ that keep the super-tight ranks of~$\rho$}.
The following lemma shows that every state in any tight core in a~trunk of such
a~run has at least one successor with the same rank.

\begin{restatable}{lemma}{lemTrunks}\label{lem:trunks}
Let
$
\rho = S_0 \ldots S_m
\sofiof{m+1}\ldots
$
be an~accepting super-tight run of~$\but$ on~$\word$. Then there is a trunk
$\tau = C_{m+1} C_{m+2} \ldots$ of~$\rho$ and, moreover,
for every~$k > m$ and all states $q_k \in C_k$, it holds that there is
a~state $q_{k+1} \in C_{k+1}$ such that $f_k(q_k) = f_{k+1}(q_{k+1})$.
\end{restatable}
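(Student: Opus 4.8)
The plan is to reduce the whole statement to a single structural property of run-DAG rankings, which I will call the \emph{same-rank successor property}: every vertex of~$\dagw$ carrying an \emph{odd} rank~$r$ has a successor in~$\dagw$ that also carries rank~$r$, and this successor is again infinite. Granting this, I build the trunk by threading, for each odd rank, an infinite path of constantly-ranked vertices; the rank-preservation clause then comes essentially for free from the threading. Super-tightness is what makes this reduction legitimate, since it lets me identify the $f_k$-values with genuine run-DAG ranks.

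First I would establish the same-rank successor property directly from the ranking procedure of~\cref{sec:rundag}. Suppose $v$ receives the odd rank $r = j+1$. By construction this means $v$ survives into $\dagwiof{j+1}$ (so $v$ is not finite in $\dagwiof{j}$, hence infinite there) and is \emph{endangered} in $\dagwiof{j+1}$, i.e.\ reaches no accepting vertex there. I would first argue that $v$ is still infinite in $\dagwiof{j+1}$: since $\dagw$ is finitely branching, being infinite is equivalent to lying on an infinite path (K\"{o}nig's lemma), and deleting the finite (rank-$j$) vertices destroys no infinite path, because a finite vertex reaches only finitely many vertices and so lies on none. Thus the infinite path through $v$ in $\dagwiof{j}$ survives in $\dagwiof{j+1}$, and its next vertex $v'$ is an infinite successor of $v$ in $\dagwiof{j+1}$. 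Being a successor of the endangered~$v$, the vertex $v'$ reaches no accepting vertex either, so $v'$ is endangered in $\dagwiof{j+1}$ and therefore also receives rank $j+1 = r$. Since $v'$ is again infinite, the argument iterates. The hard part of the whole proof is exactly this step — especially the claim that infiniteness is preserved under deletion of the rank-$j$ vertices — and I would phrase it via infinite paths rather than via cardinalities of reachable sets.

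Next I would transfer this to the super-tight run~$\rho$. Because $\rho$ is accepting and super-tight, for every $k > m$ the set $S_k$ is exactly the level-$k$ slice of $\dagw$ and $f_k(q) = \rankwof{(q,k)}$ for all $q \in S_k$; moreover $\delta_3$ forces $\rankof{f_k}$ to equal a fixed odd value $R$ for all $k > m$. Fixing an odd $r \in \{1, 3, \ldots, R\}$, $S$-tightness supplies a rank-$r$ vertex $v^r_{m+1}$ at level $m+1$ (odd ranks are never assigned to finite vertices, so $v^r_{m+1}$ is infinite), and the same-rank successor property lets me greedily extend it to an infinite path $v^r_{m+1} v^r_{m+2} \ldots$ with $\rankwof{(v^r_k,k)} = r$ and $v^r_{k+1} \in \delta(v^r_k, \wordof k)$ for all $k > m$.

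Finally I would assemble the trunk. Setting $C_k = \{v^r_k \mid r \in \{1,3,\ldots,R\}\}$ for $k > m$, the vertices $v^r_k$ have pairwise distinct odd ranks, so $C_k \subseteq S_k$, $f_k(C_k) = \{1,3,\ldots,R\}$, and $\restrof{f_k}{C_k}$ is injective; hence each $C_k$ is a tight core of $\rho(k)$. The maps $\theta\colon v^r_k \mapsto v^r_{k+1}$ are bijections respecting $\delta$, so $\tau = C_{m+1} C_{m+2} \ldots$ is a trunk of $\rho$. For the \emph{moreover} clause, given $q_k \in C_k$, say $q_k = v^r_k$, the state $q_{k+1} = \theta(q_k) = v^r_{k+1} \in C_{k+1}$ satisfies $f_{k+1}(q_{k+1}) = r = f_k(q_k)$, which is exactly the required rank preservation.
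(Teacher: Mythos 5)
Your proof is correct and takes essentially the same route as the paper's: construct the trunk by following, level after level, same-rank successors, whose existence is guaranteed because super-tightness identifies the values of $f_k$ with the ranks in the run DAG. The only real difference is one of rigor rather than of approach --- where the paper's inductive step simply asserts that each odd-ranked state has a successor of equal rank (``this follows from the run DAG ranking procedure''), you actually prove this same-rank successor property from the ranking procedure via K\"{o}nig's lemma, and you organize the construction per odd rank (threading constant-rank infinite paths) rather than per level, which makes your write-up more self-contained but yields the same trunk.
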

\begin{proof}
	First we show how to inductively construct a trunk $\tau = C_{m+1} C_{m+2}
	\ldots$
	\begin{inparaenum}[(i)]
		\item As the base case, let $C_{m+1}$ be an arbitrary tight core of
			$f_{m+1}$.
		\item As the inductive step, consider a~tight core~$C_k$ from the trunk and
      let us construct~$C_{k+1}$. Since $\rho$ is a super-tight run, for each $q \in C_k$
		 	there is a state $q' \in S_{k+1}$ s.t. $f_k(q) = f_{k+1}(q')$. This follows
		 	from the run DAG ranking procedure. We put $q' \in C_{k+1}$. Such a
      constructed set is a~tight core of~$f_{k+1}$ and from the construction we
      have the property stated in the lemma.
	\end{inparaenum}
\end{proof}

% \begin{proof}
% 	Consider a
% 	tight cores $C_k$ and $C_{k+1}$ from the trunk $\tau$. Further, let $\theta$
% 	be the bijection between $C_k$ and $C_{k+1}$ given from the definition of
% 	trunk. We show by induction that each $q_k\in C_k$ there is a state
% 	$q_{k+1}\in C_{k+1}$ s.t. $f_k(q_{k}) = f_{k+1}(q_{k+1})$.
% 	\begin{itemize}
% 		\item Base case: Assume $q \in C_k$ s.t. $f_k(q) = \rankof{f_k}$. Then since
% 		$\tau$ is a trunk, $C_{k+1}$ is tight rank, there is a state $\theta(q) = q'
% 		\in C_{k+1}\cap\delta(C_k, \alpha_k)$ with $f_{k+1}(q') = \rankof{f_k}$.
%
% 		\item Inductive case: Assume a state  $q \in C_k$ with the rank $r =
% 		f_{k}(q)$ s.t. the induction holds true for all $p\in C_k$ with $f_k(p) >
% 		r$. Again since $\tau$ is a trunk, $C_{k+1}$ is tight rank, there is the
% 		state $\theta(q) = q' \in C_{k+1}$ and $f_{k+1}(q') = r$. From the induction
% 		hypothesis we have $f_{k}(p) = f_{k+1}(\theta(p))$ for each $p \in C_k$ s.t.
% 		$f_k(p) > r$. Since $C_{k+1}$ is a tight core it must be the case when
% 		$f_{k+1}(\theta(p)) = r$.
% 	\end{itemize}
% \qed
% \end{proof}

%%%%%%%%%%%%%%%%%%%%%%%%%%%%%%%%%%%%%%%%%%%%%%%%%%%%%%%%%%%%%%%%%%%%%%%%%%%%%%%%
\vspace{-0.0mm}
\section{Optimized Complement Construction}\label{sec:optimizations}
\vspace{-0.0mm}
%%%%%%%%%%%%%%%%%%%%%%%%%%%%%%%%%%%%%%%%%%%%%%%%%%%%%%%%%%%%%%%%%%%%%%%%%%%%%%%%

In this section, we introduce our optimizations of $\algschewe$ that are key to
producing small complement automata in practice.

\newcommand{\delayalgverb}[0]{
% \begin{algorithm}[!h]
\begin{algorithm}[t]
 \SetKwInput{KwInput}{Input}
 \SetKwInput{KwOutput}{Output}
 \KwInput{A B\"{u}chi automaton $\aut = (Q,I,\delta,F)$}
 \KwOutput{A B\"{u}chi automaton $\cut$ s.t.~$\langof \cut = \overline{\langof
  \aut}$}
 $\stack\gets \{I\}$, $Q_1 \gets \{I\}$, $\theta_2 \gets \emptyset$,
 $(\ignore, \delta_1 \cup \delta_2 \cup \delta_3, I', F') \gets
  \algschewe(\aut)$\;
 \While{$\stack \neq\emptyset$}{
 		Take a waiting-part macrostate $R \subseteq Q$ from $\stack$\;
		\ForEach{$a\in\Sigma$}{
			\If{$\exists T \in \delta_1(R,a)$ s.t. $R \ltr{a} T$ closes a~cycle in $Q_1$}{
        $\theta_2 \gets \theta_2 \cup \{R \ltr{a} U \mid U \in \delta_2(R,a)\}$\;
			}
      \ForEach{$T \in \delta_1(R,a)$ s.t. $T \notin Q_1$}{
        $\stack \gets \stack \cup \{T\}$\;
        $Q_1 \gets \states \cup \{T\}$\;
      }
		}
 }
 $Q_2 \gets \reach_{\delta_3}(\imgof{\theta_2})$\;
 \Return{$\cut = (Q_1 \cup Q_2, \delta_1\cup \theta_2 \cup \delta_3, I', F' \cap
  Q_2)$}\;
 \caption{The \algdelay construction}
 \label{alg:constr}
\end{algorithm}
}

%*******************************************************************************
\vspace{-0.0mm}
\subsection{Delaying the Transition from Waiting to Tight}\label{sec:delay}
\vspace{-0.0mm}
%*******************************************************************************

Our first optimization of the construction of the complement automaton reduces the
number of nondeterministic transitions between the \emph{waiting} and the
\emph{tight} part.
This optimization is inspired by the idea of \emph{partial order reduction} in
model checking~\cite{Godefroid90,Valmari91,Peled93}.
In particular, since in each state of the waiting part, it is possible to move
to the tight part, we can arbitrarily delay such a~transition (but need to take
it eventually) and, therefore, significantly reduce the number of transitions
(and, as our experiments later show, also significantly reduce the number of
reachable states in~$Q_2$).

\delayalgverb
Speaking in the terms of partial order reduction, when constructing the
waiting part of the complement BA, given a~macrostate $S \in Q_1$ and
a~symbol~$a \in \Sigma$, we can set
$\theta_2 \subseteq \delta_2$ such that $\theta_2(S, a) := \emptyset$ if the
\emph{cycle closing condition} holds and $\theta_2(S, a) := \delta_2(S,a)$
otherwise.
Informally, the \emph{cycle closing condition} (often denoted as \textbf{C3})
holds for~$S$ and~$a$ if the successor of~$S$ over~$a$ in the waiting part does
not close a~cycle where the transition to the tight part would be infinitely
often delayed.
Practically, it means that when constructing~$Q_1$, we need to check whether
successors of a~macrostate close a cycle in the so-far generated part of~$Q_1$.
% We refer to the construction as \algdelay\ol{ (it is formally described in Appendix~\ref{sec:delay-appendix}).
% remove ref. to appendix?}.
We give the construction in \cref{alg:constr} and refer to it as \algdelay.
Using this optimisation on the example in \cref{fig:schewe_example}, we would
remove the $b$-transitions from $\{r,s\}$ and $\{s\}$ to the macrostate
$(\{s{:}1, t{:}0\}, \emptyset, 0)$ and also the macrostate $(\{s{:}1\},
\emptyset, 0)$ (including the transitions incident with it).

\begin{restatable}{lemma}{lemDelay}
	% Let $\aut$ be a BA and $\cut = \algdelay(\aut)$. Then $\langof \cut =
	% \overline{\langof \aut}$.
	Let $\aut$ be a BA.
  Then $\langof{\algdelay(\aut)} = \langof{\algschewe(\aut)}$.
  Moreover, for every accepting super-tight run of~$\algschewe(\aut)$ on~$\word$,
  there is an accepting super-tight run of~$\algdelay(\aut)$ on~$\word$.
\end{restatable}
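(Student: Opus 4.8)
The plan is to prove the two claims of the lemma separately, since the language equivalence $\langof{\algdelay(\aut)} = \langof{\algschewe(\aut)}$ will follow from the stronger super-tight preservation statement together with Schewe's correctness theorem. The key observation is that $\algdelay(\aut)$ differs from $\algschewe(\aut)$ only in the set of transitions from the waiting part $Q_1$ to the tight part $Q_2$: we replace $\delta_2$ by a subset $\theta_2$, keeping $\delta_1$ and $\delta_3$ intact. Hence every run of $\algdelay(\aut)$ is also a run of $\algschewe(\aut)$, which immediately gives $\langof{\algdelay(\aut)} \subseteq \langof{\algschewe(\aut)}$.

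**For the harder inclusion** I would argue directly at the level of super-tight runs, thereby establishing the "moreover" part; the language inclusion $\langof{\algschewe(\aut)} \subseteq \langof{\algdelay(\aut)}$ then drops out by applying \cref{lem:super-tight-run}, which guarantees a super-tight accepting run exists for every $\word \in \langof{\algschewe(\aut)}$. So fix a word $\word$ and an accepting super-tight run $\rho = S_0 \ldots S_m \sofiof{m+1}\sofiof{m+2}\ldots$ of $\algschewe(\aut)$, which takes its waiting-to-tight transition exactly at step $m$. The prefix $S_0 \ldots S_m$ lives entirely in $Q_1$ and uses only $\delta_1$-transitions, which are shared by $\algdelay(\aut)$; the whole tight suffix uses only $\delta_3$-transitions, also shared. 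The only potentially missing edge is the single crossing transition $S_m \ltr{\wordof m} \sofiof{m+1}$, which $\algdelay(\aut)$ retains iff $\theta_2(S_m, \wordof m) = \delta_2(S_m, \wordof m)$, i.e.\ iff the cycle closing condition \textbf{C3} fires at $(S_m, \wordof m)$.

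**The main obstacle is therefore to show we can always choose a crossing point where the edge survives.** The idea is that the waiting part $Q_1$ is a finite deterministic structure (each $S$ has exactly one $\delta_1$-successor per symbol), so the prefix of $\word$ drives a lasso in $Q_1$: after finitely many steps the sequence of waiting macrostates enters a cycle $S_{j} = S_{j+p}$ for some period $p$. On that cycle, every edge $S_k \ltr{\wordof k} S_{k+1}$ closes a cycle in the already-generated portion of $Q_1$, so by the definition of \textbf{C3} the construction preserves the crossing transition $\delta_2(S_k, \wordof k)$ for at least one such $k$. I would then invoke \cref{lem:super-tight-run} to obtain a super-tight run whose crossing point $m$ lies on this cycle (super-tightness only constrains the tight suffix, and we are free to delay entering it until any chosen waiting macrostate), so the crossing edge is present in $\theta_2$. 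Concatenating the shared $\delta_1$-prefix, this retained crossing edge, and the shared $\delta_3$-suffix yields an accepting super-tight run of $\algdelay(\aut)$ on $\word$. The delicate point to get right is that $\algdelay$ builds $Q_1$ incrementally (the ``so-far generated part''), so I must confirm that by the time the lasso is detected every cycle-closing edge has indeed had its crossing transition added to $\theta_2$; this follows because the algorithm explores the entire reachable waiting part before finalizing $Q_2 = \reach_{\delta_3}(\imgof{\theta_2})$.
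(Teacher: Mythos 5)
Your proposal is correct and follows essentially the same route as the paper's own proof: one inclusion is immediate since $\theta_2 \subseteq \delta_2$, and for the other direction you delay the waiting-to-tight crossing along the run until a position where the deterministic waiting part has closed a cycle, at which point the cycle-closing condition guarantees the crossing edge survives in $\theta_2$, and the delayed run (with the same rankings) remains accepting and super-tight. The paper compresses the lasso argument into a one-line contradiction (``otherwise there would be a closed cycle with no state in $Q_2$'') and is less explicit than you about why delaying preserves super-tightness, but the decomposition and the key ideas are identical.
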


\begin{proof}
  Showing $\langof{\algdelay(\aut)} \subseteq \langof{\algschewe(\aut)}$ is
  trivial.
  In order to show $\langof{\algdelay(\aut)} \supseteq \langof{\algschewe(\aut)}$,
	consider some $\word \in\langof{\algschewe(\aut)}$.
  Then, there is an accepting run $\rho_m =
	S_0\dots S_m \sofiof {m+1}\dots$ on $\word$ in~$\algschewe(\aut)$.
  For each $\ell > m$ there is, however, also an accepting run $\rho_{\ell} =
  S_0\dots S_\ell \sofiof {\ell+1}\dots$ on $\word$ in $\algschewe(\aut)$.
  Note that each $\rho_\ell$ differs from $\rho_m$ on the point where the run
  switched from the waiting part to the tight part of~$\algschewe(\aut)$.
  Therefore, since the run $\rho_m$ managed to empty $O$ infinitely often,
  $\rho_\ell$ will also be able to do so and, therefore, it will also be
  accepting.

	From the \algdelay construction, we have the fact that at least one
  macrostate $\sofiof{k+1}$ where
	$k>m$ is in~$Q_2$. If this were not true, there would be a~closed cycle with no state
	in~$Q_2$, which is a~contradiction. From the previous reasoning we have that
	the run $\rho_k = S_1\dots S_k (S_{k+1}, O_{k+1}, f_{k+1}, i'_{k+1})\dots$ on
  $\word$ is present in $\algdelay(\aut)$. Moreover, this run is accepting both
  in $\algschewe(\aut)$ and $\algdelay(\aut)$, which concludes the proof.
\end{proof}

Since \algdelay does not affect the rankings in the macrostates and only delays
the transition from the waiting to the tight part, we can freely use it as the
base algorithm instead of \algschewe in all following optimizations.

%*******************************************************************************
\vspace*{-0.0mm}
\subsection{Successor Rankings}
\vspace*{-0.0mm}
%*******************************************************************************

% \newcommand{\figsuccrankcoarse}[0]{
% \begin{figure}[t]
% \includegraphics[width=5cm,keepaspectratio]{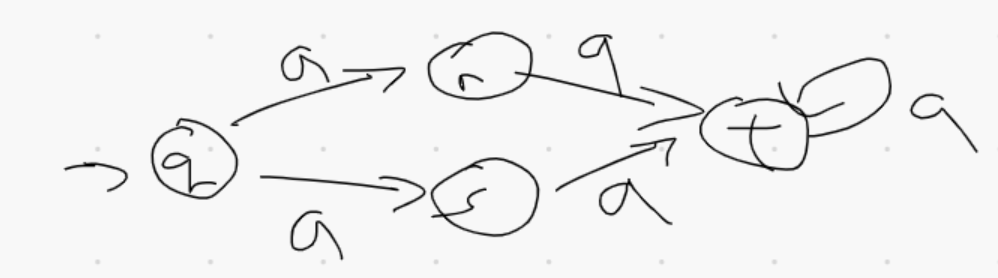}
% \includegraphics[width=5cm,keepaspectratio]{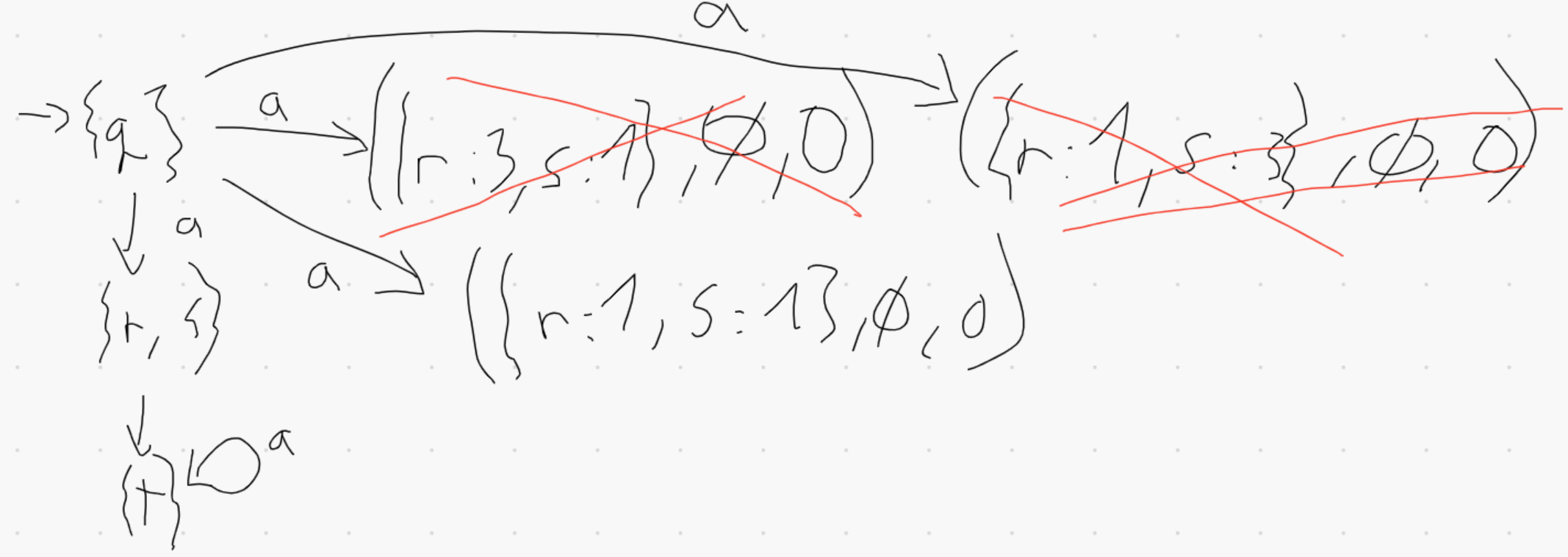}
% \caption{\ol{SuccRank coarse}}
% \label{fig:succ_rank_coarse}
% \end{figure}
% }

\newcommand{
\begin{wrapfigure}[17]{r}{5.0cm}
\vspace*{-12mm}
\hspace*{0mm}
\begin{minipage}{5.0cm}
  \centering
  \begin{tikzpicture}[->,>=stealth',shorten >=0pt,auto,node distance=1.2cm,
                    scale=0.8,transform shape,initial text={}]
  \tikzstyle{every state}=[inner sep=3pt,minimum size=5pt]
  \tikzstyle{empty}=[]
  \tikzstyle{initstate}=[fill=yellow!30]

  \node[state,initial,initstate] (q) {$q$};
  \node[state,right of=q] (r) {$r$};
  \node[state,initial,initstate,right of=r, xshift=8mm] (s) {$s$};
  \node[state, right of=s] (t) {$t$};

  \path (q) edge  node {$a$} (r)
        (r) edge[loop right]  node[above,pos=0.3] {$a$} (r)
        (s) edge  node {$a$} (t);

\end{tikzpicture}

  \vspace{3mm}
  \begin{tikzpicture}[>=stealth',shorten >=0pt,auto,node distance=1.2cm,
                    scale=0.8,transform shape,initial text={}]
  \tikzstyle{every state}=[inner sep=3pt,minimum size=5pt,rectangle,rounded corners=1mm]
  \tikzstyle{empty}=[]
  \tikzstyle{initstate}=[fill=yellow!30]

  \node[state,initial,initstate] (qs) {$\{q,s\}$};
  \node[state, below of=qs] (rt) {$\{r,t\}$};
  \node[state, below of=rt] (r) {$\{r\}$};

  \node[state, accepting, above of=qs,xshift=4mm] (r1) {$\big(\{r{:}3, t{:}1\}, \emptyset, 0\big)$};
  \node[state, accepting, right of=r1, node distance=32mm] (r3) {$\big(\{r{:}1, t{:}3\}, \emptyset, 0\big)$};
  \node[state, accepting, below of=r3] (r2) {$\big(\{r{:}1, t{:}1\}, \emptyset, 0\big)$};
  \node[state, accepting, below of=r2] (r4) {$\big(\{r{:}1, t{:}0\}, \emptyset, 0\big)$};
  \node[state, accepting, below of=r4] (r5) {$\big(\{r{:}0, t{:}1\}, \emptyset, 0\big)$};
  \node[state, accepting, below of=r5] (r6) {$\big(\{r{:}1\}, \emptyset, 0\big)$};

  % \node[state, thick, red, cross out, right of=q, node distance=35mm, minimum size=8mm] (cr1) {};
  % \node[state, thick, red, cross out, below of=r1, minimum size=8mm] (cr2) {};

  \begin{pgfonlayer}{background}
  \draw (r1.south west) edge[line width=2mm, draw=red!40] (r1.north east);
  \draw (r1.north west) edge[line width=2mm, draw=red!40] (r1.south east);

  \draw (r3.south west) edge[line width=2mm, draw=red!40] (r3.north east);
  \draw (r3.north west) edge[line width=2mm, draw=red!40] (r3.south east);
  \end{pgfonlayer}

  \path[->] (qs) edge  node [left] {$a$} (rt)
        (rt) edge  node [left] {$a$} (r)
        (r) edge[loop left]  node[below,pos=0.2] {$a$} (r)
        (qs) edge  node[left] {$a$} (r1)
        (qs) edge [above]  node {$a$} (r2)
        (qs) edge[bend right=15]  node[above,pos=0.5] {$a$} (r4)
        (qs) edge[bend right=20]  node[above,pos=0.5] {$a$} (r5)
        (qs) edge  node[above,pos=0.3] {$a$} (r3)
        (rt) edge[bend right=10]  node [above] {$a$} (r6)
        (r) edge[bend right=10]  node [above] {$a$} (r6);

  % \begin{pgfonlayer}{background}
  %   \node[state, white, fill=white!20, below of=r2, node distance=14mm, xshift=-5mm, minimum width=35mm, minimum height=8mm] (sink) {};
  %   \path[dashed,gray] (q)  edge [bend right]  node {} (sink)
  %     (rs) edge [bend right] node {} (sink)
  %     (t) edge [bend right] node {} (sink)
  %     (r1) edge [bend left]  node {} (sink)
  %     (r2) edge  node {} (sink)
  %     (r3) edge [bend left]  node {} (sink);
  % \end{pgfonlayer}
  %

\end{tikzpicture}
\end{minipage}
\vspace*{-5mm}
\caption{Illustration of \succrankred reduction ($\condcoarse$), focusing on
  the transitions from the waiting to the tight part.}
\label{fig:succ_rank_coarse}
\end{wrapfigure}
}[0]{
\begin{wrapfigure}[17]{r}{5.0cm}
\vspace*{-12mm}
\hspace*{0mm}
\begin{minipage}{5.0cm}
  \centering
  \begin{tikzpicture}[->,>=stealth',shorten >=0pt,auto,node distance=1.2cm,
                    scale=0.8,transform shape,initial text={}]
  \tikzstyle{every state}=[inner sep=3pt,minimum size=5pt]
  \tikzstyle{empty}=[]
  \tikzstyle{initstate}=[fill=yellow!30]

  \node[state,initial,initstate] (q) {$q$};
  \node[state,right of=q] (r) {$r$};
  \node[state,initial,initstate,right of=r, xshift=8mm] (s) {$s$};
  \node[state, right of=s] (t) {$t$};

  \path (q) edge  node {$a$} (r)
        (r) edge[loop right]  node[above,pos=0.3] {$a$} (r)
        (s) edge  node {$a$} (t);

\end{tikzpicture}

  \vspace{3mm}
  \begin{tikzpicture}[>=stealth',shorten >=0pt,auto,node distance=1.2cm,
                    scale=0.8,transform shape,initial text={}]
  \tikzstyle{every state}=[inner sep=3pt,minimum size=5pt,rectangle,rounded corners=1mm]
  \tikzstyle{empty}=[]
  \tikzstyle{initstate}=[fill=yellow!30]

  \node[state,initial,initstate] (qs) {$\{q,s\}$};
  \node[state, below of=qs] (rt) {$\{r,t\}$};
  \node[state, below of=rt] (r) {$\{r\}$};

  \node[state, accepting, above of=qs,xshift=4mm] (r1) {$\big(\{r{:}3, t{:}1\}, \emptyset, 0\big)$};
  \node[state, accepting, right of=r1, node distance=32mm] (r3) {$\big(\{r{:}1, t{:}3\}, \emptyset, 0\big)$};
  \node[state, accepting, below of=r3] (r2) {$\big(\{r{:}1, t{:}1\}, \emptyset, 0\big)$};
  \node[state, accepting, below of=r2] (r4) {$\big(\{r{:}1, t{:}0\}, \emptyset, 0\big)$};
  \node[state, accepting, below of=r4] (r5) {$\big(\{r{:}0, t{:}1\}, \emptyset, 0\big)$};
  \node[state, accepting, below of=r5] (r6) {$\big(\{r{:}1\}, \emptyset, 0\big)$};

  % \node[state, thick, red, cross out, right of=q, node distance=35mm, minimum size=8mm] (cr1) {};
  % \node[state, thick, red, cross out, below of=r1, minimum size=8mm] (cr2) {};

  \begin{pgfonlayer}{background}
  \draw (r1.south west) edge[line width=2mm, draw=red!40] (r1.north east);
  \draw (r1.north west) edge[line width=2mm, draw=red!40] (r1.south east);

  \draw (r3.south west) edge[line width=2mm, draw=red!40] (r3.north east);
  \draw (r3.north west) edge[line width=2mm, draw=red!40] (r3.south east);
  \end{pgfonlayer}

  \path[->] (qs) edge  node [left] {$a$} (rt)
        (rt) edge  node [left] {$a$} (r)
        (r) edge[loop left]  node[below,pos=0.2] {$a$} (r)
        (qs) edge  node[left] {$a$} (r1)
        (qs) edge [above]  node {$a$} (r2)
        (qs) edge[bend right=15]  node[above,pos=0.5] {$a$} (r4)
        (qs) edge[bend right=20]  node[above,pos=0.5] {$a$} (r5)
        (qs) edge  node[above,pos=0.3] {$a$} (r3)
        (rt) edge[bend right=10]  node [above] {$a$} (r6)
        (r) edge[bend right=10]  node [above] {$a$} (r6);

  % \begin{pgfonlayer}{background}
  %   \node[state, white, fill=white!20, below of=r2, node distance=14mm, xshift=-5mm, minimum width=35mm, minimum height=8mm] (sink) {};
  %   \path[dashed,gray] (q)  edge [bend right]  node {} (sink)
  %     (rs) edge [bend right] node {} (sink)
  %     (t) edge [bend right] node {} (sink)
  %     (r1) edge [bend left]  node {} (sink)
  %     (r2) edge  node {} (sink)
  %     (r3) edge [bend left]  node {} (sink);
  % \end{pgfonlayer}
  %

\end{tikzpicture}
\end{minipage}
\vspace*{-5mm}
\caption{Illustration of \succrankred reduction ($\condcoarse$), focusing on
  the transitions from the waiting to the tight part.}
\label{fig:succ_rank_coarse}
\end{wrapfigure}
}

\newcommand{
\begin{wrapfigure}[15]{r}{5.0cm}
\vspace*{-10mm}
\hspace*{0mm}
\begin{minipage}{5.0cm}
  \centering
  \begin{tikzpicture}[->,>=stealth',shorten >=0pt,auto,node distance=1.2cm,
                    scale=0.8,transform shape,initial text={}]
  \tikzstyle{every state}=[inner sep=3pt,minimum size=5pt]
  \tikzstyle{empty}=[]
  \tikzstyle{initstate}=[fill=yellow!30]

  \node[state,initial,initstate] (q) {$q$};
  \node[state,right of=q, yshift=6mm] (r) {$r$};
  \node[state,below of=r] (s) {$s$};
  \node[state,right of=r] (t) {$t$};

  \path (q) edge  node {$a$} (r)
        (q) edge  node {$a$} (s)
        (r) edge  node {$a$} (t)
        (s) edge [loop right]  node {$a$} (s)
        (r) edge[loop above]  node {$a$} (r);

\end{tikzpicture}
  \begin{tikzpicture}[>=stealth',shorten >=0pt,auto,node distance=1.2cm,
                    scale=0.8,transform shape,initial text={}]
  \tikzstyle{every state}=[inner sep=3pt,minimum size=5pt,rectangle,rounded corners=1mm]
  \tikzstyle{empty}=[]
  \tikzstyle{initstate}=[fill=yellow!30]

  \node[state,initial,initstate] (q) {$\{q\}$};
  \node[empty, right of=q] (qright) {};
  \node[state, below of=q] (rs) {$\{r,s\}$};
  \node[empty, below right of=rs,xshift=8mm,yshift=0mm] (rsright) {};
  \node[state, below of=rs] (rst) {$\{r,s,t\}$};
  \node[empty, right of=rst,node distance=16mm] (rstright) {};

  \node[state, accepting, right of=rs, node distance=35mm] (r1) {$\big(\{r{:}1, s{:}5, t{:}3\}, \emptyset, 0\big)$};
  \node[empty, below of=r1] (r1below) {};
  % \node[state, accepting, below of=r1] (r3) {$\big(\{r\mapsto 1, s\mapsto 3\}, \emptyset, 0\big)$};
  % \node[state, accepting, below of=r3] (r2) {$\big(\{r\mapsto 1, s\mapsto 1\}, \emptyset, 0\big)$};

  % \node[state, thick, red, cross out, right of=rs, node distance=35mm, minimum size=8mm] (cr1) {};
  %\node[state, thick, red, cross out, below of=r1, minimum size=8mm] (cr2) {};

  % \node[state, white, fill=white!20, below of=r1, node distance=16mm, minimum width=40mm, minimum height=12mm] (sink) {};

  \path[->] (q) edge  node [left] {$a$} (rs)
        (rs) edge  node [left] {$a$} (rst)
        (rst) edge[loop below]  node {$a$} (rst)
        (rs) edge  node {$a$} (r1);

  \begin{pgfonlayer}{background}
  \draw (r1.south west) edge[line width=2mm, draw=red!40] (r1.north east);
  \draw (r1.north west) edge[line width=2mm, draw=red!40] (r1.south east);
  \end{pgfonlayer}

  \begin{pgfonlayer}{background}
    \path[->,dashed,gray] (q)  edge node {$a$} (qright)
      (rs) edge node {$a$} (rsright)
      (rst) edge node {$a$} (rstright)
      (r1) edge  node {$a$} (r1below);
    % \path[dashed,gray] (q)  edge [bend right]  node {} (sink)
    %   (rs) edge [bend right] node {} (sink)
    %   (rst) edge [bend right] node {} (sink)
    %   (r1) edge  node {} (sink);
  \end{pgfonlayer}

\end{tikzpicture}
\end{minipage}
\vspace{-8mm}
\caption{Illustration of \succrankred reduction ($\condfine$), focusing on one particular macrostate.}
\label{fig:succ_rank_fine}
\end{wrapfigure}
}[0]{
\begin{wrapfigure}[15]{r}{5.0cm}
\vspace*{-10mm}
\hspace*{0mm}
\begin{minipage}{5.0cm}
  \centering
  \begin{tikzpicture}[->,>=stealth',shorten >=0pt,auto,node distance=1.2cm,
                    scale=0.8,transform shape,initial text={}]
  \tikzstyle{every state}=[inner sep=3pt,minimum size=5pt]
  \tikzstyle{empty}=[]
  \tikzstyle{initstate}=[fill=yellow!30]

  \node[state,initial,initstate] (q) {$q$};
  \node[state,right of=q, yshift=6mm] (r) {$r$};
  \node[state,below of=r] (s) {$s$};
  \node[state,right of=r] (t) {$t$};

  \path (q) edge  node {$a$} (r)
        (q) edge  node {$a$} (s)
        (r) edge  node {$a$} (t)
        (s) edge [loop right]  node {$a$} (s)
        (r) edge[loop above]  node {$a$} (r);

\end{tikzpicture}
  \begin{tikzpicture}[>=stealth',shorten >=0pt,auto,node distance=1.2cm,
                    scale=0.8,transform shape,initial text={}]
  \tikzstyle{every state}=[inner sep=3pt,minimum size=5pt,rectangle,rounded corners=1mm]
  \tikzstyle{empty}=[]
  \tikzstyle{initstate}=[fill=yellow!30]

  \node[state,initial,initstate] (q) {$\{q\}$};
  \node[empty, right of=q] (qright) {};
  \node[state, below of=q] (rs) {$\{r,s\}$};
  \node[empty, below right of=rs,xshift=8mm,yshift=0mm] (rsright) {};
  \node[state, below of=rs] (rst) {$\{r,s,t\}$};
  \node[empty, right of=rst,node distance=16mm] (rstright) {};

  \node[state, accepting, right of=rs, node distance=35mm] (r1) {$\big(\{r{:}1, s{:}5, t{:}3\}, \emptyset, 0\big)$};
  \node[empty, below of=r1] (r1below) {};
  % \node[state, accepting, below of=r1] (r3) {$\big(\{r\mapsto 1, s\mapsto 3\}, \emptyset, 0\big)$};
  % \node[state, accepting, below of=r3] (r2) {$\big(\{r\mapsto 1, s\mapsto 1\}, \emptyset, 0\big)$};

  % \node[state, thick, red, cross out, right of=rs, node distance=35mm, minimum size=8mm] (cr1) {};
  %\node[state, thick, red, cross out, below of=r1, minimum size=8mm] (cr2) {};

  % \node[state, white, fill=white!20, below of=r1, node distance=16mm, minimum width=40mm, minimum height=12mm] (sink) {};

  \path[->] (q) edge  node [left] {$a$} (rs)
        (rs) edge  node [left] {$a$} (rst)
        (rst) edge[loop below]  node {$a$} (rst)
        (rs) edge  node {$a$} (r1);

  \begin{pgfonlayer}{background}
  \draw (r1.south west) edge[line width=2mm, draw=red!40] (r1.north east);
  \draw (r1.north west) edge[line width=2mm, draw=red!40] (r1.south east);
  \end{pgfonlayer}

  \begin{pgfonlayer}{background}
    \path[->,dashed,gray] (q)  edge node {$a$} (qright)
      (rs) edge node {$a$} (rsright)
      (rst) edge node {$a$} (rstright)
      (r1) edge  node {$a$} (r1below);
    % \path[dashed,gray] (q)  edge [bend right]  node {} (sink)
    %   (rs) edge [bend right] node {} (sink)
    %   (rst) edge [bend right] node {} (sink)
    %   (r1) edge  node {} (sink);
  \end{pgfonlayer}

\end{tikzpicture}
\end{minipage}
\vspace{-8mm}
\caption{Illustration of \succrankred reduction ($\condfine$), focusing on one particular macrostate.}
\label{fig:succ_rank_fine}
\end{wrapfigure}
}

\newcommand{
\begin{figure}[t]
  \hfill
	\begin{subfigure}[b]{0.3\textwidth}
		\scalebox{0.90}{
    \hspace*{-5mm}
	  \begin{tikzpicture}[->,>=stealth',shorten >=0pt,auto,node distance=1.2cm,
                    scale=0.8,transform shape,initial text={}]
  \tikzstyle{every state}=[inner sep=3pt,minimum size=5pt]
  \tikzstyle{empty}=[]
  \tikzstyle{initstate}=[fill=yellow!30]

  \node[state,initial,initstate] (q) {$q$};
  \node[state,right of=q] (r) {$r$};
  \node[state,initial,initstate,right of=r, xshift=8mm] (s) {$s$};
  \node[state, right of=s] (t) {$t$};

  \path (q) edge  node {$a$} (r)
        (r) edge[loop right]  node[above,pos=0.3] {$a$} (r)
        (s) edge  node {$a$} (t);

\end{tikzpicture}
		}

	  \vspace{7mm}
		\scalebox{0.90}{
    \hspace*{-5mm}
	  \begin{tikzpicture}[>=stealth',shorten >=0pt,auto,node distance=1.2cm,
                    scale=0.8,transform shape,initial text={}]
  \tikzstyle{every state}=[inner sep=3pt,minimum size=5pt,rectangle,rounded corners=1mm]
  \tikzstyle{empty}=[]
  \tikzstyle{initstate}=[fill=yellow!30]

  \node[state,initial,initstate] (qs) {$\{q,s\}$};
  \node[state, below of=qs] (rt) {$\{r,t\}$};
  \node[state, below of=rt] (r) {$\{r\}$};

  \node[state, accepting, above of=qs,xshift=4mm] (r1) {$\big(\{r{:}3, t{:}1\}, \emptyset, 0\big)$};
  \node[state, accepting, right of=r1, node distance=32mm] (r3) {$\big(\{r{:}1, t{:}3\}, \emptyset, 0\big)$};
  \node[state, accepting, below of=r3] (r2) {$\big(\{r{:}1, t{:}1\}, \emptyset, 0\big)$};
  \node[state, accepting, below of=r2] (r4) {$\big(\{r{:}1, t{:}0\}, \emptyset, 0\big)$};
  \node[state, accepting, below of=r4] (r5) {$\big(\{r{:}0, t{:}1\}, \emptyset, 0\big)$};
  \node[state, accepting, below of=r5] (r6) {$\big(\{r{:}1\}, \emptyset, 0\big)$};

  % \node[state, thick, red, cross out, right of=q, node distance=35mm, minimum size=8mm] (cr1) {};
  % \node[state, thick, red, cross out, below of=r1, minimum size=8mm] (cr2) {};

  \begin{pgfonlayer}{background}
  \draw (r1.south west) edge[line width=2mm, draw=red!40] (r1.north east);
  \draw (r1.north west) edge[line width=2mm, draw=red!40] (r1.south east);

  \draw (r3.south west) edge[line width=2mm, draw=red!40] (r3.north east);
  \draw (r3.north west) edge[line width=2mm, draw=red!40] (r3.south east);
  \end{pgfonlayer}

  \path[->] (qs) edge  node [left] {$a$} (rt)
        (rt) edge  node [left] {$a$} (r)
        (r) edge[loop left]  node[below,pos=0.2] {$a$} (r)
        (qs) edge  node[left] {$a$} (r1)
        (qs) edge [above]  node {$a$} (r2)
        (qs) edge[bend right=15]  node[above,pos=0.5] {$a$} (r4)
        (qs) edge[bend right=20]  node[above,pos=0.5] {$a$} (r5)
        (qs) edge  node[above,pos=0.3] {$a$} (r3)
        (rt) edge[bend right=10]  node [above] {$a$} (r6)
        (r) edge[bend right=10]  node [above] {$a$} (r6);

  % \begin{pgfonlayer}{background}
  %   \node[state, white, fill=white!20, below of=r2, node distance=14mm, xshift=-5mm, minimum width=35mm, minimum height=8mm] (sink) {};
  %   \path[dashed,gray] (q)  edge [bend right]  node {} (sink)
  %     (rs) edge [bend right] node {} (sink)
  %     (t) edge [bend right] node {} (sink)
  %     (r1) edge [bend left]  node {} (sink)
  %     (r2) edge  node {} (sink)
  %     (r3) edge [bend left]  node {} (sink);
  % \end{pgfonlayer}
  %

\end{tikzpicture}
		}
		\caption{\centering}
		\label{fig:succ_rank_coarse}
	\end{subfigure}
  \hfill
  \hfill
	\begin{subfigure}[b]{0.3\textwidth}
		\scalebox{0.90}{
    \hspace*{-5mm}
	  \begin{tikzpicture}[->,>=stealth',shorten >=0pt,auto,node distance=1.2cm,
                    scale=0.8,transform shape,initial text={}]
  \tikzstyle{every state}=[inner sep=3pt,minimum size=5pt]
  \tikzstyle{empty}=[]
  \tikzstyle{initstate}=[fill=yellow!30]

  \node[state,initial,initstate] (q) {$q$};
  \node[state,right of=q, yshift=6mm] (r) {$r$};
  \node[state,below of=r] (s) {$s$};
  \node[state,right of=r] (t) {$t$};

  \path (q) edge  node {$a$} (r)
        (q) edge  node {$a$} (s)
        (r) edge  node {$a$} (t)
        (s) edge [loop right]  node {$a$} (s)
        (r) edge[loop above]  node {$a$} (r);

\end{tikzpicture}
		}

	  \vspace{7mm}
		\scalebox{0.90}{
    \hspace*{-5mm}
	  \begin{tikzpicture}[>=stealth',shorten >=0pt,auto,node distance=1.2cm,
                    scale=0.8,transform shape,initial text={}]
  \tikzstyle{every state}=[inner sep=3pt,minimum size=5pt,rectangle,rounded corners=1mm]
  \tikzstyle{empty}=[]
  \tikzstyle{initstate}=[fill=yellow!30]

  \node[state,initial,initstate] (q) {$\{q\}$};
  \node[empty, right of=q] (qright) {};
  \node[state, below of=q] (rs) {$\{r,s\}$};
  \node[empty, below right of=rs,xshift=8mm,yshift=0mm] (rsright) {};
  \node[state, below of=rs] (rst) {$\{r,s,t\}$};
  \node[empty, right of=rst,node distance=16mm] (rstright) {};

  \node[state, accepting, right of=rs, node distance=35mm] (r1) {$\big(\{r{:}1, s{:}5, t{:}3\}, \emptyset, 0\big)$};
  \node[empty, below of=r1] (r1below) {};
  % \node[state, accepting, below of=r1] (r3) {$\big(\{r\mapsto 1, s\mapsto 3\}, \emptyset, 0\big)$};
  % \node[state, accepting, below of=r3] (r2) {$\big(\{r\mapsto 1, s\mapsto 1\}, \emptyset, 0\big)$};

  % \node[state, thick, red, cross out, right of=rs, node distance=35mm, minimum size=8mm] (cr1) {};
  %\node[state, thick, red, cross out, below of=r1, minimum size=8mm] (cr2) {};

  % \node[state, white, fill=white!20, below of=r1, node distance=16mm, minimum width=40mm, minimum height=12mm] (sink) {};

  \path[->] (q) edge  node [left] {$a$} (rs)
        (rs) edge  node [left] {$a$} (rst)
        (rst) edge[loop below]  node {$a$} (rst)
        (rs) edge  node {$a$} (r1);

  \begin{pgfonlayer}{background}
  \draw (r1.south west) edge[line width=2mm, draw=red!40] (r1.north east);
  \draw (r1.north west) edge[line width=2mm, draw=red!40] (r1.south east);
  \end{pgfonlayer}

  \begin{pgfonlayer}{background}
    \path[->,dashed,gray] (q)  edge node {$a$} (qright)
      (rs) edge node {$a$} (rsright)
      (rst) edge node {$a$} (rstright)
      (r1) edge  node {$a$} (r1below);
    % \path[dashed,gray] (q)  edge [bend right]  node {} (sink)
    %   (rs) edge [bend right] node {} (sink)
    %   (rst) edge [bend right] node {} (sink)
    %   (r1) edge  node {} (sink);
  \end{pgfonlayer}

\end{tikzpicture}
		}
		\caption{\centering}
		\label{fig:succ_rank_fine}
	\end{subfigure}
  \hspace*{1cm}
\caption{
  (\subref{fig:succ_rank_coarse}) Illustration of \succrankred reduction
  ($\condcoarse$), focusing on the transitions from the waiting to the tight
  part.
  (\subref{fig:succ_rank_fine}) Illustration of \succrankred reduction
  ($\condfine$), focusing on one particular macrostate.
}
\end{figure}
}[0]{
\begin{figure}[t]
  \hfill
	\begin{subfigure}[b]{0.3\textwidth}
		\scalebox{0.90}{
    \hspace*{-5mm}
	  \begin{tikzpicture}[->,>=stealth',shorten >=0pt,auto,node distance=1.2cm,
                    scale=0.8,transform shape,initial text={}]
  \tikzstyle{every state}=[inner sep=3pt,minimum size=5pt]
  \tikzstyle{empty}=[]
  \tikzstyle{initstate}=[fill=yellow!30]

  \node[state,initial,initstate] (q) {$q$};
  \node[state,right of=q] (r) {$r$};
  \node[state,initial,initstate,right of=r, xshift=8mm] (s) {$s$};
  \node[state, right of=s] (t) {$t$};

  \path (q) edge  node {$a$} (r)
        (r) edge[loop right]  node[above,pos=0.3] {$a$} (r)
        (s) edge  node {$a$} (t);

\end{tikzpicture}
		}

	  \vspace{7mm}
		\scalebox{0.90}{
    \hspace*{-5mm}
	  \begin{tikzpicture}[>=stealth',shorten >=0pt,auto,node distance=1.2cm,
                    scale=0.8,transform shape,initial text={}]
  \tikzstyle{every state}=[inner sep=3pt,minimum size=5pt,rectangle,rounded corners=1mm]
  \tikzstyle{empty}=[]
  \tikzstyle{initstate}=[fill=yellow!30]

  \node[state,initial,initstate] (qs) {$\{q,s\}$};
  \node[state, below of=qs] (rt) {$\{r,t\}$};
  \node[state, below of=rt] (r) {$\{r\}$};

  \node[state, accepting, above of=qs,xshift=4mm] (r1) {$\big(\{r{:}3, t{:}1\}, \emptyset, 0\big)$};
  \node[state, accepting, right of=r1, node distance=32mm] (r3) {$\big(\{r{:}1, t{:}3\}, \emptyset, 0\big)$};
  \node[state, accepting, below of=r3] (r2) {$\big(\{r{:}1, t{:}1\}, \emptyset, 0\big)$};
  \node[state, accepting, below of=r2] (r4) {$\big(\{r{:}1, t{:}0\}, \emptyset, 0\big)$};
  \node[state, accepting, below of=r4] (r5) {$\big(\{r{:}0, t{:}1\}, \emptyset, 0\big)$};
  \node[state, accepting, below of=r5] (r6) {$\big(\{r{:}1\}, \emptyset, 0\big)$};

  % \node[state, thick, red, cross out, right of=q, node distance=35mm, minimum size=8mm] (cr1) {};
  % \node[state, thick, red, cross out, below of=r1, minimum size=8mm] (cr2) {};

  \begin{pgfonlayer}{background}
  \draw (r1.south west) edge[line width=2mm, draw=red!40] (r1.north east);
  \draw (r1.north west) edge[line width=2mm, draw=red!40] (r1.south east);

  \draw (r3.south west) edge[line width=2mm, draw=red!40] (r3.north east);
  \draw (r3.north west) edge[line width=2mm, draw=red!40] (r3.south east);
  \end{pgfonlayer}

  \path[->] (qs) edge  node [left] {$a$} (rt)
        (rt) edge  node [left] {$a$} (r)
        (r) edge[loop left]  node[below,pos=0.2] {$a$} (r)
        (qs) edge  node[left] {$a$} (r1)
        (qs) edge [above]  node {$a$} (r2)
        (qs) edge[bend right=15]  node[above,pos=0.5] {$a$} (r4)
        (qs) edge[bend right=20]  node[above,pos=0.5] {$a$} (r5)
        (qs) edge  node[above,pos=0.3] {$a$} (r3)
        (rt) edge[bend right=10]  node [above] {$a$} (r6)
        (r) edge[bend right=10]  node [above] {$a$} (r6);

  % \begin{pgfonlayer}{background}
  %   \node[state, white, fill=white!20, below of=r2, node distance=14mm, xshift=-5mm, minimum width=35mm, minimum height=8mm] (sink) {};
  %   \path[dashed,gray] (q)  edge [bend right]  node {} (sink)
  %     (rs) edge [bend right] node {} (sink)
  %     (t) edge [bend right] node {} (sink)
  %     (r1) edge [bend left]  node {} (sink)
  %     (r2) edge  node {} (sink)
  %     (r3) edge [bend left]  node {} (sink);
  % \end{pgfonlayer}
  %

\end{tikzpicture}
		}
		\caption{\centering}
		\label{fig:succ_rank_coarse}
	\end{subfigure}
  \hfill
  \hfill
	\begin{subfigure}[b]{0.3\textwidth}
		\scalebox{0.90}{
    \hspace*{-5mm}
	  \begin{tikzpicture}[->,>=stealth',shorten >=0pt,auto,node distance=1.2cm,
                    scale=0.8,transform shape,initial text={}]
  \tikzstyle{every state}=[inner sep=3pt,minimum size=5pt]
  \tikzstyle{empty}=[]
  \tikzstyle{initstate}=[fill=yellow!30]

  \node[state,initial,initstate] (q) {$q$};
  \node[state,right of=q, yshift=6mm] (r) {$r$};
  \node[state,below of=r] (s) {$s$};
  \node[state,right of=r] (t) {$t$};

  \path (q) edge  node {$a$} (r)
        (q) edge  node {$a$} (s)
        (r) edge  node {$a$} (t)
        (s) edge [loop right]  node {$a$} (s)
        (r) edge[loop above]  node {$a$} (r);

\end{tikzpicture}
		}

	  \vspace{7mm}
		\scalebox{0.90}{
    \hspace*{-5mm}
	  \begin{tikzpicture}[>=stealth',shorten >=0pt,auto,node distance=1.2cm,
                    scale=0.8,transform shape,initial text={}]
  \tikzstyle{every state}=[inner sep=3pt,minimum size=5pt,rectangle,rounded corners=1mm]
  \tikzstyle{empty}=[]
  \tikzstyle{initstate}=[fill=yellow!30]

  \node[state,initial,initstate] (q) {$\{q\}$};
  \node[empty, right of=q] (qright) {};
  \node[state, below of=q] (rs) {$\{r,s\}$};
  \node[empty, below right of=rs,xshift=8mm,yshift=0mm] (rsright) {};
  \node[state, below of=rs] (rst) {$\{r,s,t\}$};
  \node[empty, right of=rst,node distance=16mm] (rstright) {};

  \node[state, accepting, right of=rs, node distance=35mm] (r1) {$\big(\{r{:}1, s{:}5, t{:}3\}, \emptyset, 0\big)$};
  \node[empty, below of=r1] (r1below) {};
  % \node[state, accepting, below of=r1] (r3) {$\big(\{r\mapsto 1, s\mapsto 3\}, \emptyset, 0\big)$};
  % \node[state, accepting, below of=r3] (r2) {$\big(\{r\mapsto 1, s\mapsto 1\}, \emptyset, 0\big)$};

  % \node[state, thick, red, cross out, right of=rs, node distance=35mm, minimum size=8mm] (cr1) {};
  %\node[state, thick, red, cross out, below of=r1, minimum size=8mm] (cr2) {};

  % \node[state, white, fill=white!20, below of=r1, node distance=16mm, minimum width=40mm, minimum height=12mm] (sink) {};

  \path[->] (q) edge  node [left] {$a$} (rs)
        (rs) edge  node [left] {$a$} (rst)
        (rst) edge[loop below]  node {$a$} (rst)
        (rs) edge  node {$a$} (r1);

  \begin{pgfonlayer}{background}
  \draw (r1.south west) edge[line width=2mm, draw=red!40] (r1.north east);
  \draw (r1.north west) edge[line width=2mm, draw=red!40] (r1.south east);
  \end{pgfonlayer}

  \begin{pgfonlayer}{background}
    \path[->,dashed,gray] (q)  edge node {$a$} (qright)
      (rs) edge node {$a$} (rsright)
      (rst) edge node {$a$} (rstright)
      (r1) edge  node {$a$} (r1below);
    % \path[dashed,gray] (q)  edge [bend right]  node {} (sink)
    %   (rs) edge [bend right] node {} (sink)
    %   (rst) edge [bend right] node {} (sink)
    %   (r1) edge  node {} (sink);
  \end{pgfonlayer}

\end{tikzpicture}
		}
		\caption{\centering}
		\label{fig:succ_rank_fine}
	\end{subfigure}
  \hspace*{1cm}
\caption{
  (\subref{fig:succ_rank_coarse}) Illustration of \succrankred reduction
  ($\condcoarse$), focusing on the transitions from the waiting to the tight
  part.
  (\subref{fig:succ_rank_fine}) Illustration of \succrankred reduction
  ($\condfine$), focusing on one particular macrostate.
}
\end{figure}
}

\begin{figure}[t]
  \hfill
	\begin{subfigure}[b]{0.3\textwidth}
		\scalebox{0.90}{
    \hspace*{-5mm}
	  \begin{tikzpicture}[->,>=stealth',shorten >=0pt,auto,node distance=1.2cm,
                    scale=0.8,transform shape,initial text={}]
  \tikzstyle{every state}=[inner sep=3pt,minimum size=5pt]
  \tikzstyle{empty}=[]
  \tikzstyle{initstate}=[fill=yellow!30]

  \node[state,initial,initstate] (q) {$q$};
  \node[state,right of=q] (r) {$r$};
  \node[state,initial,initstate,right of=r, xshift=8mm] (s) {$s$};
  \node[state, right of=s] (t) {$t$};

  \path (q) edge  node {$a$} (r)
        (r) edge[loop right]  node[above,pos=0.3] {$a$} (r)
        (s) edge  node {$a$} (t);

\end{tikzpicture}
		}

	  \vspace{7mm}
		\scalebox{0.90}{
    \hspace*{-5mm}
	  \begin{tikzpicture}[>=stealth',shorten >=0pt,auto,node distance=1.2cm,
                    scale=0.8,transform shape,initial text={}]
  \tikzstyle{every state}=[inner sep=3pt,minimum size=5pt,rectangle,rounded corners=1mm]
  \tikzstyle{empty}=[]
  \tikzstyle{initstate}=[fill=yellow!30]

  \node[state,initial,initstate] (qs) {$\{q,s\}$};
  \node[state, below of=qs] (rt) {$\{r,t\}$};
  \node[state, below of=rt] (r) {$\{r\}$};

  \node[state, accepting, above of=qs,xshift=4mm] (r1) {$\big(\{r{:}3, t{:}1\}, \emptyset, 0\big)$};
  \node[state, accepting, right of=r1, node distance=32mm] (r3) {$\big(\{r{:}1, t{:}3\}, \emptyset, 0\big)$};
  \node[state, accepting, below of=r3] (r2) {$\big(\{r{:}1, t{:}1\}, \emptyset, 0\big)$};
  \node[state, accepting, below of=r2] (r4) {$\big(\{r{:}1, t{:}0\}, \emptyset, 0\big)$};
  \node[state, accepting, below of=r4] (r5) {$\big(\{r{:}0, t{:}1\}, \emptyset, 0\big)$};
  \node[state, accepting, below of=r5] (r6) {$\big(\{r{:}1\}, \emptyset, 0\big)$};

  % \node[state, thick, red, cross out, right of=q, node distance=35mm, minimum size=8mm] (cr1) {};
  % \node[state, thick, red, cross out, below of=r1, minimum size=8mm] (cr2) {};

  \begin{pgfonlayer}{background}
  \draw (r1.south west) edge[line width=2mm, draw=red!40] (r1.north east);
  \draw (r1.north west) edge[line width=2mm, draw=red!40] (r1.south east);

  \draw (r3.south west) edge[line width=2mm, draw=red!40] (r3.north east);
  \draw (r3.north west) edge[line width=2mm, draw=red!40] (r3.south east);
  \end{pgfonlayer}

  \path[->] (qs) edge  node [left] {$a$} (rt)
        (rt) edge  node [left] {$a$} (r)
        (r) edge[loop left]  node[below,pos=0.2] {$a$} (r)
        (qs) edge  node[left] {$a$} (r1)
        (qs) edge [above]  node {$a$} (r2)
        (qs) edge[bend right=15]  node[above,pos=0.5] {$a$} (r4)
        (qs) edge[bend right=20]  node[above,pos=0.5] {$a$} (r5)
        (qs) edge  node[above,pos=0.3] {$a$} (r3)
        (rt) edge[bend right=10]  node [above] {$a$} (r6)
        (r) edge[bend right=10]  node [above] {$a$} (r6);

  % \begin{pgfonlayer}{background}
  %   \node[state, white, fill=white!20, below of=r2, node distance=14mm, xshift=-5mm, minimum width=35mm, minimum height=8mm] (sink) {};
  %   \path[dashed,gray] (q)  edge [bend right]  node {} (sink)
  %     (rs) edge [bend right] node {} (sink)
  %     (t) edge [bend right] node {} (sink)
  %     (r1) edge [bend left]  node {} (sink)
  %     (r2) edge  node {} (sink)
  %     (r3) edge [bend left]  node {} (sink);
  % \end{pgfonlayer}
  %

\end{tikzpicture}
		}
		\caption{\centering}
		\label{fig:succ_rank_coarse}
	\end{subfigure}
  \hfill
  \hfill
	\begin{subfigure}[b]{0.3\textwidth}
		\scalebox{0.90}{
    \hspace*{-5mm}
	  \begin{tikzpicture}[->,>=stealth',shorten >=0pt,auto,node distance=1.2cm,
                    scale=0.8,transform shape,initial text={}]
  \tikzstyle{every state}=[inner sep=3pt,minimum size=5pt]
  \tikzstyle{empty}=[]
  \tikzstyle{initstate}=[fill=yellow!30]

  \node[state,initial,initstate] (q) {$q$};
  \node[state,right of=q, yshift=6mm] (r) {$r$};
  \node[state,below of=r] (s) {$s$};
  \node[state,right of=r] (t) {$t$};

  \path (q) edge  node {$a$} (r)
        (q) edge  node {$a$} (s)
        (r) edge  node {$a$} (t)
        (s) edge [loop right]  node {$a$} (s)
        (r) edge[loop above]  node {$a$} (r);

\end{tikzpicture}
		}

	  \vspace{7mm}
		\scalebox{0.90}{
    \hspace*{-5mm}
	  \begin{tikzpicture}[>=stealth',shorten >=0pt,auto,node distance=1.2cm,
                    scale=0.8,transform shape,initial text={}]
  \tikzstyle{every state}=[inner sep=3pt,minimum size=5pt,rectangle,rounded corners=1mm]
  \tikzstyle{empty}=[]
  \tikzstyle{initstate}=[fill=yellow!30]

  \node[state,initial,initstate] (q) {$\{q\}$};
  \node[empty, right of=q] (qright) {};
  \node[state, below of=q] (rs) {$\{r,s\}$};
  \node[empty, below right of=rs,xshift=8mm,yshift=0mm] (rsright) {};
  \node[state, below of=rs] (rst) {$\{r,s,t\}$};
  \node[empty, right of=rst,node distance=16mm] (rstright) {};

  \node[state, accepting, right of=rs, node distance=35mm] (r1) {$\big(\{r{:}1, s{:}5, t{:}3\}, \emptyset, 0\big)$};
  \node[empty, below of=r1] (r1below) {};
  % \node[state, accepting, below of=r1] (r3) {$\big(\{r\mapsto 1, s\mapsto 3\}, \emptyset, 0\big)$};
  % \node[state, accepting, below of=r3] (r2) {$\big(\{r\mapsto 1, s\mapsto 1\}, \emptyset, 0\big)$};

  % \node[state, thick, red, cross out, right of=rs, node distance=35mm, minimum size=8mm] (cr1) {};
  %\node[state, thick, red, cross out, below of=r1, minimum size=8mm] (cr2) {};

  % \node[state, white, fill=white!20, below of=r1, node distance=16mm, minimum width=40mm, minimum height=12mm] (sink) {};

  \path[->] (q) edge  node [left] {$a$} (rs)
        (rs) edge  node [left] {$a$} (rst)
        (rst) edge[loop below]  node {$a$} (rst)
        (rs) edge  node {$a$} (r1);

  \begin{pgfonlayer}{background}
  \draw (r1.south west) edge[line width=2mm, draw=red!40] (r1.north east);
  \draw (r1.north west) edge[line width=2mm, draw=red!40] (r1.south east);
  \end{pgfonlayer}

  \begin{pgfonlayer}{background}
    \path[->,dashed,gray] (q)  edge node {$a$} (qright)
      (rs) edge node {$a$} (rsright)
      (rst) edge node {$a$} (rstright)
      (r1) edge  node {$a$} (r1below);
    % \path[dashed,gray] (q)  edge [bend right]  node {} (sink)
    %   (rs) edge [bend right] node {} (sink)
    %   (rst) edge [bend right] node {} (sink)
    %   (r1) edge  node {} (sink);
  \end{pgfonlayer}

\end{tikzpicture}
		}
		\caption{\centering}
		\label{fig:succ_rank_fine}
	\end{subfigure}
  \hspace*{1cm}
\caption{
  (\subref{fig:succ_rank_coarse}) Illustration of \succrankred reduction
  ($\condcoarse$), focusing on the transitions from the waiting to the tight
  part.
  (\subref{fig:succ_rank_fine}) Illustration of \succrankred reduction
  ($\condfine$), focusing on one particular macrostate.
}
\end{figure}

Our next optimization is used to reduce the maximum considered ranking of
a~macrostate in the tight part of~$\but = \algschewe(\aut)$.
For a~given macrostate, the number of tight rankings that can occur
within the macrostate rises combinatorially with the macrostate's maximum rank
(in particular, the number of tight rankings for a~given set of states
corresponds to the Stirling number of the second kind of the maximum
rank~\cite{FriedgutKV06}).
It is hence desirable to reduce the maximum considered rank as much as
possible.

%\figsuccrankcoarse
The idea of our optimization called \succrankred is the following.
Suppose we have a~macrostate $\sofi$ from the tight part of~$\but$.
Further, assume that the maximum number of non-accepting states in the $S$-component of a~macrostate that
is infinitely often reachable from $\sofi$ is~$\maxinfreachof S$.
Then, we know that a~super-tight accepting run that goes through~$\sofi$ will
never need a~rank higher than $2 \maxinfreachof S - 1$ (any accepting state will
be assigned an even rank, so we can omit them).
Therefore, if the rank of~$f$ is higher than $2 \maxinfreachof S - 1$, we can
safely discard~$\sofi$ (since there will be a~super-tight accepting run that
goes over $(S,O',f',i')$ with $f' < f$).
This part of the optimization is called \emph{coarse}.

Moreover, let~$q \in S$ and let $\mininfreachof{\{q\}}$ be the smallest size of
a~set of states (again without accepting states) reachable from~$q$ over some (infinite) word infinitely often.
Then, we know that those states will have a~rank bounded by the rank of~$f(q)$,
so there are only (at most) $\maxinfreachof S - \mininfreachof{\{q\}}$
states whose rank can be higher than~$f(q)$.
Therefore, the rank of~$f$, which
is tight, can be at most $f(q) + 2(\maxinfreachof S - \mininfreachof{\{q\}})$.
We call this part of the optimization \emph{fine}.

%!!!!!!!!!!!!!!!!!!!!
% FORCED NEW PAGE
%!!!!!!!!!!!!!!!!!!!!
%\newpage

We now formalize the intuition.
Let us fix a~BA $\aut = (Q, \delta, I, F)$.
Then, let us consider a~BA $R_\aut = (2^Q, \delta_R, \emptyset, \emptyset)$,
with $\delta_R = \{R \ltr{a} S \mid S = \delta(R, a)\}$, which is tracking
\emph{reachability} between set of all states of~$\aut$ (we only focus on its
structure and not the language).
Note that $R_\aut$ is deterministic and complete.
Further, given $S \subseteq Q$, let us use $\sccof S \subseteq
2^{2^{Q}}$ to denote the set of all \emph{strongly connected components
reachable from~$S$ in~$R_\aut$}.
We will use $\infreachof S$ to denote the set of states~$\bigcup \sccof S$,
i.e., the set of states such that there is an infinite path in~$R_\aut$
starting in~$S$ that passes through a~given state infinitely many times.
%
% Then, let $\graphof \aut = \langle V_\aut,
% E_\aut \rangle$ be a~directed graph where the set of vertices is given as
% $V_\aut = 2^Q$ and the set of edges as $E_\aut = \{(S, S') \mid S \subseteq Q
% \land \exists a\in \Sigma: S' = \delta(S, a)\}$.
% Further, given $S \in V_\aut$, let us use $\sccof S \subseteq
% 2^{V_\aut}$ to denote the set of all \emph{strongly connected components
% reachable from~$S$ in~$\graphof \aut$} (over edges in~$E_\aut$).
% We will use $\infreachof S$ to denote the set of vertices~$\bigcup \sccof S$,
% i.e., the set of vertices such that there is an infinite path in~$\graphof
% \aut$ starting in~$S$ that passes through a~given vertex infinitely many times.
% For~$S \subseteq Q$,
We define the maximum and minimum sizes of macrostates
reachable infinitely often from~$S$:
%
% \vspace{-2mm}
\begin{align*}
  \hspace*{-3mm}\maxinfreachof S & {}= \max\{ |R\setminus F| : R \in \infreachof S\}\qquad\text{and}\qquad
  \mininfreachof S {}= \min\{ |R\setminus F| : R \in \infreachof S\}.
\end{align*}%\\[-8mm]

\noindent
For a~macrostate $\sofi$, we define
%
%\vspace{-1mm}
$
\condcoarse(\sofi) \defiff \rankof f \leq 2\maxinfreachof S - 1.
$
If $\sofi$ does not satisfy $\condcoarse$, we can omit it from
the output of~$\algschewe(\aut)$ (as allowed by
\cref{lem:suckrank}).
See \cref{fig:succ_rank_coarse} for an example of such a~macrostates.
For instance, macrostate $(\{r{:}3, t{:}1\}, \emptyset, 0)$ can be removed since
its rank is 3 and $\maxinfreachof{\{r,t\}} = 1$, so $3 \not\leq 2
\maxinfreachof{\{r,t\}}-1$.

Moreover, we also define the condition
%
%\vspace{-1mm}
\begin{equation}
  \condfine(\sofi) \defiff
  \rankof f \leq \min\{f(q) + 2(\maxinfreachof S - \mininfreachof{\{q\}}) \mid
  {q \in S} \}.
\end{equation}

%\figsuccrankfine
\noindent
% Again, if $\sofi$ does not satisfy~$\condfine$, it can be omitted.
Again, we can omit $\sofi$ if it does not satisfy~$\condfine$.
See \cref{fig:succ_rank_fine} for an example of such a~macrostate.
Note that the rank of $(\{r{:}1,s{:}5,t{:}3\}, \emptyset, 0)$
is 5,
$\maxinfreachof{\{r,s,t\}} = 3$ and
$\mininfreachof{\{r\}} = 2$,
$\mininfreachof{\{s\}} = 1$,
$\mininfreachof{\{t\}} = 0$.
Then, $\min\{f(r) + 2(3-2), f(s) + 2(3-1), f(t) + 2(3-0)\} = \min\{1+2,5+4,
3+6\} = 3$, so the macrostate does not satisfy~$\condfine$
and can be removed.

We emphasize that~$\condcoarse$ and~$\condfine$ are incomparable.
For example, the macrostates removed due to~$\condcoarse$ in
\cref{fig:succ_rank_coarse} satisfy~$\condfine$ (since, e.g., $3 \leq \min\{3 +
2(1-1), 1+2(1-0)\}$) and the macrostate removed due to~$\condfine$ in
\cref{fig:succ_rank_fine} satisfies~$\condcoarse$ (since $5 \leq 2 \cdot 3 - 1$).

Putting the conditions together, we define the predicate
%
%\vspace{-2mm}
\begin{equation}\label{eq:succrankred}
  \succrankred(\sofi) \defiff \condcoarse(\sofi) \land \condfine(\sofi) .
\end{equation}\\[-9mm]

\noindent
We abuse notation and use $\succrankred(\aut)$ to denote the output of
$\algschewe(\aut) = (Q', \delta', I', F')$ where the states from the tight part
of~$Q'$ are restricted to those that satisfy $\succrankred$.

\begin{restatable}{lemma}{lemSuccRank}\label{lem:suckrank}
	Let $\aut$ be a BA. Then $\langof{\succrankred(\aut)} = \langof{\algschewe(\aut)}$.
\end{restatable}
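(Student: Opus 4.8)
The plan is to prove the two language inclusions separately, with the nontrivial direction being $\langof{\algschewe(\aut)} \subseteq \langof{\succrankred(\aut)}$. The reverse inclusion $\langof{\succrankred(\aut)} \subseteq \langof{\algschewe(\aut)}$ is immediate, since $\succrankred(\aut)$ only \emph{removes} states (and their incident transitions) from $\algschewe(\aut)$, so every accepting run of $\succrankred(\aut)$ is already an accepting run of $\algschewe(\aut)$ over the same word.

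For the harder direction, I would fix a word $\word \in \langof{\algschewe(\aut)} = \overline{\langof \aut}$ and invoke \cref{lem:super-tight-run} to obtain a super-tight accepting run $\rho = S_0 \ldots S_m \sofiof{m+1} \sofiof{m+2} \ldots$ of $\but = \algschewe(\aut)$ on $\word$. The goal is to show that every macrostate $\sofiof k$ for $k > m$ already satisfies the predicate $\succrankred$, so that $\rho$ survives entirely in $\succrankred(\aut)$ and witnesses $\word \in \langof{\succrankred(\aut)}$. Because $\rho$ is super-tight, for every $k > m$ and every $q \in S_k$ we have $f_k(q) = \rankofin{q,k}\word$, i.e.\ the ranks in $\rho$ coincide with the run-DAG ranks. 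The key structural fact I would extract from \cref{lem:trunks} is that there is a trunk $\tau = C_{m+1} C_{m+2} \ldots$ in which every state $q_k \in C_k$ has a successor $q_{k+1} \in C_{k+1}$ of the same rank; iterating this gives, for each rank value, an infinite rank-preserving path in the run DAG, hence an infinite path in the reachability automaton $R_\aut$ whose sets of states visited infinitely often lie in $\sccof{S_k}$.

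The main technical work is bounding the rank of $f_k$ using $\infreach$. For $\condcoarse$: the tight core $C_k$ contains exactly $(\rankof{f_k}+1)/2$ states, all with distinct odd (hence non-accepting) ranks, and by the trunk property these persist rank-by-rank along $\tau$; mapping the trunk into $R_\aut$, the sets it visits infinitely often are reachable SCCs from $S_k$ and must each contain at least $(\rankof{f_k}+1)/2$ non-accepting states carrying the distinct odd ranks, so $\maxinfreachof{S_k} \geq (\rankof{f_k}+1)/2$, which rearranges to $\rankof{f_k} \leq 2\maxinfreachof{S_k} - 1$. For $\condfine$: fix $q \in S_k$ with rank $f_k(q)$; the states with rank strictly above $f_k(q)$ in the trunk number at most $(\rankof{f_k} - f_k(q))/2$, and since the states of rank $\leq f_k(q)$ reachable infinitely often from $q$ have size at least $\mininfreachof{\{q\}}$ while the total infinitely-reachable size is at most $\maxinfreachof{S_k}$, the count of higher-ranked states is at most $\maxinfreachof{S_k} - \mininfreachof{\{q\}}$, giving $\rankof{f_k} \leq f_k(q) + 2(\maxinfreachof{S_k} - \mininfreachof{\{q\}})$ for every $q \in S_k$, which is exactly $\condfine$.

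The step I expect to be the main obstacle is making precise the translation from the trunk of the super-tight run (which lives in $\aut$ and its run DAG) into guarantees about the \emph{reachability} automaton $R_\aut$ and its SCCs. The trunk gives infinite rank-preserving paths in $\aut$, but $\infreach$, $\maxinfreach$, and $\mininfreach$ are defined via the deterministic powerset structure $R_\aut$; I would need to argue that the sequence $S_{m+1} S_{m+2} \ldots$ of $S$-components is exactly the run of $R_\aut$ on $\wordof{m+1:\omega}$ (by determinism of $R_\aut$ and $S_{k+1} = \delta(S_k, \wordof k)$), so that the SCCs it eventually settles into are among those counted by $\infreachof{S_k}$, and that the persistence of the $(\rankof{f_k}+1)/2$ distinct odd ranks along the trunk forces each such recurrent set to contain that many non-accepting states. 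Care is also needed to treat the two conditions uniformly and to confirm that removing non-surviving macrostates does not disturb the surviving run $\rho$.
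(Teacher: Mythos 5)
Your proposal is correct and takes essentially the same route as the paper's own proof: both fix a super-tight accepting run (\cref{lem:super-tight-run}) and show that every tight-part macrostate on it satisfies $\condcoarse$ and $\condfine$ by counting non-accepting states in the sets visited infinitely often by the deterministic powerset automaton $R_\aut$, so the run survives the pruning. Your extra detour through \cref{lem:trunks} is sound but not needed for this lemma---tightness of every ranking along the run (with its constant rank) already provides the $(\rankof{f_k}+1)/2$ distinct odd-ranked, non-accepting states that the counting argument requires.
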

\begin{proof}
	The inclusion $\langof{\succrankred(\aut)} \subseteq
	\langof{\algschewe(\aut)}$ is clear. Now we look at the other direction.
	Consider some $\word \in\langof{\algschewe(\aut)}$. Then, there is an
	accepting super-tight run $\rho = S_0\dots S_m \sofiof{m+1}\dots$
	of $\algschewe(\aut)$ over~$\word$.
  Consider $k > m$ and a~macrostate $\sofiof k$.
  The maximum rank of this macrostate is bounded by $2\maxinfreachof{S_k} - 1$
  because~$\maxinfreachof{S_k}$ is the largest size of the $S$-component (without final states) of
  a~macrostate reachable from~$S_k$
  and, therefore, removing macrostates that do not satisfy~$\condcoarse$
  from~$\algschewe(\aut)$ will not affect this run.

  Next, we prove the correctness of removing states from
  $\algschewe(\aut)$ using~$\condfine$.
  Consider a~set of states~$T \subseteq Q$; we will use $\rho_T$ to denote the
  run $\rho_T = T_0 T_1 T_2 \ldots$ of~$R_\aut$ from~$T$ ($= T_0$) over the
  word~$\wordof{k:\omega}$.
  Since~$R_\aut$ is deterministic and complete, there is exactly one such run.
  Given a~state~$q \in S_k$,
  let~$a$ be the smallest size of a~set of states (without final states) that occurs
  infinitely often in~$\rho_{\{q\}}$ and~$b$ be the largest size of a~set of
  states that occurs infinitely often in~$\rho_{S_k}$ (again without final states).
  From the definition of~$\mininfreachof{\cdot }$ and~$\maxinfreachof{\cdot }$, it
  holds that
  \begin{equation}
    \textstyle
    \mininfreachof{\{q\}} ~\leq~
    a ~\leq~
    b ~\leq~
    \maxinfreachof{S_k}.
  \end{equation}
  Since we can reach~$a$ different states from~$q$, the ranks of these states
  need to be less or equal to~$f_k(q)$ (no successor of $q$ can be given a~rank
  higher than~$f_k(q)$).
  Further, since we can reach at most~$b$ different states from~$S_k$, there
  are at some infinitely often occurring macrostate of~$\rho$ in the worst case
  only $b-a$ states that can have and odd rank greater than~$f_k(q)$.
  Due to the tightness of all macrostates in the tight part of~$\rho$, we can
  conclude that the maximal rank of $f_k$ can be bounded by
  $f_k(q) + 2(\maxinfreachof{S_k} - \mininfreachof{\{q\}})$.
  Therefore, a~macrostate where $\condfine$ does not hold will not be in
  a~super-tight run, so removing those macrostates does not affect the language
  of~$\algschewe(\aut)$.\qedhere
\end{proof}

%*******************************************************************************
\vspace{-0.0mm}
\subsection{Rank Simulation}\label{sec:ranksim}
\vspace{-0.0mm}
%*******************************************************************************

\newcommand{
\begin{figure}[t]
\begin{subfigure}[b]{0.6\textwidth}
  \begin{minipage}{4cm}
    \vspace*{-10mm}
		\scalebox{0.90}{
	  \begin{tikzpicture}[->,>=stealth',shorten >=0pt,auto,node distance=1.2cm,
                    scale=0.8,transform shape,initial text={}]
  \tikzstyle{every state}=[inner sep=3pt,minimum size=5pt]
  \tikzstyle{empty}=[]
  \tikzstyle{initstate}=[fill=yellow!30]

  \node[state,initial,initstate] (qini) {$q_0$};
  \node[state,right of=qini] (q1) {$q_1$};
  \node[state,right of=q1, yshift=6mm] (q2) {$q_2$};
  \node[state,accepting,below of=q2] (q3) {$q_3$};
  \node[state,accepting,right of=q2] (q4) {$q_4$};

  \path (q1) edge  node {$a$} (q2)
        (qini) edge  node {$a$} (q1)
        (q1) edge  node {$a$} (q3)
        (q2) edge  node {$a$} (q4)
        (q4) edge [loop right]  node {$a$} (q4)
        (q3) edge[loop right]  node {$a$} (q3);

  \node[state,initial,initstate,below of=qini, node distance=1.8cm] (rini) {$r_0$};
  \node[state, right of=rini] (r1) {$r_1$};
  \node[state,right of=r1] (r2) {$r_2$};
  \node[state,accepting,right of=r2] (r3) {$r_3$};

  \path (r1) edge  node {$a$} (r2)
        (rini) edge  node {$a$} (r1)
        (r2) edge  node {$a$} (r3)
        (r3) edge [loop right]  node {$a$} (r3);

\end{tikzpicture}}
	  % \vspace{7mm}
  \end{minipage}
  \begin{minipage}{4cm}
		\scalebox{0.90}{
    % \hspace*{-5mm}
	  \begin{tikzpicture}[>=stealth',shorten >=0pt,auto,node distance=1.2cm,
                    scale=0.8,transform shape,initial text={}]
  \tikzstyle{every state}=[inner sep=3pt,minimum size=5pt,rectangle,rounded corners=1mm]
  \tikzstyle{empty}=[]
  \tikzstyle{initstate}=[fill=yellow!30]

  \node[state,initial,initstate] (q0r0) {$\{q_0, r_0\}$};
  \node[state,below of=q0r0] (q1r1) {$\{q_1, r_1\}$};
  \node[state, below of=q1r1] (q2q3r2) {$\{q_2,q_3,r_2\}$};
  \node[state, below of=q2q3r2] (q3q4r3) {$\{q_3,q_4,r_3\}$};

  \node[state, accepting, right of=q0r0, node distance=35mm] (m1) {$\big(\{q_1{:}1, r_1{:}3\}, \emptyset, 0\big)$};
  \node[state, accepting, below of=m1] (m2) {$\big(\{q_1{:}3, r_1{:}1\}, \emptyset, 0\big)$};
  \node[state, accepting, below of=m2] (m3) {$\big(\{q_1{:}1, r_1{:}1\}, \emptyset, 0\big)$};
  \node[empty,minimum size=30pt,xshift=-5.5mm] at(q3q4r3.east) (r345inv) {};

  \node[state, white, fill=white!20, below of=m3, node distance=16mm, minimum width=40mm, minimum height=12mm] (sink) {};

  \path[->] (q1r1) edge  node [left] {$a$} (q2q3r2)
        (q0r0) edge  node [left] {$a$} (q1r1)
        (q2q3r2) edge  node [left] {$a$} (q3q4r3)
        (q0r0) edge [] node {$a$} (m1)
        (q0r0) edge [bend right=10] node {$a$} (m2)
        (q0r0) edge [bend right=16] node[yshift=-1mm] {$a$} (m3)
        (r345inv) edge[loop right]  node {$a$} (r345inv);

  % \begin{pgfonlayer}{background}
  %   \path[dashed,gray] (q1r1)  edge [bend right]  node {} (sink)
  %     (q2q3r2) edge [bend right] node {} (sink)
  %     (q3q4r3) edge [bend right] node {} (sink)
  %     (m1) edge  node {} (sink)
  %     (q0r0) edge  node {} (sink)
  %     (q1r1) edge  node {} (sink);
  % \end{pgfonlayer}

  \begin{pgfonlayer}{background}
    \draw (m1.south west) edge[line width=2mm, draw=red!40] (m1.north east);
    \draw (m1.north west) edge[line width=2mm, draw=red!40] (m1.south east);

    \draw (m2.south west) edge[line width=2mm, draw=red!40] (m2.north east);
    \draw (m2.north west) edge[line width=2mm, draw=red!40] (m2.south east);
  \end{pgfonlayer}

\end{tikzpicture}
		}
  \end{minipage}
	\vspace*{-3mm}
	\caption{\centering}
	\label{fig:ranksim}
\end{subfigure}
\hfill
\begin{subfigure}[b]{0.2\textwidth}
	% \vspace*{-26mm}
	% \hspace*{-2mm}
		\scalebox{0.90}{
    \hspace*{-0mm}
	  \begin{tikzpicture}[->,>=stealth',shorten >=0pt,auto,node distance=1.2cm,
                    scale=0.8,transform shape,initial text={}]
  \tikzstyle{every state}=[inner sep=3pt,minimum size=5pt]
  \tikzstyle{empty}=[]
  \tikzstyle{initstate}=[fill=yellow!30]

  \node[state,initial,initstate] (q) {$q$};
  \node[state,initial,initstate,right of=q,node distance=14mm] (r) {$r$};

  \path (q) edge[loop above]  node {$a$} (q)
        (r) edge[loop above]  node {$a$} (r);

\end{tikzpicture}
		}

	  \vspace{7mm}
		\scalebox{0.90}{
    \hspace*{-0mm}
	  \begin{tikzpicture}[>=stealth',shorten >=0pt,auto,node distance=1.2cm,
                    scale=0.8,transform shape,initial text={}]
  \tikzstyle{every state}=[inner sep=3pt,minimum size=5pt,rectangle,rounded corners=1mm]
  \tikzstyle{empty}=[]
  \tikzstyle{initstate}=[fill=yellow!30]
  \tikzstyle{wobbly}=[decorate, decoration={snake,amplitude=.2mm,segment length=2mm,post length=1mm}]

  \node[state,initial,initstate] (qr) {$\{q,r\}$};
  \node[state, accepting, below of=qr] (r1) {$\big(\{q{:}3, r{:}1\}, \emptyset, 0\big)$};
  \node[state, accepting, below of=r1] (r2) {$\big(\{q{:}1, r{:}1\}, \emptyset, 0\big)$};

  \node[empty,right of=qr,xshift=10mm,yshift=-5mm] (b1) {};
  \node[empty,right of=r1,xshift=10mm,yshift=-5mm] (b2) {};

  % \node[state, thick, red, cross out, below of=r1, minimum size=8mm] (cr1) {};

  \node[state, white, fill=white!20, right of=r2, node distance=23mm, minimum width=5mm] (sink) {};

  \node[empty,minimum size=30pt,xshift=-5.5mm] at(r1.east) (r1inv) {};

  \path[->] (qr) edge [loop right]  node {$a$} (rs)
        (qr) edge  node [left] {$a$} (r1)
        (r1inv) edge[loop right]  node {$a$} (r1inv)
        (r1) edge[wobbly]  node {$a$} (r2);

  \begin{pgfonlayer}{background}
    \draw (r2.south west) edge[line width=2mm, draw=red!40] (r2.north east);
    \draw (r2.north west) edge[line width=2mm, draw=red!40] (r2.south east);
  \end{pgfonlayer}

  \begin{pgfonlayer}{background}
    \path[->,dashed,gray] (qr)  edge [bend right=20]  node {} (b1)
      (r1) edge [bend right=6] node {} (b2)
      (r2) edge node {} (sink);
  \end{pgfonlayer}

\end{tikzpicture}
		}
	\caption{\centering}
	\label{fig:rank_restr}
\end{subfigure}

\caption{
  (\subref{fig:ranksim}) Illustration of $\ranksimred'$.
  (\subref{fig:rank_restr}) Illustration of \rankrestr. }
  \vspace{-2mm}
\end{figure}
}[0]{
\begin{figure}[t]
\begin{subfigure}[b]{0.6\textwidth}
  \begin{minipage}{4cm}
    \vspace*{-10mm}
		\scalebox{0.90}{
	  \begin{tikzpicture}[->,>=stealth',shorten >=0pt,auto,node distance=1.2cm,
                    scale=0.8,transform shape,initial text={}]
  \tikzstyle{every state}=[inner sep=3pt,minimum size=5pt]
  \tikzstyle{empty}=[]
  \tikzstyle{initstate}=[fill=yellow!30]

  \node[state,initial,initstate] (qini) {$q_0$};
  \node[state,right of=qini] (q1) {$q_1$};
  \node[state,right of=q1, yshift=6mm] (q2) {$q_2$};
  \node[state,accepting,below of=q2] (q3) {$q_3$};
  \node[state,accepting,right of=q2] (q4) {$q_4$};

  \path (q1) edge  node {$a$} (q2)
        (qini) edge  node {$a$} (q1)
        (q1) edge  node {$a$} (q3)
        (q2) edge  node {$a$} (q4)
        (q4) edge [loop right]  node {$a$} (q4)
        (q3) edge[loop right]  node {$a$} (q3);

  \node[state,initial,initstate,below of=qini, node distance=1.8cm] (rini) {$r_0$};
  \node[state, right of=rini] (r1) {$r_1$};
  \node[state,right of=r1] (r2) {$r_2$};
  \node[state,accepting,right of=r2] (r3) {$r_3$};

  \path (r1) edge  node {$a$} (r2)
        (rini) edge  node {$a$} (r1)
        (r2) edge  node {$a$} (r3)
        (r3) edge [loop right]  node {$a$} (r3);

\end{tikzpicture}}
	  % \vspace{7mm}
  \end{minipage}
  \begin{minipage}{4cm}
		\scalebox{0.90}{
    % \hspace*{-5mm}
	  \begin{tikzpicture}[>=stealth',shorten >=0pt,auto,node distance=1.2cm,
                    scale=0.8,transform shape,initial text={}]
  \tikzstyle{every state}=[inner sep=3pt,minimum size=5pt,rectangle,rounded corners=1mm]
  \tikzstyle{empty}=[]
  \tikzstyle{initstate}=[fill=yellow!30]

  \node[state,initial,initstate] (q0r0) {$\{q_0, r_0\}$};
  \node[state,below of=q0r0] (q1r1) {$\{q_1, r_1\}$};
  \node[state, below of=q1r1] (q2q3r2) {$\{q_2,q_3,r_2\}$};
  \node[state, below of=q2q3r2] (q3q4r3) {$\{q_3,q_4,r_3\}$};

  \node[state, accepting, right of=q0r0, node distance=35mm] (m1) {$\big(\{q_1{:}1, r_1{:}3\}, \emptyset, 0\big)$};
  \node[state, accepting, below of=m1] (m2) {$\big(\{q_1{:}3, r_1{:}1\}, \emptyset, 0\big)$};
  \node[state, accepting, below of=m2] (m3) {$\big(\{q_1{:}1, r_1{:}1\}, \emptyset, 0\big)$};
  \node[empty,minimum size=30pt,xshift=-5.5mm] at(q3q4r3.east) (r345inv) {};

  \node[state, white, fill=white!20, below of=m3, node distance=16mm, minimum width=40mm, minimum height=12mm] (sink) {};

  \path[->] (q1r1) edge  node [left] {$a$} (q2q3r2)
        (q0r0) edge  node [left] {$a$} (q1r1)
        (q2q3r2) edge  node [left] {$a$} (q3q4r3)
        (q0r0) edge [] node {$a$} (m1)
        (q0r0) edge [bend right=10] node {$a$} (m2)
        (q0r0) edge [bend right=16] node[yshift=-1mm] {$a$} (m3)
        (r345inv) edge[loop right]  node {$a$} (r345inv);

  % \begin{pgfonlayer}{background}
  %   \path[dashed,gray] (q1r1)  edge [bend right]  node {} (sink)
  %     (q2q3r2) edge [bend right] node {} (sink)
  %     (q3q4r3) edge [bend right] node {} (sink)
  %     (m1) edge  node {} (sink)
  %     (q0r0) edge  node {} (sink)
  %     (q1r1) edge  node {} (sink);
  % \end{pgfonlayer}

  \begin{pgfonlayer}{background}
    \draw (m1.south west) edge[line width=2mm, draw=red!40] (m1.north east);
    \draw (m1.north west) edge[line width=2mm, draw=red!40] (m1.south east);

    \draw (m2.south west) edge[line width=2mm, draw=red!40] (m2.north east);
    \draw (m2.north west) edge[line width=2mm, draw=red!40] (m2.south east);
  \end{pgfonlayer}

\end{tikzpicture}
		}
  \end{minipage}
	\vspace*{-3mm}
	\caption{\centering}
	\label{fig:ranksim}
\end{subfigure}
\hfill
\begin{subfigure}[b]{0.2\textwidth}
	% \vspace*{-26mm}
	% \hspace*{-2mm}
		\scalebox{0.90}{
    \hspace*{-0mm}
	  \begin{tikzpicture}[->,>=stealth',shorten >=0pt,auto,node distance=1.2cm,
                    scale=0.8,transform shape,initial text={}]
  \tikzstyle{every state}=[inner sep=3pt,minimum size=5pt]
  \tikzstyle{empty}=[]
  \tikzstyle{initstate}=[fill=yellow!30]

  \node[state,initial,initstate] (q) {$q$};
  \node[state,initial,initstate,right of=q,node distance=14mm] (r) {$r$};

  \path (q) edge[loop above]  node {$a$} (q)
        (r) edge[loop above]  node {$a$} (r);

\end{tikzpicture}
		}

	  \vspace{7mm}
		\scalebox{0.90}{
    \hspace*{-0mm}
	  \begin{tikzpicture}[>=stealth',shorten >=0pt,auto,node distance=1.2cm,
                    scale=0.8,transform shape,initial text={}]
  \tikzstyle{every state}=[inner sep=3pt,minimum size=5pt,rectangle,rounded corners=1mm]
  \tikzstyle{empty}=[]
  \tikzstyle{initstate}=[fill=yellow!30]
  \tikzstyle{wobbly}=[decorate, decoration={snake,amplitude=.2mm,segment length=2mm,post length=1mm}]

  \node[state,initial,initstate] (qr) {$\{q,r\}$};
  \node[state, accepting, below of=qr] (r1) {$\big(\{q{:}3, r{:}1\}, \emptyset, 0\big)$};
  \node[state, accepting, below of=r1] (r2) {$\big(\{q{:}1, r{:}1\}, \emptyset, 0\big)$};

  \node[empty,right of=qr,xshift=10mm,yshift=-5mm] (b1) {};
  \node[empty,right of=r1,xshift=10mm,yshift=-5mm] (b2) {};

  % \node[state, thick, red, cross out, below of=r1, minimum size=8mm] (cr1) {};

  \node[state, white, fill=white!20, right of=r2, node distance=23mm, minimum width=5mm] (sink) {};

  \node[empty,minimum size=30pt,xshift=-5.5mm] at(r1.east) (r1inv) {};

  \path[->] (qr) edge [loop right]  node {$a$} (rs)
        (qr) edge  node [left] {$a$} (r1)
        (r1inv) edge[loop right]  node {$a$} (r1inv)
        (r1) edge[wobbly]  node {$a$} (r2);

  \begin{pgfonlayer}{background}
    \draw (r2.south west) edge[line width=2mm, draw=red!40] (r2.north east);
    \draw (r2.north west) edge[line width=2mm, draw=red!40] (r2.south east);
  \end{pgfonlayer}

  \begin{pgfonlayer}{background}
    \path[->,dashed,gray] (qr)  edge [bend right=20]  node {} (b1)
      (r1) edge [bend right=6] node {} (b2)
      (r2) edge node {} (sink);
  \end{pgfonlayer}

\end{tikzpicture}
		}
	\caption{\centering}
	\label{fig:rank_restr}
\end{subfigure}

\caption{
  (\subref{fig:ranksim}) Illustration of $\ranksimred'$.
  (\subref{fig:rank_restr}) Illustration of \rankrestr. }
  \vspace{-2mm}
\end{figure}
}

The next optimization \ranksimred is a~modification of optimization $\purgedi$
% from our previous work in~\cite{ChenHL19}.
from~\cite{ChenHL19}.
Intuitively, $\purgedi$ is based on the fact that if a~state~$p$ is directly
simulated by a~state~$r$, i.e., $p \dirsimby r$, then any macrostate $\sofi$ where
$f(p) > f(r)$ can be safely removed
(intuitively, any run from~$p$ can be simulated by a~run from~$r$, where the run
from~$r$ may contain more accepting states and so needs to decrease its
rank more times).
$\purgedi$ is compatible with \algschewe but,
unfortunately, it is incompatible with the \algmaxrank construction
(one of our further optimizations introduced in \cref{sec:maxrank}) since in \algmaxrank, several runs are represented by one
\emph{maximal} run (w.r.t.\ the ranks) and removing such a~run would also remove
the smaller runs. % (see \cref{sec:maxrank} for details).
We, however, change the condition and obtain a new reduction, which is
incomparable with $\purgedi$ but compatible with~\algmaxrank.

Consider the following relation of \emph{odd-rank simulation} on~$Q$
defined such that $p \ors r$ iff
%
%\vspace{-3mm}
\begin{equation}
    \forall\word\in\Sigma^{\omega}, \forall i \geq 0\colon
    (\rankofin{p,i} \word \text{ is odd} \land \rankofin{r,i} \word \text{ is odd})
    \limpl
		\rankofin{p,i} \word \leq \rankofin{r,i} \word .
\end{equation}\\[-4mm]
% \begin{equation}
% 	\begin{split}
% 		p \ors r \defiff &\mbox{ if $ \forall\word\in\Sigma^{\omega}, i \in \omega:
% 		\rank_\word(p,i)$ and $\rank_\word(r,i)$ are both odd}\\
% 		&\mbox{ then }\rank_\word(p,i) \leq \rank_\word(r,i)
% 	\end{split}
% \end{equation}
%
%
Intuitively,
if $p \ors r$ holds, then in any super-tight run and a~macrostate
$\sofi$ in such a~run, if $p,r \in S$ and both~$f(p)$ and~$f(r)$ are odd, then
it needs to hold that $f(p) \leq f(r)$.
Such a~reasoning can also be applied transitively ($\ors$
is by itself not transitive):
if, in addition, $t \in S$, the rank~$f(t)$ is odd, and $r \ors t$, then it also
needs to hold that $f(p) \leq f(t)$.

% \ol{tranzitivita - jen na lichych stavech}
% This relation is not transitive. However, for a given set of odd states (states
% that have definitely odd rank) $T$ we can employ transitive relation between the
% states from $T$ (we can iteratively compute a transitive closure of states in
% $T$). Using this relation we can restrict the maximal rank in a macrostate $(S,
% O, f, i)$. For each subset $T\subseteq S$ we compute rank simulation (with
% transitive states $T$). From this relation we can get maximal rank (maximal
% number of distinct classes of $\preceq$). Finally we take maximum among these
% values.

% This relation is not transitive. However, for a given set of odd states (states
% that have definitely odd rank) $T$ we can employ transitive relation between the
% states from $T$ (we can iteratively compute a transitive closure of states in
% $T$). Using this relation we can restrict the maximal rank in a macrostate $(S,
% O, f, i)$. For each subset $T\subseteq S$ we compute rank simulation (with
% transitive states $T$). From this relation we can get maximal rank (maximal
% number of distinct classes of $\preceq$). Finally we take maximum among these
% values.

Formally, given a~ranking~$f$,
let~$\orsf$ be a~modification of~$\ors$ defined as
\vspace{-1mm}
\begin{equation}\label{eq:orsf}
  p \orsf r \defiff f(p) \text{ is odd} \land f(r) \text{ is odd} \land p \ors r
\end{equation}\\[-5mm]
and $\orsfT$ be its transitive closure.
We use~$\orsfT$ to define the following condition:
%
% \begin{equation}
%   \begin{split}
%     \ranksimred(\sofi) \defiff {} &\forall \{q_1, \ldots, q_n\} \subseteq  S:
%     \big( (\forall q_i : f(q_i) \text{ is odd}) \land {} \\
%     & q_1 \ors q_2 \ors \cdots \ors q_n \big) \limpl f(q_1) \leq f(q_n).
%   \end{split}
% \end{equation}
\vspace{-1mm}
\begin{equation}\label{eq:ranksimred}
  \begin{split}
    \ranksimred(\sofi) \defiff {} &\forall p, r \in S\colon p \orsfT r \limpl f(p) \leq f(r).
  \end{split}
\end{equation}\\[-5mm]
Abusing the notation, let $\ranksimred(\aut)$ denotes the output of
$\algschewe(\aut) = (Q', \delta', I', F')$ where states from the tight part
of~$Q'$ are restricted to those that satisfy $\ranksimred$.

%\vspace{-1mm}
\begin{restatable}{lemma}{lemRankSimRed}
\label{lem:lemRankSimRed}
Let~$\aut$ be a~BA.
Then $\langof{\ranksimred(\aut)} = \langof{\algschewe(\aut)}$.
\end{restatable}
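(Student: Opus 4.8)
The plan is to prove the two inclusions separately. The inclusion $\langof{\ranksimred(\aut)} \subseteq \langof{\algschewe(\aut)}$ is immediate, since $\ranksimred(\aut)$ is obtained from $\algschewe(\aut)$ merely by deleting tight-part states, and deleting states can only remove accepting runs. For the converse, I would fix $\word \in \langof{\algschewe(\aut)}$ and invoke \cref{lem:super-tight-run} to obtain an accepting \emph{super-tight} run $\rho = S_0 \ldots S_m \sofiof{m+1}\sofiof{m+2}\ldots$ of $\algschewe(\aut)$ on $\word$. The goal then reduces to showing that $\rho$ survives the restriction, i.e.\ that every tight-part macrostate $\sofiof k$ with $k > m$ satisfies the predicate $\ranksimred$; the waiting-part states $S_0, \ldots, S_m$ lie in $Q_1$ and are never removed. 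Once this is established, $\rho$ is also a run of $\ranksimred(\aut)$ using exactly the same infinitely many $O$-emptying cut-points, hence remains accepting, which gives $\word \in \langof{\ranksimred(\aut)}$.

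The heart of the argument is therefore the claim that a super-tight macrostate satisfies $\ranksimred$. Fix $k > m$ and take $p, r \in S_k$ with $p \orsfT r$, the closure being understood with respect to $f_k$; I must show $f_k(p) \leq f_k(r)$. I would first settle the base case $p \orsf r$: by~\eqref{eq:orsf} this means $f_k(p)$ and $f_k(r)$ are both odd and $p \ors r$. Since $\rho$ is super-tight, $f_k(p) = \rankofin{p,k}\word$ and $f_k(r) = \rankofin{r,k}\word$, where $\rank_\word$ is the ranking of the run DAG~$\dagw$. Instantiating the definition of~$\ors$ at this particular word~$\word$ and position $i = k$, with both ranks odd, yields $\rankofin{p,k}\word \leq \rankofin{r,k}\word$, that is $f_k(p) \leq f_k(r)$.

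The step I expect to require the most care is lifting this from $\orsf$ to its transitive closure $\orsfT$. Given a chain $p = x_0 \orsf x_1 \orsf \cdots \orsf x_\ell = r$, I would observe that every intermediate $x_j$ actually lies in $S_k$: each $\orsf$-edge forces $f_k(x_j)$ to be odd, and since $f_k$ is $S_k$-tight it assigns the even rank~$0$ to every state outside $S_k$, so an odd $f_k$-value is possible only inside $S_k$. Consequently the super-tight identity $f_k(x_j) = \rankofin{x_j,k}\word$ is available at each node of the chain, and the base case applies to every consecutive pair, giving $f_k(x_0) \leq f_k(x_1) \leq \cdots \leq f_k(x_\ell)$ and hence $f_k(p) \leq f_k(r)$. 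This establishes the claim and thereby the lemma. The only genuinely delicate points are this membership observation and the fact that $\ors$ is quantified over \emph{all} words and positions, so that it may be soundly instantiated at the concrete pair $(\word, k)$ determined by~$\rho$.
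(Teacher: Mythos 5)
Your proposal is correct and takes essentially the same route as the paper's proof: both dispose of the easy inclusion by noting that only tight-part macrostates are deleted, and for the converse both invoke \cref{lem:super-tight-run} to obtain a super-tight accepting run and argue that every tight-part macrostate on it satisfies \ranksimred, so the run survives the restriction and remains accepting. The paper compresses that final step into a single ``clearly''; your elaboration---instantiating ${\ors}$ at the concrete pair $(\word, k)$, and using $S_k$-tightness to confine every ${\orsf}$-chain to $S_k$ so that the super-tight identity $f_k(q) = \rankofin{q,k}{\word}$ applies at each link---is exactly the justification the paper leaves implicit.
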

%\vspace{-1mm}
\begin{proof}
	The inclusion $\langof{\ranksimred(\aut)} \subseteq \langof{\algschewe(\aut)}$
	is clear.
  For the reverse direction, let $\alpha \in \langof{\algschewe(\aut)}$.
  Then, there is a super-tight run $\rho = S_1\dots S_m \sofiof {m+1} \dots $,
  i.e., for each $k > m$ and each $q\in S_k$ we have $f_k(q) =
  \rank_\word(q,k)$.
  Clearly, each macrostate of $\rho$ satisfies \ranksimred.
\end{proof}

From the definition of~$\ors$, it is not immediate how to compute it,
since it is defined over all infinite runs of~$\aut$ over all infinite words.
The computation of a~rich under-approximation of~$\ors$ will be the topic of the
rest of this section.
We first note that ${\dirsimby} \subseteq {\ors}$, which is a~consequence of the
following lemma.

%\vspace{-1mm}
\begin{lemma}[Lemma~7 in \cite{ChenHL19}]
  \label{lem:aplas}
Let~$p,r \in Q$ be such that~$p \dirsimby r$ and $\graphof \word = (V, E)$ be
the run DAG of~$\aut$ over~$\word$.
For all $i \geq 0$,  $((p,i) \in V \land (r, i) \in V) \limpl
\rankofin{p, i} \word \leq \rankofin{r,i} \word$.
\end{lemma}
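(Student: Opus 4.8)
The plan is to show that the direct simulation $\dirsimby$ \emph{propagates through} the ranking procedure, in the sense that every subgraph $\dagwiof{j}$ peeled off at stage~$j$ is closed under passing to a $\dirsimby$-larger vertex at the same level. Concretely, I would prove the following invariant by induction on~$j$: for every level~$\ell$ and all states $p' \dirsimby r'$ with $(p',\ell),(r',\ell)\in\dagw$, if $(p',\ell)\in\dagwiof{j}$ then $(r',\ell)\in\dagwiof{j}$. Granting this, the lemma is immediate: if $\rankofin{p,i}{\word}=j_1$ and $\rankofin{r,i}{\word}=j_0$ with $j_1>j_0$, then $(p,i)$ survives to stage $j_0+1$, so $(p,i)\in\dagwiof{j_0+1}$, and the invariant forces $(r,i)\in\dagwiof{j_0+1}$, contradicting that $(r,i)$ is removed at stage~$j_0$; the case of $\omega$-ranks is subsumed. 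Hence $\rankofin{p,i}{\word}\le\rankofin{r,i}{\word}$.

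The engine of the induction is a \emph{path-lifting} fact inside a fixed subgraph: if $p' \dirsimby r'$, $(r',\ell)\in\dagwiof{j}$, and $(p',\ell)\to(p'',\ell+1)$ is an edge of $\dagwiof{j}$, then there is $(r'',\ell+1)\in\dagwiof{j}$ with $r''\in\delta(r',\wordof{\ell})$ and $p'' \dirsimby r''$. This follows from the invariant \emph{at the same stage}~$j$: clause~(ii) of direct simulation yields $r''\in\delta(r',\wordof{\ell})$ with $p'' \dirsimby r''$; then $(r'',\ell+1)\in\dagw$ because $(r',\ell)$ lies on a run of~$\aut$ that can be extended by~$r''$; and the invariant at stage~$j$ upgrades this to $(r'',\ell+1)\in\dagwiof{j}$.

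For the induction itself, the base case $\dagwiof{0}=\dagw$ is trivial. In the step from stage $2k$ to $2k+1$ (deleting finite vertices), suppose $(p,\ell)\in\dagwiof{2k+1}$, so $(p,\ell)$ is infinite in $\dagwiof{2k}$; the hypothesis places $(r,\ell)\in\dagwiof{2k}$, and iterating path-lifting \emph{at stage $2k$} lifts arbitrarily long outgoing paths of $(p,\ell)$ to outgoing paths of $(r,\ell)$ in $\dagwiof{2k}$. Since each level contains at most $|Q|$ vertices, $(r,\ell)$ then reaches arbitrarily distant levels and is therefore infinite, so it survives into $\dagwiof{2k+1}$. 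In the step from $2k+1$ to $2k+2$ (deleting endangered vertices), suppose $(p,\ell)\in\dagwiof{2k+2}$, so $(p,\ell)$ reaches an accepting $(q,\ell')\in\dagwiof{2k+1}$; lifting this finite path from $(r,\ell)$ \emph{at stage $2k+1$} yields $(q',\ell')\in\dagwiof{2k+1}$ with $q \dirsimby q'$, and clause~(i) of direct simulation forces $q'\in F$, so $(r,\ell)$ is not endangered and survives into $\dagwiof{2k+2}$.

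The main obstacle is the apparent circularity: path-lifting inside $\dagwiof{j}$ relies on the vertex-preservation invariant inside $\dagwiof{j}$, and the invariant in turn relies on lifting. The resolution is precisely the stage indexing: at each inductive step the reachability arguments live in the \emph{previous} subgraph, so path-lifting is only ever invoked at the previous stage, which is exactly the induction hypothesis. Thus the two facts are never used circularly at one and the same stage, and the induction goes through.
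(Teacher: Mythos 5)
Your proof is correct. Note that the paper itself gives no proof of this statement---it is imported verbatim as Lemma~7 of~\cite{ChenHL19}---and your stage-indexed induction (the invariant that each peeled subgraph $\dagwiof{j}$ is closed under passing to $\dirsimby$-larger vertices at the same level, with path-lifting always invoked one stage behind to avoid circularity) is essentially the argument used in that reference, so there is nothing to add.
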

%\vspace{-1mm}

\noindent
We extend $\dirsimby$ into a~relation~$\orsr$, which is computed
statically on~$\aut$, and then show that ${\orsr} \subseteq {\ors}$.
The relation~$\orsr$ is defined recursively as the smallest binary relation over~$Q$
s.t.
\begin{inparaenum}[(i)]
  \item  ${\dirsimby} \subseteq {\orsr}$ and
  \item  for $p,r \in Q$, if~$\forall a \in \Sigma: (\delta(p,a)\setminus F)
    \orsraa (\delta(r,a)\setminus F)$, then $p \orsr r$.
\end{inparaenum}
Here, $S_1 \orsraa S_2$ holds iff $\forall x\in S_1, \forall y\in
S_2: x\orsr y$.
The relation~$\orsr$ can then be computed using a~standard \emph{worklist}
algorithm, starting from~$\dirsimby$ and adding pairs of states for which condition~2
holds until a~fixpoint is reached.

\begin{figure}[t]
\begin{subfigure}[b]{0.6\textwidth}
  \begin{minipage}{4cm}
    \vspace*{-10mm}
		\scalebox{0.90}{
	  \begin{tikzpicture}[->,>=stealth',shorten >=0pt,auto,node distance=1.2cm,
                    scale=0.8,transform shape,initial text={}]
  \tikzstyle{every state}=[inner sep=3pt,minimum size=5pt]
  \tikzstyle{empty}=[]
  \tikzstyle{initstate}=[fill=yellow!30]

  \node[state,initial,initstate] (qini) {$q_0$};
  \node[state,right of=qini] (q1) {$q_1$};
  \node[state,right of=q1, yshift=6mm] (q2) {$q_2$};
  \node[state,accepting,below of=q2] (q3) {$q_3$};
  \node[state,accepting,right of=q2] (q4) {$q_4$};

  \path (q1) edge  node {$a$} (q2)
        (qini) edge  node {$a$} (q1)
        (q1) edge  node {$a$} (q3)
        (q2) edge  node {$a$} (q4)
        (q4) edge [loop right]  node {$a$} (q4)
        (q3) edge[loop right]  node {$a$} (q3);

  \node[state,initial,initstate,below of=qini, node distance=1.8cm] (rini) {$r_0$};
  \node[state, right of=rini] (r1) {$r_1$};
  \node[state,right of=r1] (r2) {$r_2$};
  \node[state,accepting,right of=r2] (r3) {$r_3$};

  \path (r1) edge  node {$a$} (r2)
        (rini) edge  node {$a$} (r1)
        (r2) edge  node {$a$} (r3)
        (r3) edge [loop right]  node {$a$} (r3);

\end{tikzpicture}}
	  % \vspace{7mm}
  \end{minipage}
  \begin{minipage}{4cm}
		\scalebox{0.90}{
    % \hspace*{-5mm}
	  \begin{tikzpicture}[>=stealth',shorten >=0pt,auto,node distance=1.2cm,
                    scale=0.8,transform shape,initial text={}]
  \tikzstyle{every state}=[inner sep=3pt,minimum size=5pt,rectangle,rounded corners=1mm]
  \tikzstyle{empty}=[]
  \tikzstyle{initstate}=[fill=yellow!30]

  \node[state,initial,initstate] (q0r0) {$\{q_0, r_0\}$};
  \node[state,below of=q0r0] (q1r1) {$\{q_1, r_1\}$};
  \node[state, below of=q1r1] (q2q3r2) {$\{q_2,q_3,r_2\}$};
  \node[state, below of=q2q3r2] (q3q4r3) {$\{q_3,q_4,r_3\}$};

  \node[state, accepting, right of=q0r0, node distance=35mm] (m1) {$\big(\{q_1{:}1, r_1{:}3\}, \emptyset, 0\big)$};
  \node[state, accepting, below of=m1] (m2) {$\big(\{q_1{:}3, r_1{:}1\}, \emptyset, 0\big)$};
  \node[state, accepting, below of=m2] (m3) {$\big(\{q_1{:}1, r_1{:}1\}, \emptyset, 0\big)$};
  \node[empty,minimum size=30pt,xshift=-5.5mm] at(q3q4r3.east) (r345inv) {};

  \node[state, white, fill=white!20, below of=m3, node distance=16mm, minimum width=40mm, minimum height=12mm] (sink) {};

  \path[->] (q1r1) edge  node [left] {$a$} (q2q3r2)
        (q0r0) edge  node [left] {$a$} (q1r1)
        (q2q3r2) edge  node [left] {$a$} (q3q4r3)
        (q0r0) edge [] node {$a$} (m1)
        (q0r0) edge [bend right=10] node {$a$} (m2)
        (q0r0) edge [bend right=16] node[yshift=-1mm] {$a$} (m3)
        (r345inv) edge[loop right]  node {$a$} (r345inv);

  % \begin{pgfonlayer}{background}
  %   \path[dashed,gray] (q1r1)  edge [bend right]  node {} (sink)
  %     (q2q3r2) edge [bend right] node {} (sink)
  %     (q3q4r3) edge [bend right] node {} (sink)
  %     (m1) edge  node {} (sink)
  %     (q0r0) edge  node {} (sink)
  %     (q1r1) edge  node {} (sink);
  % \end{pgfonlayer}

  \begin{pgfonlayer}{background}
    \draw (m1.south west) edge[line width=2mm, draw=red!40] (m1.north east);
    \draw (m1.north west) edge[line width=2mm, draw=red!40] (m1.south east);

    \draw (m2.south west) edge[line width=2mm, draw=red!40] (m2.north east);
    \draw (m2.north west) edge[line width=2mm, draw=red!40] (m2.south east);
  \end{pgfonlayer}

\end{tikzpicture}
		}
  \end{minipage}
	\vspace*{-3mm}
	\caption{\centering}
	\label{fig:ranksim}
\end{subfigure}
\hfill
\begin{subfigure}[b]{0.2\textwidth}
	% \vspace*{-26mm}
	% \hspace*{-2mm}
		\scalebox{0.90}{
    \hspace*{-0mm}
	  \begin{tikzpicture}[->,>=stealth',shorten >=0pt,auto,node distance=1.2cm,
                    scale=0.8,transform shape,initial text={}]
  \tikzstyle{every state}=[inner sep=3pt,minimum size=5pt]
  \tikzstyle{empty}=[]
  \tikzstyle{initstate}=[fill=yellow!30]

  \node[state,initial,initstate] (q) {$q$};
  \node[state,initial,initstate,right of=q,node distance=14mm] (r) {$r$};

  \path (q) edge[loop above]  node {$a$} (q)
        (r) edge[loop above]  node {$a$} (r);

\end{tikzpicture}
		}

	  \vspace{7mm}
		\scalebox{0.90}{
    \hspace*{-0mm}
	  \begin{tikzpicture}[>=stealth',shorten >=0pt,auto,node distance=1.2cm,
                    scale=0.8,transform shape,initial text={}]
  \tikzstyle{every state}=[inner sep=3pt,minimum size=5pt,rectangle,rounded corners=1mm]
  \tikzstyle{empty}=[]
  \tikzstyle{initstate}=[fill=yellow!30]
  \tikzstyle{wobbly}=[decorate, decoration={snake,amplitude=.2mm,segment length=2mm,post length=1mm}]

  \node[state,initial,initstate] (qr) {$\{q,r\}$};
  \node[state, accepting, below of=qr] (r1) {$\big(\{q{:}3, r{:}1\}, \emptyset, 0\big)$};
  \node[state, accepting, below of=r1] (r2) {$\big(\{q{:}1, r{:}1\}, \emptyset, 0\big)$};

  \node[empty,right of=qr,xshift=10mm,yshift=-5mm] (b1) {};
  \node[empty,right of=r1,xshift=10mm,yshift=-5mm] (b2) {};

  % \node[state, thick, red, cross out, below of=r1, minimum size=8mm] (cr1) {};

  \node[state, white, fill=white!20, right of=r2, node distance=23mm, minimum width=5mm] (sink) {};

  \node[empty,minimum size=30pt,xshift=-5.5mm] at(r1.east) (r1inv) {};

  \path[->] (qr) edge [loop right]  node {$a$} (rs)
        (qr) edge  node [left] {$a$} (r1)
        (r1inv) edge[loop right]  node {$a$} (r1inv)
        (r1) edge[wobbly]  node {$a$} (r2);

  \begin{pgfonlayer}{background}
    \draw (r2.south west) edge[line width=2mm, draw=red!40] (r2.north east);
    \draw (r2.north west) edge[line width=2mm, draw=red!40] (r2.south east);
  \end{pgfonlayer}

  \begin{pgfonlayer}{background}
    \path[->,dashed,gray] (qr)  edge [bend right=20]  node {} (b1)
      (r1) edge [bend right=6] node {} (b2)
      (r2) edge node {} (sink);
  \end{pgfonlayer}

\end{tikzpicture}
		}
	\caption{\centering}
	\label{fig:rank_restr}
\end{subfigure}

\caption{
  (\subref{fig:ranksim}) Illustration of $\ranksimred'$.
  (\subref{fig:rank_restr}) Illustration of \rankrestr. }
  \vspace{-2mm}
\end{figure}

\begin{restatable}{lemma}{lemOrs}
We have ${\orsr} \subseteq {\ors}$.
\end{restatable}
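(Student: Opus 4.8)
The plan is to exploit that $\orsr$ is, by definition, the \emph{least} relation closed under the two rules (i)~$\dirsimby \subseteq \orsr$ and (ii)~the successor rule. Hence it suffices to show that $\ors$ itself satisfies both closure conditions, and then minimality of $\orsr$ gives $\orsr \subseteq \ors$ (equivalently, this is fixpoint induction on the construction of $\orsr$). Concretely I would prove two facts: (A)~${\dirsimby} \subseteq {\ors}$, and (B)~the closure step: if for every $a \in \Sigma$ every non-accepting $a$-successor $x$ of $p$ and every non-accepting $a$-successor $y$ of $r$ satisfy $x \ors y$, then $p \ors r$. Fact~(A) is immediate from \cref{lem:aplas}: if $p \dirsimby r$ and both $\rankofin{p,i}{\word}$ and $\rankofin{r,i}{\word}$ are odd (which forces both vertices to lie in the run DAG $\dagw$), then \cref{lem:aplas} already yields $\rankofin{p,i}{\word} \leq \rankofin{r,i}{\word}$, so $p \ors r$.

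The engine of the argument is a structural property of odd ranks that I would isolate first (it is essentially the fact underlying \cref{lem:trunks}): \emph{if $(q,i)$ has odd rank $2k+1$, then it has a successor $(q',i+1)$, with $q' \in \delta(q,\wordof i)$, of the \emph{same} rank $2k+1$, and moreover $q' \notin F$.} I would justify it as follows. A vertex of odd rank $2k+1$ is endangered but was not finite in $\dagwiof{2k+1}$ (otherwise it would have received a rank $\le 2k$), so by König's lemma it starts an infinite path inside $\dagwiof{2k+1}$; every vertex on that path is infinite (hence survives and has rank $\ge 2k+1$) and is endangered (being reachable from the endangered $(q,i)$, it cannot reach $F$, hence has rank $\le 2k+1$), so all of them have rank exactly $2k+1$. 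The first edge of this path gives the required same-rank successor. Non-acceptance of $q'$ is automatic: an accepting vertex reaches itself and so is never endangered, i.e.\ odd-ranked vertices are always non-accepting.

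For the closure step~(B), fix $\word$ and $i$ with $\rankofin{p,i}{\word} = 2k_p+1$ and $\rankofin{r,i}{\word} = 2k_r+1$ both odd, and let $a = \wordof i$. Applying the structural property to $(p,i)$ and to $(r,i)$ yields successors $(p',i+1)$ and $(r',i+1)$ with $\rankofin{p',i+1}{\word} = 2k_p+1$, $\rankofin{r',i+1}{\word} = 2k_r+1$, and $p',r' \notin F$. Thus $p' \in \delta(p,a)\setminus F$ and $r' \in \delta(r,a)\setminus F$, so the hypothesis of~(B) gives $p' \ors r'$. Instantiating the definition of $p' \ors r'$ at this very word $\word$ and position $i+1$ (where both ranks are odd) gives $2k_p+1 \leq 2k_r+1$, i.e.\ $\rankofin{p,i}{\word} \leq \rankofin{r,i}{\word}$. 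Since $\word$ and $i$ were arbitrary, $p \ors r$.

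The main obstacle is the structural ``odd-rank descent'' property: the whole proof hinges on being able to transport the rank comparison down one level and back, which is only possible because each odd rank is \emph{realized} by some successor at the next level. A secondary but essential point is the observation that these rank-realizing successors are non-accepting, which is exactly what makes the premise of rule~(ii) (quantifying only over $\delta(\cdot,a)\setminus F$) strong enough to conclude $p' \ors r'$; if the rule quantified over all successors this would be more than we need, and the restriction to $\setminus F$ is precisely compatible with the fact that odd-ranked vertices never lie in $F$. The remaining ingredients (minimality of $\orsr$, the base case via \cref{lem:aplas}) are routine.
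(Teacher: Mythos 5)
Your proof is correct and takes essentially the same route as the paper's: a base case given by \cref{lem:aplas} plus an inductive (fixpoint/minimality) step that transports the odd-rank comparison one level down using the fact that every odd-ranked vertex of the run DAG has a successor with the same odd rank. The only differences are in the level of detail, and they are in your favour: you prove via K\"{o}nig's lemma the structural ``odd-rank descent'' property that the paper merely asserts to ``follow from the construction of the run DAG,'' and you explicitly check that the same-rank successors are non-accepting, which is needed to apply the induction hypothesis (since rule~(ii) quantifies only over $\delta(\cdot,a)\setminus F$) but is left implicit in the paper.
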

\begin{proof}
  The base case ${\simdi} \subseteq {\ors}$ follows directly from
  \cref{lem:aplas}.
  For the induction step, let $p, r \in Q$ be such that
  $\forall a \in \Sigma: (\delta(p,a)\setminus F) \orsraa (\delta(r,a)\setminus
  F)$.
  Our induction hypothesis is that for every $a \in \Sigma, x \in
  (\delta(p,a)\setminus F)$, and $y \in
  (\delta(r,a)\setminus F)$, it holds that
  for all
  $\word \in \Sigma^\omega$ and for all $i \geq 0$, if
  $\rank_\word(p,i)$ is odd and $\rank_\word(r,i)$ is odd, then
   $\rank_\word(p,i) \leq \rank_\word(r,i)$.
   Let us fix an~$a \in \Sigma$ and a~word~$\word \in \Sigma^\omega$ that has
   $a$ at its $i$-th position.
   If $\rankofin{p,i} \word$ or $\rankofin{r,i} \word$ are even, the condition
   holds trivially.

   Assume now that $\rankofin{p,i} \word$ and $\rankofin{r,i} \word$ are odd.
   From the construction of the run DAG~$\dagw$ in \cref{sec:rundag}, it follows
   that there exist infinite paths from $(p,i)$ and $(r, i)$ in~$\dagw$ such
   that all vertices on these paths are assigned the same (odd) ranks as $(p,i)$
   and $(r,i)$, respectively.
   In particular, there are direct successors $(p', i+1)$ of $(p, i)$ and $(r',
   i+1)$ of $(r,i)$ whose ranks match the ranks of their predecessors.
   From the induction hypothesis, it holds that $\rankofin{p',i+1} \word \leq
   \rankofin{r',i+1} \word$ and so $\rankofin{p, i} \word \leq \rankofin{r, i}
   \word$ and the lemma follows.
   (Note that in the previous reasoning, it is essential that $(p, i)$ and $(r,
   i)$ have an odd ranking; if a~node has an even ranking in~$\dagw$, then the
   condition that there needs to be a successor with the same ranking does not
   hold in general.)
\end{proof}

Putting it all together, we modify~\eqref{eq:ranksimred} by substituting
$\orsfT$ with $\orsrfT$, which denotes the transitive closure of~$\orsrf$, where $\orsrf$ is a~relation defined (by modifying~\eqref{eq:orsf}) as
\begin{equation}
  \begin{split}
    p \orsrf r \defiff {}& f(p) \text{ is odd} \land f(r) \text{ is odd} \land {}
                         p \orsr r .
  \end{split}
\end{equation}

\noindent
Because ${\orsr} \subseteq {\ors}$, \cref{lem:lemRankSimRed} still holds.
We denote the modification of \ranksimred that uses $\orsrfT$ instead of~$\orsfT$
as $\ranksimred'$.

% \begin{proof}
% 	Base case $\simdi = \preceq$ follows directly from our APLAS paper.
%   Inductive
% 	case: Assume states $p$ and $q$ having an odd rank (in a run DAG level $\ell$
% 	for some word $\word$). Since both states have odd rank, there is a successor
% 	of each state having the same even rank (follows from the definition of run
% 	DAG). Lets say $p'$ and $q'$. Further assume that the condition in
% 	\ref{eq:odd-rank-ind} holds true. From that follows that $\rank_\word(p',
% 	\ell+1) \leq \rank_\word(q', \ell+1)$ which implies $\rank_\word(p, \ell)
% 	\leq \rank_\word(q, \ell)$ (and hence $p \preceq q$). \qed
% \end{proof}

% {\makeatletter
% \let\par\@@par
% \par\parshape0
% \everypar{}\figranksim
% \par}%
\begin{example}
Consider the BA $\aut$ (top) and the part of $\algschewe(\aut)$ (bottom) in
\cref{fig:ranksim}.
Note that $r_2\dirsimby q_2$ and $q_2\dirsimby r_2$ so
$r_2 \orsr q_2$ and $q_2 \orsr r_2$.
From the definition of $\orsr$, we can deduce that
$r_1 \orsr q_1$ (since $\{r_2\} \orsraa \{q_2\}$) and
$q_1 \orsr r_1$ (since $\{q_2\} \orsraa \{r_2\}$).
Note that $q_1 \not\dirsimby r_1$).
As a~consequence and due to the odd ranks of~$q_1$ and~$r_1$, we can eliminate
the macrostates $(\{q_1{:}1, r_1{:}3\}, \emptyset, 0)$ and
$(\{q_1{:}3, r_1{:}1\}, \emptyset, 0)$.
\end{example}

%*******************************************************************************
\vspace{-0.0mm}
\subsection{Ranking Restriction}
\vspace{-0.0mm}
%*******************************************************************************

\newcommand{
\begin{wrapfigure}[8]{r}{2.8cm}
\vspace*{-26mm}
\hspace*{-2mm}
\begin{minipage}{3.2cm}
  \centering
  \begin{tikzpicture}[->,>=stealth',shorten >=0pt,auto,node distance=1.2cm,
                    scale=0.8,transform shape,initial text={}]
  \tikzstyle{every state}=[inner sep=3pt,minimum size=5pt]
  \tikzstyle{empty}=[]
  \tikzstyle{initstate}=[fill=yellow!30]

  \node[state,initial,initstate] (q) {$q$};
  \node[state,initial,initstate,right of=q,node distance=14mm] (r) {$r$};

  \path (q) edge[loop above]  node {$a$} (q)
        (r) edge[loop above]  node {$a$} (r);

\end{tikzpicture}

  \vspace{2mm}
  \begin{tikzpicture}[>=stealth',shorten >=0pt,auto,node distance=1.2cm,
                    scale=0.8,transform shape,initial text={}]
  \tikzstyle{every state}=[inner sep=3pt,minimum size=5pt,rectangle,rounded corners=1mm]
  \tikzstyle{empty}=[]
  \tikzstyle{initstate}=[fill=yellow!30]
  \tikzstyle{wobbly}=[decorate, decoration={snake,amplitude=.2mm,segment length=2mm,post length=1mm}]

  \node[state,initial,initstate] (qr) {$\{q,r\}$};
  \node[state, accepting, below of=qr] (r1) {$\big(\{q{:}3, r{:}1\}, \emptyset, 0\big)$};
  \node[state, accepting, below of=r1] (r2) {$\big(\{q{:}1, r{:}1\}, \emptyset, 0\big)$};

  \node[empty,right of=qr,xshift=10mm,yshift=-5mm] (b1) {};
  \node[empty,right of=r1,xshift=10mm,yshift=-5mm] (b2) {};

  % \node[state, thick, red, cross out, below of=r1, minimum size=8mm] (cr1) {};

  \node[state, white, fill=white!20, right of=r2, node distance=23mm, minimum width=5mm] (sink) {};

  \node[empty,minimum size=30pt,xshift=-5.5mm] at(r1.east) (r1inv) {};

  \path[->] (qr) edge [loop right]  node {$a$} (rs)
        (qr) edge  node [left] {$a$} (r1)
        (r1inv) edge[loop right]  node {$a$} (r1inv)
        (r1) edge[wobbly]  node {$a$} (r2);

  \begin{pgfonlayer}{background}
    \draw (r2.south west) edge[line width=2mm, draw=red!40] (r2.north east);
    \draw (r2.north west) edge[line width=2mm, draw=red!40] (r2.south east);
  \end{pgfonlayer}

  \begin{pgfonlayer}{background}
    \path[->,dashed,gray] (qr)  edge [bend right=20]  node {} (b1)
      (r1) edge [bend right=6] node {} (b2)
      (r2) edge node {} (sink);
  \end{pgfonlayer}

\end{tikzpicture}
\end{minipage}
\vspace*{-7mm}
\caption{Illustration of \rankrestr}
\label{fig:rank_restr}
\end{wrapfigure}
}[0]{
\begin{wrapfigure}[8]{r}{2.8cm}
\vspace*{-26mm}
\hspace*{-2mm}
\begin{minipage}{3.2cm}
  \centering
  \begin{tikzpicture}[->,>=stealth',shorten >=0pt,auto,node distance=1.2cm,
                    scale=0.8,transform shape,initial text={}]
  \tikzstyle{every state}=[inner sep=3pt,minimum size=5pt]
  \tikzstyle{empty}=[]
  \tikzstyle{initstate}=[fill=yellow!30]

  \node[state,initial,initstate] (q) {$q$};
  \node[state,initial,initstate,right of=q,node distance=14mm] (r) {$r$};

  \path (q) edge[loop above]  node {$a$} (q)
        (r) edge[loop above]  node {$a$} (r);

\end{tikzpicture}

  \vspace{2mm}
  \begin{tikzpicture}[>=stealth',shorten >=0pt,auto,node distance=1.2cm,
                    scale=0.8,transform shape,initial text={}]
  \tikzstyle{every state}=[inner sep=3pt,minimum size=5pt,rectangle,rounded corners=1mm]
  \tikzstyle{empty}=[]
  \tikzstyle{initstate}=[fill=yellow!30]
  \tikzstyle{wobbly}=[decorate, decoration={snake,amplitude=.2mm,segment length=2mm,post length=1mm}]

  \node[state,initial,initstate] (qr) {$\{q,r\}$};
  \node[state, accepting, below of=qr] (r1) {$\big(\{q{:}3, r{:}1\}, \emptyset, 0\big)$};
  \node[state, accepting, below of=r1] (r2) {$\big(\{q{:}1, r{:}1\}, \emptyset, 0\big)$};

  \node[empty,right of=qr,xshift=10mm,yshift=-5mm] (b1) {};
  \node[empty,right of=r1,xshift=10mm,yshift=-5mm] (b2) {};

  % \node[state, thick, red, cross out, below of=r1, minimum size=8mm] (cr1) {};

  \node[state, white, fill=white!20, right of=r2, node distance=23mm, minimum width=5mm] (sink) {};

  \node[empty,minimum size=30pt,xshift=-5.5mm] at(r1.east) (r1inv) {};

  \path[->] (qr) edge [loop right]  node {$a$} (rs)
        (qr) edge  node [left] {$a$} (r1)
        (r1inv) edge[loop right]  node {$a$} (r1inv)
        (r1) edge[wobbly]  node {$a$} (r2);

  \begin{pgfonlayer}{background}
    \draw (r2.south west) edge[line width=2mm, draw=red!40] (r2.north east);
    \draw (r2.north west) edge[line width=2mm, draw=red!40] (r2.south east);
  \end{pgfonlayer}

  \begin{pgfonlayer}{background}
    \path[->,dashed,gray] (qr)  edge [bend right=20]  node {} (b1)
      (r1) edge [bend right=6] node {} (b2)
      (r2) edge node {} (sink);
  \end{pgfonlayer}

\end{tikzpicture}
\end{minipage}
\vspace*{-7mm}
\caption{Illustration of \rankrestr}
\label{fig:rank_restr}
\end{wrapfigure}
}

%
% In order to reduce number of choices of the ranking function we modify the run
% DAG ranking procedure to the following. We assign ranks to vertices of run DAGs
% as follows: Let $\dagwiof 0 = \dagw$ and~$j = 0$. Repeat the following steps
% until the~fixpoint or for at most $2n + 2$ steps, where $n$~is the number of
% states of~$\aut$.
% %
% \begin{itemize}
% \vspace{-1.5mm}
%   \item  Set $\rankwof{p,i} := j$ for all finite vertices $(p,i)$
% 		of~$\dagwiof j$ s.t. there is a finite vertice $(u,j)$ where $j < i$ and
% 		$u\rightsquigarrow v$ with $u \in F$. Let $\dagwiof{j+1}$ be $\dagwiof{j}$ minus the
% 		vertices with the rank~$j$.
%   \item  Set $\rankwof{p,i} := j+1$ for all endangered vertices $(p,i)$
%     of~$\dagwiof {j+1}$ and let $\dagwiof{j+2}$ be $\dagwiof{j+1}$ minus the
%     vertices with the rank~$j+1$.
%   \item  Set $j := j + 2$.
% \vspace{-1.5mm}
% \end{itemize}
% %
% The modification consists only in assigning rank to finite vertices. The
% vertices having odd rank in the original procedure will have the same rank also
% in this modification. Note that the results concerning tight rankings holds also
% for this modified procedure (some vertices with even rank $k$ in the original
% procedure now get odd rank $k + 1$). This modified procedure reduce number of
% guesses -- the rank is deterministically decreased when a final state is
% occurred and nondeterministically when you leave a final state.

%named \rankrestr

%\figrankrestr
Another optimization, called \rankrestr, restricts ranks of successors
of states with an odd rank.
In particular, in a~super-tight run, every odd-ranked state has a~successor
with the same rank (this follows from the construction of the run DAG).
Let $\aut$ be a BA and $\but = \algschewe(\aut) = (Q, \delta_1 \cup \delta_2
\cup \delta_3, I, F)$.
We define the following restriction on transitions:
\begin{equation}
	\begin{split}
	\rankrestr&(\sofi \transover{a} \sofiprime) \defiff\\& \forall q\in
	S: f(q)\mbox{ is odd } \limpl (\exists q' \in\delta(q,a): f'(q') = f(q)).
	\end{split}
\end{equation}
We abuse notation and use $\rankrestr(\aut)$ to denote $\but$ with
transitions from~$\delta_3$ restricted to those that satisfy $\rankrestr$.
See \cref{fig:rank_restr} for an example of a~transition (and a~newly unreachable macrostate) removed using $\rankrestr$.
%
% Then the restricted transition
% function is denoted as $\R$ and it is the smallest relation s.t.
% \begin{inparaenum}[(i)]
% 	\item $\delta_1 \cup \delta_2 \subseteq \R$
% 	\item $\tau = \sofi \transover{a} \sofiprime \in \delta_3 \wedge \forall q\in
% 	S: f(q)\mbox{ is odd } \limpl (\exists q' \in\delta(q,a): f'(q') = f(q))$
% 	implies $\tau \in \R$.
% \end{inparaenum}
% Then $\rankrestr(\aut) = (Q, \R, I, F)$.

% \begin{lemma}
%   Let $\but = \algschewe(\aut) = (Q, \delta, \ignore, \ignore)$ and let $\sofi \in Q$.
%   Then for every $a \in \Sigma$, it holds that there is a~state~$\sofiprime \in Q$
%   such that $\sofi \transover{a} \sofiprime \in \delta$ and $\forall q\in S:
%   f(q)\mbox{ is odd } \limpl (\exists q' \in\delta(q,a): f'(q') = f(q))$.
% \end{lemma}

\begin{restatable}{lemma}{lemRankRestr}
	Let $\aut$ be a BA. Then $\langof{\rankrestr(\aut)} = \langof{\algschewe(\aut)}$.
\end{restatable}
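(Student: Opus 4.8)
The plan is to prove the two language inclusions separately. The inclusion $\langof{\rankrestr(\aut)} \subseteq \langof{\algschewe(\aut)}$ is immediate: $\rankrestr(\aut)$ is obtained from $\algschewe(\aut)$ merely by deleting some transitions of $\delta_3$ (and the states thereby rendered unreachable), so every accepting run of $\rankrestr(\aut)$ is already an accepting run of $\algschewe(\aut)$. For the reverse inclusion I would reuse the super-tight-run strategy underlying the other reduction lemmas (e.g.\ the proof of \cref{lem:lemRankSimRed}): given $\word \in \langof{\algschewe(\aut)}$, invoke \cref{lem:super-tight-run} to obtain an accepting super-tight run $\rho = S_0 \ldots S_m \sofiof{m+1} \sofiof{m+2} \ldots$ of $\but = \algschewe(\aut)$ on $\word$, and then show that $\rho$ survives the pruning, i.e.\ that every tight-part transition taken by $\rho$ satisfies the $\rankrestr$ predicate. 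Since $\rho$ uses the same states and the same $\delta_1/\delta_2$ transitions, and acceptance depends only on the (unchanged) macrostates visited infinitely often, this makes $\rho$ an accepting run of $\rankrestr(\aut)$ as well.

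Fix $k > m$ and consider the transition $\sofiof{k} \transover{\word_k} \sofiof{k+1}$ taken by $\rho$. I must show that for each $q \in S_k$ with $f_k(q)$ odd there is a successor $q' \in \delta(q, \word_k)$ with $f_{k+1}(q') = f_k(q)$. Because $\rho$ is super-tight, $f_k(q) = \rankofin{q,k}{\word}$ and $f_{k+1}(q') = \rankofin{q',k+1}{\word}$, where $\rank_\word$ is the ranking of the run DAG $\dagw$ of $\aut$ over $\word$. Thus the required condition reduces to a purely run-DAG statement: \emph{every odd-ranked vertex of $\dagw$ has a direct successor carrying the same (odd) rank.} This is exactly the property already invoked in the proof of \cref{lem:trunks} (``this follows from the run DAG ranking procedure'') and reused in the odd-rank argument establishing ${\orsr} \subseteq {\ors}$, so I would cite it; the successor $(q', k+1)$ it supplies lies in $S_{k+1} = \delta(S_k, \word_k)$, giving the desired witness $q'$.

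For completeness I would spell out this run-DAG property, which is the only genuinely substantive step. Let $(q,k)$ have odd rank $2j+1$. By the ranking procedure of \cref{sec:rundag}, $(q,k)$ is endangered in $\dagwiof{2j+1}$, and it is infinite there (all finite vertices were already removed at rank $2j$); concretely, an infinite path witnessing infiniteness consists only of infinite vertices, none of which was removed, so it survives into $\dagwiof{2j+1}$. Since $Q$ is finite the DAG is finitely branching, so by König's lemma there is an infinite path out of $(q,k)$ in $\dagwiof{2j+1}$, and since endangeredness is inherited along successors, every vertex on this path is endangered in $\dagwiof{2j+1}$ and hence receives rank $2j+1$; its first edge yields the same-rank successor $(q',k+1)$. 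The main obstacle is precisely this argument: one must track infiniteness and endangeredness \emph{in the correct intermediate DAG} $\dagwiof{2j+1}$ in order to license König's lemma and the forward inheritance of endangeredness, and one must note that even-ranked vertices need \emph{not} have a same-rank successor — which is exactly why the restriction is imposed only when $f_k(q)$ is odd. Everything else is bookkeeping about matching the $S$-components across the transition.
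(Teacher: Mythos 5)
Your proposal matches the paper's proof essentially step for step: the trivial inclusion by transition deletion, then taking an accepting super-tight run (via \cref{lem:super-tight-run}) and observing that super-tightness reduces the \rankrestr{} condition to the run-DAG property that every odd-ranked vertex has a same-rank successor, so the run survives the pruning. The only difference is that you additionally prove this run-DAG property (via K\"{o}nig's lemma and forward inheritance of endangeredness in $\dagwiof{2j+1}$), which the paper simply asserts as a consequence of the ranking procedure; your justification of it is correct.
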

\begin{proof}
	The inclusion $\langof{\rankrestr(\aut)} \subseteq \langof{\algschewe(\aut)}$
	is clear. We now look at the reverse direction. Consider a word $\word \in
	\langof{\algschewe(\aut)}$. Let $\rho = S_0\dots S_m(S_{m+1}, O_{m+1},
	f_{m+1}, i_{m+1})\dots$ be an accepting super-tight run on $\word$. Now
	consider a macrostate $(S_j, O_j, f_j, i_j)$ where $j > m$ and some state $q
	\in S_j$. Since $\rho$ is super-tight, i.e., represents
  the run DAG of~$\aut$ over~$\word$, it holds that if $f_j(q)$ is odd, then
  there is a~state $q'\in S_{j+1}$ s.t.  $f_{j+1}(q') = f_j(q)$, which satisfies
  $\rankrestr$.
\end{proof}

\vspace{-0.0mm}
\subsection{Maximum Rank Construction}\label{sec:maxrank}
\vspace{-0.0mm}
%*******************************************************************************

Our next optimization, named \algmaxrank, has the biggest
practical effect.  We introduce it as the last one because it depends on our
previous optimizations (in particular \succrankred and $\ranksimred'$).
It is a~modified version of Schewe's ``Reduced Average Outdegree''
construction~\cite[Section~4]{Schewe09}, named $\algschewerao$, which may omit some runs,
the so-called \emph{max-rank} runs, that are essential for our other
optimizations (we discuss the particular issue later).%
\footnote{We believe that this property was not originally intended by the
author, since it is not addressed in the proof.
As far as we can tell, the construction is correct, although the original
argument of the proof in~\cite{Schewe09} needs to be corrected.}

% It is a~corrected version of Schewe's ``Reduced Average Outdegree''
% construction~\cite[Section~4]{Schewe09}, named $\algschewerao$, which contains
% a~small but fatal bug (we discuss the particular issue later).
% \ol{change the claim}

The main idea of \algmaxrank is that a~set of runs of $\but = \algschewe(\aut)$
(including super-tight runs) that assign different ranks to non-trunk states is
represented by a~single, ``maximal,'' not necessarily super-tight (but having
the same rank), run in $\cut = \algmaxrank(\aut)$.
We call such runs \emph{max-rank runs}.
More concretely, when moving from the waiting to the tight part, $\cut$~needs to
correctly guess a~rank that is needed on an accepting run and the first tight
core of a~trunk of the run.
The ranks of the rest of states are made maximal.
Then, the tight part of~$\cut$ contains for each macrostate and symbol at
most two successors: one via $\eta_3$ and one via~$\eta_4$.
Loosely speaking, the~$\eta_3$-successor keeps all ranks as high as possible, while
the~$\eta_4$-successor decreases the rankings of all non-accepting states in~$O$
(and can therefore help emptying~$O$, which is necessary for an accepting run).

Before we give the construction, let us first provide some needed notation.
We now use $(S, O, f, i) \leq (S, O, g, i)$ to denote that $f \leq g$ and
similarly for~$<$ (note that non-ranking components of the macrostates need to
match).
% In the definitions, given a~set of macrostates~$R$ from~$Q_2$, we use
% $\max_{f}\{\sofi~\in~R \}$ to denote the set of maximal elements of
% the partial order $\leq$ on macrostates in~$R$.
% For instance, if
% $R = \{ (S, O, \{\rnk q 3, \rnk s 1\}, i),
%         (S, O, \{\rnk q 1, \rnk s 1\}, i)$,
%         $(S, O, \{\rnk q 1, \rnk s 3\}, i) \},$
% then
% $\max_{f}\{\sofi \in R\} = \{
%   (S, O, \{\rnk q 3, \rnk s 1\}, i)$,
%   $(S, O, \{\rnk q 1, \rnk s 3\}, i) \}.$
% \ol{introduce the notation $\rnk s 1$}

The construction is then formally defined as
$\algmaxrank(\aut) = (Q_1 \cup Q_2, \eta, I', F')$
with
$\eta = \delta_1 \cup \eta_2 \cup \eta_3 \cup \eta_4$
such that $Q_1, Q_2, I', F', \delta_1$ are the same as in \algschewe.
Let $\but = \algdelay(\aut) = (\ignore, \delta_1 \cup \theta_2 \cup \delta_3, \ignore, \ignore)$
where $\delta_1, \theta_2$, and $\delta_3$ are defined as in \algdelay.
% We define an auxiliary transition function that keeps macrostates satisfying conditions
% $\ranksimred'$ and \succrankred as follows:
We define an auxiliary transition function that uses our previous optimizations
as follows:
\vspace{-1mm}
\begin{equation}
  \auxdelta(q, a) = \{q' \mid q' \in \theta_2(q, a) \land \ranksimred'(q') \land \succrankred(q'))\} .
\vspace{-1mm}
\end{equation}
(We note that~$q$ is from the waiting and~$q'$ is from the tight part of~$\but$.)
Given a~macrostate $\sofi$ and $a \in \Sigma$, we define the maximal successor ranking
$\maxrankof{\sofi, a}$ to be the ranking~$\fpmax$ given as follows.
Consider $q' \in \delta(S, a)$ and the rank $r = \min\{f(s) \mid s \in \delta^{-1}(q',
a) \cap S\}$.
Then
\vspace{-0mm}
\begin{itemize}
  \setlength{\itemsep}{0mm}
  \item  $\fpmax(q') := r-1$ if~$r$ is odd and $q' \in F$ and
  \item  $\fpmax(q') := r$ otherwise.
\end{itemize}
\vspace{-0mm}

Let $\delta_3$ be the transition function of the tight part of
$\algschewe(\aut)$.
We can now proceed to the definition of the missing components of
$\algmaxrank(\aut)$:
\vspace{-0mm}
\begin{itemize}
  \setlength{\itemsep}{0mm}
  % \item $\eta_2(S, a) := \max_{f'}\{(S', \emptyset, f', 0) \in \auxdelta(S, a) \}$.
  \item $\eta_2(S, a) := \{(S', \emptyset, f', 0) \in \auxdelta(S, a) \mid (S',
    \emptyset, f', 0) \text{ is a~maximal element of } {\leq} \text{ in }
    \auxdelta(S,a)\}$.

	\item $\eta_3(\sofi, a)$:
    Let~$\fpmax = \maxrankof{\sofi, a}$.
    Then, we set
    \vspace{-1mm}
    \begin{itemize}
      \setlength{\itemsep}{0mm}
      \item  $\eta_3(\sofi, a) := \{(S', O', \fpmax, i')\}$ when $(S', O',
        \fpmax, i') \in \delta_3(\sofi, a)$ (i.e., if $\fpmax$ is tight; note that, in general, the result of $\maxrank$ may not be tight) and
      \item  $\eta_3(\sofi, a) := \emptyset$ otherwise.
    \end{itemize}
    \vspace{-0mm}

    % \ol{OLD}
    %
    % $= \max_{f'}\{\sofiprime \in \auxdelta(\sofi, a) \}$.
    % We emphasize that from the definition of~$\auxdelta$, it holds that
    % $\eta_3(\sofi, a)$ contains at most one element---there is indeed
    % a~unique ranking that is the supremum, but it is not necessarily tight
    % \ol{should this be proved?} \ol{give an example}), \ol{maybe lemma along
    % the lines of preserving the max. tight ranking through the pruning
    % techniques} \ol{example: if $\auxdelta(\sofi, a) = \{\{q \mapsto 3, r
    % \mapsto 1\}, \{q \mapsto 1, r \mapsto 3\}\}$, then $\eta_3(\sofi, a) =
    % \emptyset$}
    % \ol{change definition!! (we compute maximum of KV and check its tight}
	% \item $\sofiprime \in \eta_4(\sofi, a)$ if $i'' \neq 0$, $(S'',
	% O'', f'', i'')\in\eta_3(\sofi, a)$, $i' = i''$, $S' = S''$ $O' = F\cap
	% O''\cap (f'')^{-1}(i'')$, $f'' = f' \lhd \{ u\mapsto f'(u)-1 \mid u\in
	% O'\setminus F \}$.
  \item $\eta_4(\sofi, a)$: Let
    $\eta_3(\sofi, a) = \{(S', P', h', i')\}$ and let
    \vspace{-1mm}
    \begin{itemize}
      \setlength{\itemsep}{0mm}
      \item  $f' = \variant{h'}{\{ u\mapsto h'(u)-1 \mid u\in P'\setminus F \}}$ and
      \item  $O' = P'\cap f'^{-1}(i')$.
    \end{itemize}
    \vspace{-1mm}
    Then, if $i' \neq 0$, we set $\eta_4(\sofi, a) := \{\sofiprime\}$, else
    we set \mbox{$\eta_4(\sofi, a) := \emptyset$}.
  % \item $\sofiprime \in \eta_4(\sofi, a)$ if
  %   $\eta_3(\sofi, a) = \{(S', P', h', i')\}$ s.t.
  %   \begin{itemize}
  %     \item  $i' \neq 0$,
  %     \item  $f' = \variant{h'}{\{ u\mapsto h'(u)-1 \mid u\in P'\setminus F \}}$, and
  %     \item  $O' = P'\cap f'^{-1}(i')$.
  %   \end{itemize}
  %   We emphasize that $|\eta_4(\sofi, a)| \leq 1$.
  %   \ol{example}
\end{itemize}
\vspace{-0mm}
Note that~$\eta_3$ and~$\eta_4$ are deterministic (though not complete), so we
will sometimes use the notation $\sofiprime = \eta_3(\sofi, a)$.

\algmaxrank differs from $\algschewerao$ in the definition of~$\eta_2$ and~$\eta_4$.
In particular, in the $\eta_4$ of $\algschewerao$ (named $\gamma_4$ therein),
the condition that only non-accepting states
($u \in P' \setminus F$) decrease rank is omitted.
Instead, the rank of all states in~$P'$ is decreased by one, which might create
% a~``ranking'' that is actually not a~ranking according to the definition (since an
a~``false ranking'' (not an actual ranking since an
accepting state is given an odd rank), so the target macrostate is omitted
from the complement.
Due to this, some max-rank runs may also be removed (see
\cref{sec:schewe_bug_example} for an example).
Our construction preserves max-rank runs, which makes the proof of the
theorem significantly more involved.

\begin{restatable}{theorem}{theMaxRank}
  Let $\aut$ be a~BA and $\cut = \algmaxrank(\aut)$.
  Then $\langof \cut = \overline{\langof \aut}$.
\end{restatable}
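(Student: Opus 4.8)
The plan is to prove the two inclusions $\langof\cut\subseteq\overline{\langof\aut}$ and $\overline{\langof\aut}\subseteq\langof\cut$ separately, reusing the correctness of \algschewe. For soundness I would show that every transition of $\cut$ is \emph{also} a transition of $\algschewe(\aut)$, so that any accepting run of $\cut$ is verbatim an accepting run of $\algschewe(\aut)$; the inclusion then follows from Schewe's theorem (together with \cref{lem:suckrank} and \cref{lem:lemRankSimRed}, which guarantee that restricting to $\succrankred$ and $\ranksimred'$ macrostates does not change the language). The waiting part is shared, and $\eta_2\subseteq\delta_2$ (as $\auxdelta\subseteq\theta_2\subseteq\delta_2$) and $\eta_3\subseteq\delta_3$ hold directly, since $\eta_3$ is nonempty only when its target already lies in $\delta_3(\sofi,a)$. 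The one point needing care is $\eta_4$: I would verify that the ranking $f'$ obtained from the $\eta_3$-target $(S',P',h',i')$ by decrementing the non-accepting states of $P'$ is again $S'$-tight, has the same rank as $h'$, and yields a legal $O'$, so that the $\eta_4$-target is a genuine $\delta_3$-successor. Tightness survives because the decremented states carried the even rank $i'\le\rankof{h'}-1$ and their new odd rank $i'-1$ was already realised elsewhere, so no odd value disappears and the maximum is untouched; crucially, restricting the decrement to $P'\setminus F$ is exactly what stops an accepting state from receiving an odd rank (a ``false ranking''), which is the correction of \algschewerao flagged in the footnote.

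For completeness, take $\word\in\overline{\langof\aut}=\langof{\algschewe(\aut)}$. By \cref{lem:super-tight-run} and the lemma establishing $\langof{\algdelay(\aut)}=\langof{\algschewe(\aut)}$ with preservation of super-tight runs, there is an accepting super-tight run $\rho=S_0\cdots S_m\sofiof{m+1}\cdots$ of rank $r=\rankof\rho$ in $\algdelay(\aut)$, and by \cref{lem:trunks} a trunk $\tau=C_{m+1}C_{m+2}\cdots$ along which every state keeps its odd rank into the next level. Since super-tight runs satisfy $\succrankred$ and $\ranksimred'$, the ranking $f_{m+1}$ lies in $\auxdelta(S_m,\alpha_m)$, so $\eta_2$ offers a maximal ranking $g_{m+1}\ge f_{m+1}$ that is tight, from which I would start building a run $\rho^*$ in $\cut$ sharing the waiting prefix $S_0\cdots S_m$ of $\rho$.

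The core is a \emph{domination invariant}: I would maintain rankings $g_k\ge f_k$ that agree with $f_k$ on the trunk $C_k$. Agreement propagates through $\eta_3$ via the \maxrank formula: for $q'\in C_{k+1}$ with trunk-predecessor $q\in C_k$ one has $\fpmax(q')\le g_k(q)=f_k(q)=f_{k+1}(q')$, while $g_{k+1}\ge f_{k+1}$ from domination, and since $q'$ carries an odd (hence non-accepting) rank the $-1$ correction does not apply, forcing $g_{k+1}(q')=f_{k+1}(q')$. Trunk-agreement makes $g_k$ realise all odd ranks $1,3,\dots,r$, so $g_k$ is tight of rank exactly $r$; as the $\eta_3$-formula preserves the rank whenever defined and definedness is underwritten by this trunk structure, $\eta_3$ stays defined along $\rho^*$ and $\rankof{g_k}=r$ throughout. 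To make $\rho^*$ accepting I would empty its $O$-component infinitely often using $\eta_4$: whenever $\rho$ reaches a cut-point that empties $O$ for the tracked even value $i$, I argue that $\rho^*$ empties its own $O$ for the same $i$ within finitely many steps, alternating $\eta_3$ and $\eta_4$, because the even-ranked vertices of $\dagw$ are endangered and die out, so only finitely many $\aut$-states can linger at the maximal even rank $i$ before the corresponding part of the rejecting run DAG terminates, and $\eta_4$ repeatedly pushes the non-accepting ones to an odd rank and out of $O$.

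The main obstacle is this last step: coupling the cut-points of the maximal-ranking run $\rho^*$ to those of the super-tight run $\rho$ while ranks are held as high as possible, so that $O$ is \emph{provably} emptied infinitely often rather than trapping a nonempty even-rank class forever; the domination $g_k\ge f_k$ must be carried through both $\eta_3$ and the rank-decreasing $\eta_4$ without ever dropping $\rankof{g_k}$ below $r$ or breaking tightness. A secondary subtlety to settle first is the entry transition $\eta_2$: one must ensure the chosen maximal element $g_{m+1}$ can be taken to agree with $f_{m+1}$ on $C_{m+1}$ (and hence to have rank exactly $r$) rather than inflating a trunk rank, which is what seeds the whole invariant. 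Because $\eta_4$ restricts decrements to non-accepting states, every macrostate visited by $\rho^*$ remains a legal tight macrostate and no max-rank run is spuriously discarded, which is precisely the property whose absence invalidated the original $\algschewerao$ argument and which makes this proof considerably more involved than that of \algschewe.
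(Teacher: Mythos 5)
Your scaffolding coincides with the paper's own proof: soundness by observing that every transition of $\cut$ is also a transition of $\algschewe(\aut)$ (including the verification that $\eta_4$-targets are legal $\delta_3$-successors, which the paper glosses over), and completeness by taking a super-tight accepting run (\cref{lem:super-tight-run}), a trunk (\cref{lem:trunks}), entering the tight part through a maximal ranking offered by $\eta_2$ that still has the trunk core as a tight core, and maintaining the domination invariant $g_k \geq f_k$ with trunk agreement through $\eta_3$ and $\eta_4$ --- this is precisely the paper's \cref{claim:prop_run}. The genuine gap is the step you yourself call ``the main obstacle'': you never prove that the constructed run $\rho^*$ is accepting, and the sketch you offer for it would fail. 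You argue that after each cut-point of $\rho$ the $O$-component of $\rho^*$ empties within finitely many steps ``because the even-ranked vertices of $\dagw$ are endangered and die out.'' This conflates two different rankings: the states tracked in the $O$-component of $\rho^*$ carry the artificially inflated maximal ranks $g_k$, not their run-DAG ranks (and, incidentally, even-ranked DAG vertices are the \emph{finite} ones; \emph{endangered} vertices get odd ranks). A state can sit in $O$ of $\rho^*$ at even $g$-rank $i$ while its run-DAG rank is odd or much smaller, so the dying-out property of the rejecting run DAG --- which is what makes a \emph{super-tight} run accepting --- tells you nothing about $\rho^*$. Nothing in your argument excludes a nonempty even-rank class being trapped in $O$ forever. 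Moreover, your coupling rule ``take $\eta_4$ whenever $\rho$ empties its $O$'' omits the condition that the decremented ranking still dominates $f_{k+1}$; taking $\eta_4$ unconditionally at cut-points of $\rho$ can destroy the very invariant $g_k \geq f_k$ that your induction needs.

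The paper closes exactly this hole with a contradiction argument that has two ingredients your proposal lacks. First, assuming $\rho'$ is stuck at some even value $i'_\ell$ with $O'$ never emptied, one shows that the $\eta_4$-preconditions (matching $i$-values, $O_{k+1}=\emptyset$, \emph{and} preservation of domination) hold infinitely often; this rests on the strict inequality $f_k(q) < f'_k(q)$ for all $q \in O'_k$ at suitable positions (\cref{claim:emptying}), obtained by tracing predecessors back to the position where $\rho$ began emptying the runs of rank $i'_\ell$ --- since $\rho$ empties its $O$ there while the stuck run $\rho'$ does not, the ranks must have strictly separated. Second, since each $\eta_4$ step strictly decreases the (bounded from below) ranks of the non-accepting states in $O'$, the only way $O'$ can remain nonempty despite infinitely many $\eta_4$ steps is that it perpetually contains accepting states, which $\eta_4$ is forbidden to decrement; but then $\aut$ has a run over $\word$ visiting $F$ infinitely often, contradicting $\word \in \overline{\langof \aut}$. (This last argument is also where the paper's blanket assumption that $\aut$ is complete is used.) Without these two steps --- enabling $\eta_4$ infinitely often without breaking domination, and disposing of accepting states stuck in $O'$ via $\word \notin \langof \aut$ --- the acceptance of $\rho^*$, i.e., the heart of the theorem, remains unproven.
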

\begin{proof}
  Wlog.\ assume that~$\aut$ is complete.
  Let $\but = \algschewe(\aut)$.
  Showing $\langof \cut \subseteq \langof \but$ is easy (the transitions
  of~$\cut$ are contained in the transitions of~$\but$).
  Next, we show that $\langof \cut \supseteq \langof \but$.

  Let $\word\in \langof \but$ and~$\rho$ be a~super-tight run (from
  \cref{lem:super-tight-run}, we know that a~super-tight run exists) of $\but$
  over $\word = \wordof 0 \wordof 1 \wordof 2 \ldots$ s.t.
  \begin{equation*}
    \rho = S_0\dots S_m(S_{m+1}, O_{m+1}, f_{m+1}, i_{m+1})(S_{m+2}, O_{m+2}, f_{m+2}, i_{m+2})\dots
  \end{equation*}
  Let $\tau = C_{m+1} C_{m+2} \ldots$ be a~trunk of~$\rho$.
  We will construct the run
  \begin{equation*}
    \rho' = S_0\dots S_m(S_{m+1}, O_{m+1}', f_{m+1}', i_{m+1}')(S_{m+2}, O_{m+2}', f_{m+2}', i_{m+2}')\dots
  \end{equation*}
  of~$\cut$ on~$\word$
  in the following way (note that the $S$-components of the macrostates
  traversed by~$\rho$ and~$\rho'$ are the same):
  \begin{enumerate}
    \item  For the transition from the waiting to the tight part, we set
      $O_{m+1}' := \emptyset$ and $i_{m+1}' := 0$.
      The ranking $f_{m+1}'$ is set as follows.
      % Let $C_{m+1}$ be a~tight core (if there are more, we arbitrarily pick
      % one) of the run~$\rho$ at position~$m+1$
      % (cf.~\cref{sec:super-tight-runs}) and $r$ be the rank of~$\rho$.
      Let $r$ be the rank of~$\rho$ (remember that~$\rho$ is super-tight).
      We first construct an auxiliary (tight) ranking
      \begin{equation}\label{eq:first-ranking}
        g = \variant{f_{m+1}}{\{u \mapsto \max\{r-1, f_{m+1}(u)\}
        \mid u \in S_{m+1} \setminus C_{m+1}\}}.
      \end{equation}
      Note that~$C_{m+1}$ is also a~tight core of~$g$.
      There are now two possible cases:
      \begin{enumerate}
        \item  $(S_{m+1}, \emptyset, g, 0) \in \eta_2(S_m, \wordof m)$: If this
          holds, we set $f_{m+1}' := g$.
        \item  Otherwise, $\eta_2(S_m, \wordof m)$ contains at least one
          macrostate with a ranking~$h$ s.t.~$g < h$.
          We pick an arbitrary such ranking~$h$ from $\eta_2(S_m, \wordof m)$ and set
          $f_{m+1}' := h$.
          Note that~$C_{m+1}$ is also a~tight core of~$h$.
      \end{enumerate}
      Note that $\eta_2(S_m, \wordof m)$ contains at least one macrostate
      $(S_{m+1}, \emptyset, h, 0)$ with the rank~$r$ such that $g \leq h$.
      This follows from the fact that the reductions $\succrankred$ and $\ranksimred$
      only remove macrostates that do not occur on super-tight runs and that
      $(S_{m+1}, \emptyset, g, 0)$ is not removed using the reductions.
      The latter follows from~\eqref{eq:succrankred} and~\eqref{eq:ranksimred}.
      \ol{some discussion might be nice, but let's leave it for a journal
      submission}

    \item  Let $k > m$ and $i_*$ be such that $(\ignore, \ignore, f_*, i_*)
      = \eta_4(\sofiprimeof k, \wordof k)$.
      Then,
      \begin{itemize}
        \item  we set $(S_{k+1},
          O_{k+1}', f_{k+1}', i_{k+1}') := \eta_4((S_{k}, O_{k}', f_{k}',
          i_k'), \wordof k)$ if $O_{k+1} = \emptyset$, $i_* = i_{k+1}$, and
          $f_* \geq f_{k+1}$,
        \item otherwise, we set $(S_{k+1}, O_{k+1}', f_{k+1}', i_{k+1}') :=
          \eta_3((S_{k}, O_{k}', f_{k}', i_k'), \wordof k)$.
        % \item if $(S_{k+1}, \emptyset, f_{k+1}, i_{k+1})\in\pi$ and $i_{k+1} =
        %   i_{k+1}'$ then $(S_{k+1}, O_{k+1}', f_{k+1}',
        %   i_{k+1}')\in\eta_4((S_{k}, O_{k}', f_{k}', i_k'))$ ($i_{k+1}'$ does
        %   not depend on $\eta_3$ or $\eta_4$),
        % \item $(S_{k+1}, O_{k+1}', f_{k+1}', i_{k+1}')\in\eta_3((S_{k}, O_{k}',
        %   f_{k}', i_k'))$ otherwise.
      \end{itemize}
    % \item  Let $k > m$ and $(S, O_*, f_*, i_*) = \eta_4((S_k, O_k', f_k', i_k'),
    %   \wordof k)$.
    %   Then,
    %   %
    %   \begin{itemize}
    %     \item   if $O_{k+1} = \emptyset$ and $i_* = i_{k+1}$, we set $(S_{k+1},
    %       O_{k+1}', f_{k+1}', i_{k+1}') := (S, O_*, f_*, i_*)$,
    %       i.e., we use the output of $\eta_4$,
    %     \item otherwise, we set $(S_{k+1}, O_{k+1}', f_{k+1}', i_{k+1}') := \eta_3((S_{k}, O_{k}',
    %       f_{k}', i_k'), \wordof k)$.
    %     % \item if $(S_{k+1}, \emptyset, f_{k+1}, i_{k+1})\in\pi$ and $i_{k+1} =
    %     %   i_{k+1}'$ then $(S_{k+1}, O_{k+1}', f_{k+1}',
    %     %   i_{k+1}')\in\eta_4((S_{k}, O_{k}', f_{k}', i_k'))$ ($i_{k+1}'$ does
    %     %   not depend on $\eta_3$ or $\eta_4$),
    %     % \item $(S_{k+1}, O_{k+1}', f_{k+1}', i_{k+1}')\in\eta_3((S_{k}, O_{k}',
    %     %   f_{k}', i_k'))$ otherwise.
    %   \end{itemize}
  \end{enumerate}
  Intuitively, $\rho'$ simulates the super-tight run~$\rho$ of~$\but$ with the
  difference that
  \begin{inparaenum}[(i)]
    \item  the transition from the waiting to the tight part sets the ranks of
      all non-core states to $r-1$,
    \item  in the tight part, $\rho'$~keeps taking the maximizing~$\eta_3$
      transitions until it happens that $\rho'$ is stuck with emptying some~$O$,
      in which case, the ranks of all non-accepting states in~$O$ are decreased
      (the~$\eta_4$ transition).
  \end{inparaenum}

  First, we prove that the run~$\rho'$ constructed according to the procedure
  above is infinite.
  Intuitively, there are two possibilities how the construction of~$\rho'$ can break:
  \begin{inparaenum}[(i)]
    \item  the macrostate $(S_{m+1}, \emptyset, f_{m+1}', 0)$ is not in
      $\eta_2(S_m, \wordof m)$,
    \item  $\eta_3(\sofiof m, \wordof m) = \emptyset$, or
    \item  $\eta_4(\sofiof m, \wordof m) = \emptyset$.
  \end{inparaenum}

	\begin{claim}
	\label{claim:prop_run}
    For every $k > m$ the following conditions hold:
    \begin{enumerate}[(i)]
      \item  the macrostate $\rho'(k)$ is well defined,
      \item  $f'_{k} \geq f_{k}$, and
      \item  $C_k$ is a~tight core of~$\rho'(k)$.
    \end{enumerate}
  \end{claim}
  \begin{claimproof}
    By induction on the position $k>m$ in~$\rho'$.
    \begin{itemize}
      \item  $k = m+1$:
        \begin{enumerate}[(i)]
          \item  Proving $(S_{m+1}, \emptyset, f_{m+1}', 0) \in \eta_2(S_m, \wordof m)$:
            This easily follows from the construction of $\sofiof{m+1}$ given
            above.
          \item  Proving $f'_{m+1} \geq f_{m+1}$:
            This, again, easily follows from the construction of $\sofiof{m+1}$.
            In particular, the $g$ constructed in~\eqref{eq:first-ranking}
            satisfied the property $g \geq f_{m+1}$ and the~$f'_{m+1}$
            constructed from it satisfies $f_{m+1}' \geq g$.
          \item We have that $C_{m+1}$ is a tight core of $\sofiof{m+1}$. From
						the definition of $f_{m+1}'$ we directly obtain that  $C_{m+1}$ is
						also a tight core of $\sofiprimeof{m+1}$.
        \end{enumerate}

      \item  $k+1$: Suppose the claim holds for~$k$.
        \begin{enumerate}[(i)]
          \item  (and (iii)) For proving $\rho'(k+1)$ is well-defined, from the
            construction, we need to prove the following:
            \begin{itemize}
              \item  \emph{If $\rho'(k+1)$ is the $\eta_3$-successor of $\rho'(k)$,
                we need to show that $\eta_3(\sofiprimeof k, \wordof k) \neq
                \emptyset$.}

                This condition can be proved by showing that the ranking $\fpmax =
                \maxrankof{\sofiprimeof k, \wordof k}$ is tight.
                From the induction hypotheses
                (``$f'_k \geq f_k$'' and ``$C_k$ is a~tight core of $f'_k$''),
                we know that $f_k$ and $f'_k$ coincide on states from~$C_k$.
                Further, from \cref{lem:trunks}, it holds that for every state
                $q_k \in C_k$ there is a state $q_{k+1} \in C_{k+1}$ such that
                $f_k(q_k) = f_{k+1}(q_{k+1})$.
                From the construction of~$\fpmax$, we can conclude that it also
                holds that $f'_k(q_k) = \fpmax(q_{k+1})$ (which proves~(iii)).
                Using induction hypothesis one more time (``$f'_k$ is tight''),
                we can conclude that $\fpmax$ is also tight.

              \item  \emph{If $O_{k+1} = \emptyset$, $i_{k+1} = i_{k+1}'$, and
                $f'_{k+1} \geq f_{k+1}$ hold at the same time (i.e., $\rho'(k+1)$ is the
                $\eta_4$-successor of $\rho'(k)$), we need to show that
                $\eta_4(\sofiprimeof k, \wordof k) \neq~\emptyset$.}

                Above, we have already shown that $\eta_3(\sofiprimeof k,
                \wordof k) \neq~\emptyset$.
                From the definition,
                in order for $\eta_4(\sofiprimeof k, \wordof k) = \emptyset$, it
                would need to hold that $i_{k+1}' = 0$.
                Since our assumption is that~$\aut$ is complete and we know
                that~$\rho$ is accepting, it needs to hold that at every
                position $j > m$, we have $f_j(q) > 0$ for any state~$q \in
                S_j$ (otherwise, if $q$ appeared in the~$O$-component of some
                macrostate in~$\rho$, it would never disappear and so~$\rho$
                could not be accepting).

                The proof of (iii) easily follows from the previous step
                for~$\eta_3$, since the ranking function of the result
                of~$\eta_4$ differs from the one for~$\eta_3$ only on states
                from the $O$-component, which are even and therefore, by
                definition, not in a~tight core.
            \end{itemize}

          \item  Proving $f'_{k+1} \geq f_{k+1}$ assuming the induction
            hypothesis $f'_k \geq f_k$:
            \begin{itemize}
              \item  \emph{If $\rho'(k+1)$ is the $\eta_3$-successor of
                $\rho'(k)$},
                $f'_{k+1} \geq f_{k+1}$ follows immediately from the fact that
                the~$\eta_3$ transition function yields the maximal successor
                ranking.

              \item  \emph{If $O_{k+1} = \emptyset$, $i_{k+1} = i_{k+1}'$, and
                $f'_{k+1} \geq f_{k+1}$ hold at the same time (i.e., $\rho'(k+1)$ is the
                $\eta_4$-successor of $\rho'(k)$)},
                $f'_{k+1} \geq f_{k+1}$ is already a~condition for~$\eta_4$ to be taken.
                % \claimqed
                \claimqedhere
            \end{itemize}
        \end{enumerate}
    \end{itemize}

  \end{claimproof}

  Next, we will show that $\rho'$ is accepting, i.e., that $O$-component of
  macrostates in~$\rho'$ is emptied infinitely many times.
  For the sake of contradiction, assume that $\rho'$ is not accepting, i.e.,
  for some $\ell > m$, it happens that for all $k \geq \ell$ it holds that
  $O'_k \neq \emptyset$ and $i'_k = i'_\ell$ (the run is ``stuck'' at
  some~$i'$ and cannot empty~$O'$).
  We will show that if~$\rho'$ is ``stuck'' at some $i'$, it will contain
  infinitely many macrostates obtained using an~$\eta_4$ transition.
  An~$\eta_4$ transition decreases ranks of all non-final states in~$O'$ and, as
  a~consequence, removes such tracked runs from~$O'$.
  Therefore, if the rank of some run in~$O'$ is infinitely often not decreased
  by~$\eta_4$, there needs to be a~corresponding run of~$\aut$ with infinitely
  many occurrences of an accepting state, so it would need to hold that $\word
  \in \aut$, leading to a~contradiction.

  Let us now prove the previous reasoning more formally.
  First, we show that~$\rho'$ needs to contain infinitely many occurrences
  of~$\eta_4$-obtained macrostates.
  Since~$\rho$ is accepting, it satisfies infinitely often the condition
  that~$O$ is empty and $i = i'_\ell$.
  To satisfy the condition for executing an~$\eta_4$ transition,
  we need to show that for infinitely many~$k$ it in addition holds
  that $ f_k' = \variant{\fpmax}{\{ q\mapsto \fpmax(q)-1 \mid u\in O_k'\setminus
  F \}} \geq f_k$ where~$\fpmax$ is as in the definition of~$\eta_3$.
  In particular, we will show that for infinitely many~$k$, we will have $i_k =
  i'_\ell$, $O_k = \emptyset$, and $\forall q \in O'_k \setminus F: \fpmax(q) >
  f(q)$ (from \cref{claim:prop_run} we already know that $\fpmax \geq
  f_k$).

  Let~$p > \ell$ be a~position such that
  $i_{p-1} \neq i'_\ell$, $O_{p-1} = \emptyset$, and $i_p = i'_\ell$, i.e.,
  a~position at which run~$\rho$ started emptying runs with rank~$i'_\ell$.
  Because~$\rho$ is accepting, there is a~position $k \geq p$ such that
  $\rho(k) = (S_{k}, \emptyset, f_{k}, i'_\ell)$, therefore, the ranks of
  all runs tracked in~$O_p$ were decreased (otherwise, $O_{k}$~could not be
  empty).
  Consider the following claim.

  \begin{claim}\label{claim:emptying}
    $\forall q \in O'_k: f_k(q) < f'_k(q)$
  \end{claim}

  \begin{claimproof}
  The weaker property $f_k \leq f'_k$ follows from
  \cref{claim:prop_run}.
  We prove the strict inequality for states in $O'_k$ by contradiction.
  Assume that $f_k(q) = f'_k(q)$ for some $q \in O'_k$.
  Then there needs to be a~predecessor~$s$ of~$q$ in~$S_p$ such that $f_p(s) =
  i'_\ell$ and so $s \in O_p$.
  But since $q \notin O_k$, then somewhere between $p$ and $k$, the rank of the
  run in~$\aut$ must have been decreased.
  Therefore, $f_k(q) < f'_k(q)$.
  \end{claimproof}

  From \cref{claim:emptying} and the fact that $f'_k(q) \leq \fpmax$ it follows
  that $\forall q \in O'_k \setminus F: \fpmax(q) > f(q)$, so an~$\eta_4$
  transition was taken infinitely often in~$\rho'$.

  The last thing to show is that when $\eta_4$ is taken infinitely often,
  $O'$~will be eventually empty.
  The condition does not hold only in the case when the rank of a~run of~$\aut$
  tracked in~$O'$ is infinitely ofren not decreased because it is represented
  in~$O'$ by a~final state $q \in O' \cap F$.
  But then~$\aut$ contains a~run over~$\word$ that touches an accepting state
  infinitely often, so $\word \in \langof \aut$, which is a~contradiction.
  % Since~$\rho$ is accepting and, therefore, infinitely often emptying~$O$ and
  % rotating the $i$-component, there will be a~position~$p > \ell$ such that
  % $i_{p-1} \neq i'_\ell$, $O_{p-1} = \emptyset$, and $i_p = i'_\ell$.
  %
  % Because~$\rho$ is accepting, there is a~position $k \geq p$ such that
  % $\rho(k) = (S_{k}, \emptyset, f_{k}, i'_\ell)$, therefore, the ranks of
  % all runs tracked in~$O_p$ were decreased (otherwise, $O_{k}$~could not be
  % empty).
  %
	% 	\begin{itemize}
	% 	\item Further observe that after trigering $\eta_4$ transition, we
	% 		$O_{k+1}'$ contains only finite states (or it is an emptyset). This is
	% 		given by the construction.
	% 	\begin{itemize}
	% 		\item We need to show that $\rho'$ contains infinite many final states (i.e.,
	% 			we flush $O'$ infinitely many times).
	% 		\item Assume the opposite, $O$s in $\rho$ flush infinitely many times, but
	% 			$O'$ in $\rho'$ don't.
	% 		\item Since $\rho$ flush infinitely many times, we know that the transition
	% 			$\eta_4$ was trigered infinite many times.
	% 		\item Since $\rho'$ is not accepting, there is some infinite path $\xi$
	% 			taken from $\rho'$ having all states from some position equaly even, i.e.
	% 			$f_\ell'(\xi_\ell)$ is even (here we mean by the path sequence of states
	% 			of the original uncompleted BA).
	% 		\item The only way this happen is that $\xi$ contains infinitely many
	% 			final states, which is contradiction (recall that after transition
	% 			$\eta_4$, the $O'$ part contains final states only).
	% 	\end{itemize}
	% \end{itemize}
\end{proof}

Note that \algmaxrank is incompatible with \rankrestr since \rankrestr optimizes
the transitions in the tight part of the complement BA, which
are abstracted in \algmaxrank.

\vspace{-0.0mm}
\subsection{Backing Off}
\vspace{-0.0mm}
%*******************************************************************************

Our final optimization, called \algbackoff, is a~strategy for guessing when our
optimized rank-based construction is likely (despite the optimizations) to
generate too many states and when it might be helpful to give up and use
a~different complementation procedure instead.
We evaluate this after the initial phase of \algschewe, which
constructs~$\delta_2$
($\eta_2$~in \algmaxrank, $\theta_2$ in \algdelay; we will just
use~$\delta_2$ now), finishes.
We provide a~set of pairs $\{(\StateSize_j, \RankMax_j)\}_{j \in
\J}$ for an index set~$\J$ (obtained experimentally)
and check (after~$\delta_2$ is constructed) that for all
$\sofi \in \imgof{\delta_2}$ and all $j \in \J$ it holds that
either $|S| < \StateSize_j$ or $\rankof f < \RankMax_j$.
If for some $\sofi$ and~$j$ the condition does not hold, we terminate the
construction and execute a~different, \emph{surrogate}, procedure.

%%%%%%%%%%%%%%%%%%%%%%%%%%%%%%%%%%%%%%%%%%%%%%%%%%%%%%%%%%%%%%%%%%%%%%%%%%%%%%
%%%%%%%%%%%%%%%%%%%%%%%%%%                  %%%%%%%%%%%%%%%%%%%%%%%%%%%%%%%%%%
%%%%%%%%%%%%%%%%%%%%%%%%%% TABLES AND PLOTS %%%%%%%%%%%%%%%%%%%%%%%%%%%%%%%%%%
%%%%%%%%%%%%%%%%%%%%%%%%%%                  %%%%%%%%%%%%%%%%%%%%%%%%%%%%%%%%%%
%%%%%%%%%%%%%%%%%%%%%%%%%%%%%%%%%%%%%%%%%%%%%%%%%%%%%%%%%%%%%%%%%%%%%%%%%%%%%%
\newcommand{\figother}[0]{
\begin{figure}[t]
\begin{subfigure}[b]{0.49\linewidth}
	\centering
\includegraphics[width=5.5cm,keepaspectratio]{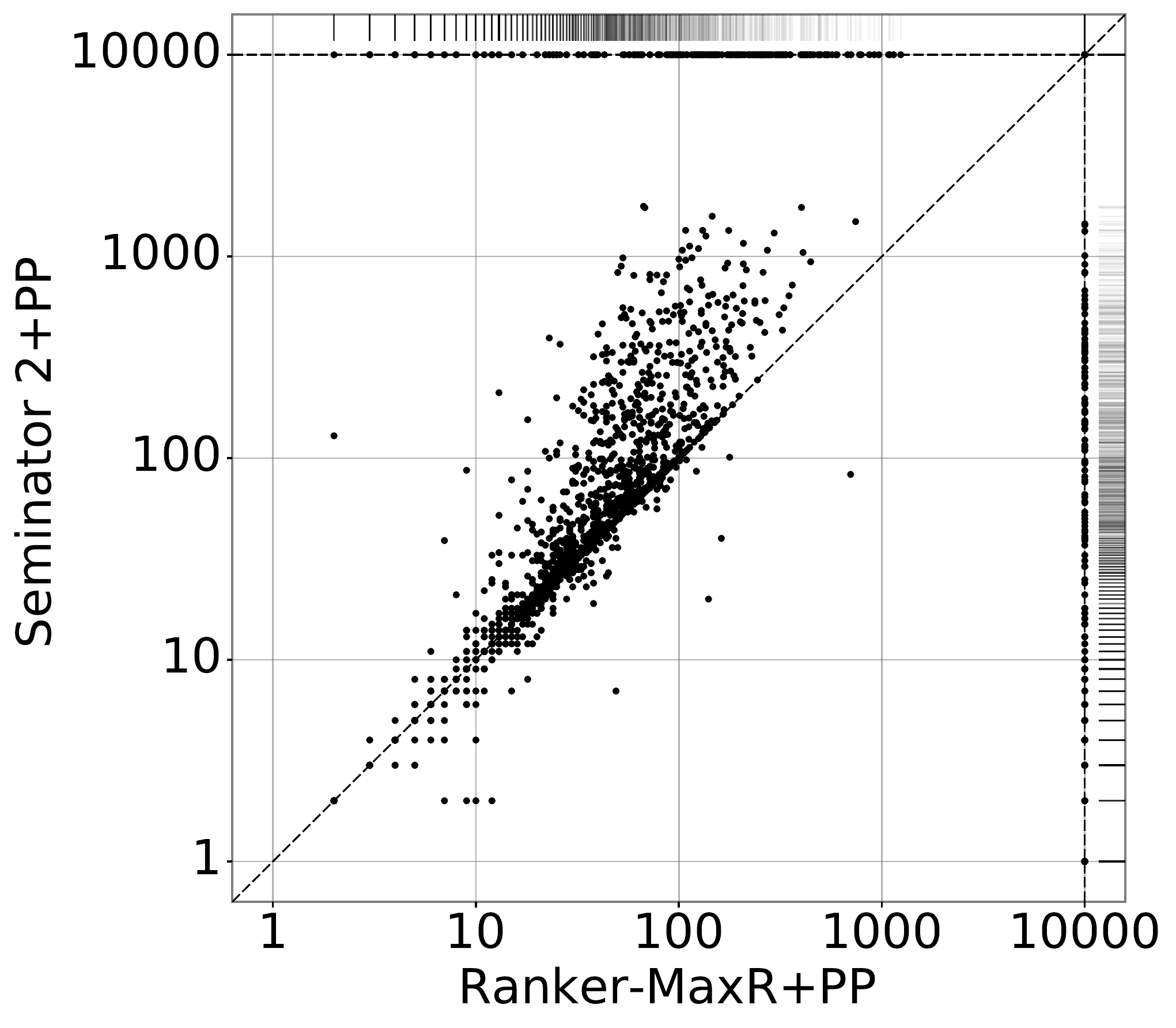}
\vspace*{-2mm}
\caption{\centering\footnotesize $\rankermaxrank$ vs \seminator}
\label{fig:ranker_vs_seminator}
\end{subfigure}
\begin{subfigure}[b]{0.49\linewidth}
	\centering
\includegraphics[width=5.5cm,keepaspectratio]{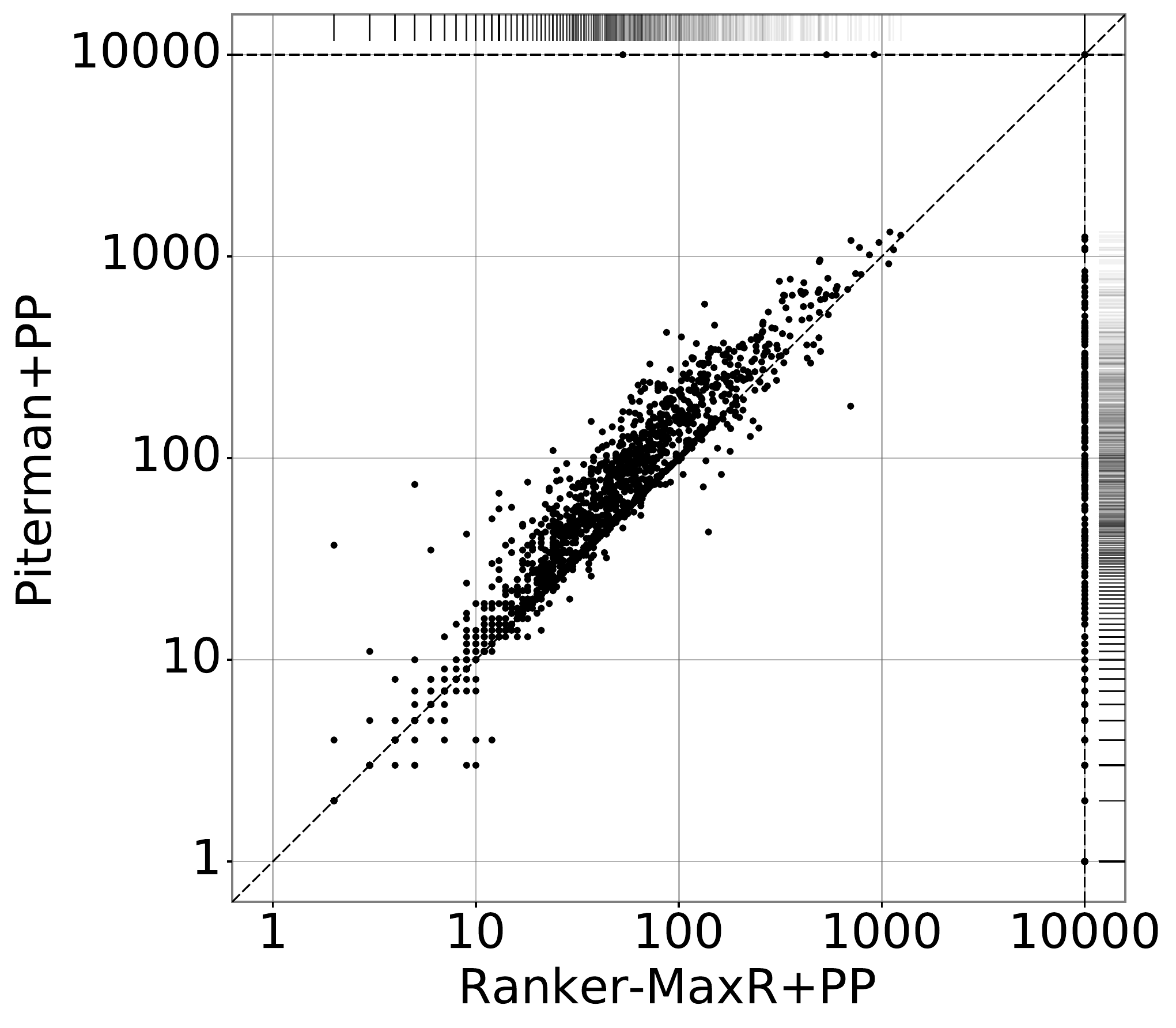}
\vspace*{-2mm}
\caption{\centering\footnotesize $\rankermaxrank$ vs \piterman}
\label{fig:ranker_vs_piterman}
\end{subfigure}
\vspace{4mm}

\begin{subfigure}[b]{0.49\linewidth}
	\centering
\includegraphics[width=5.5cm,keepaspectratio]{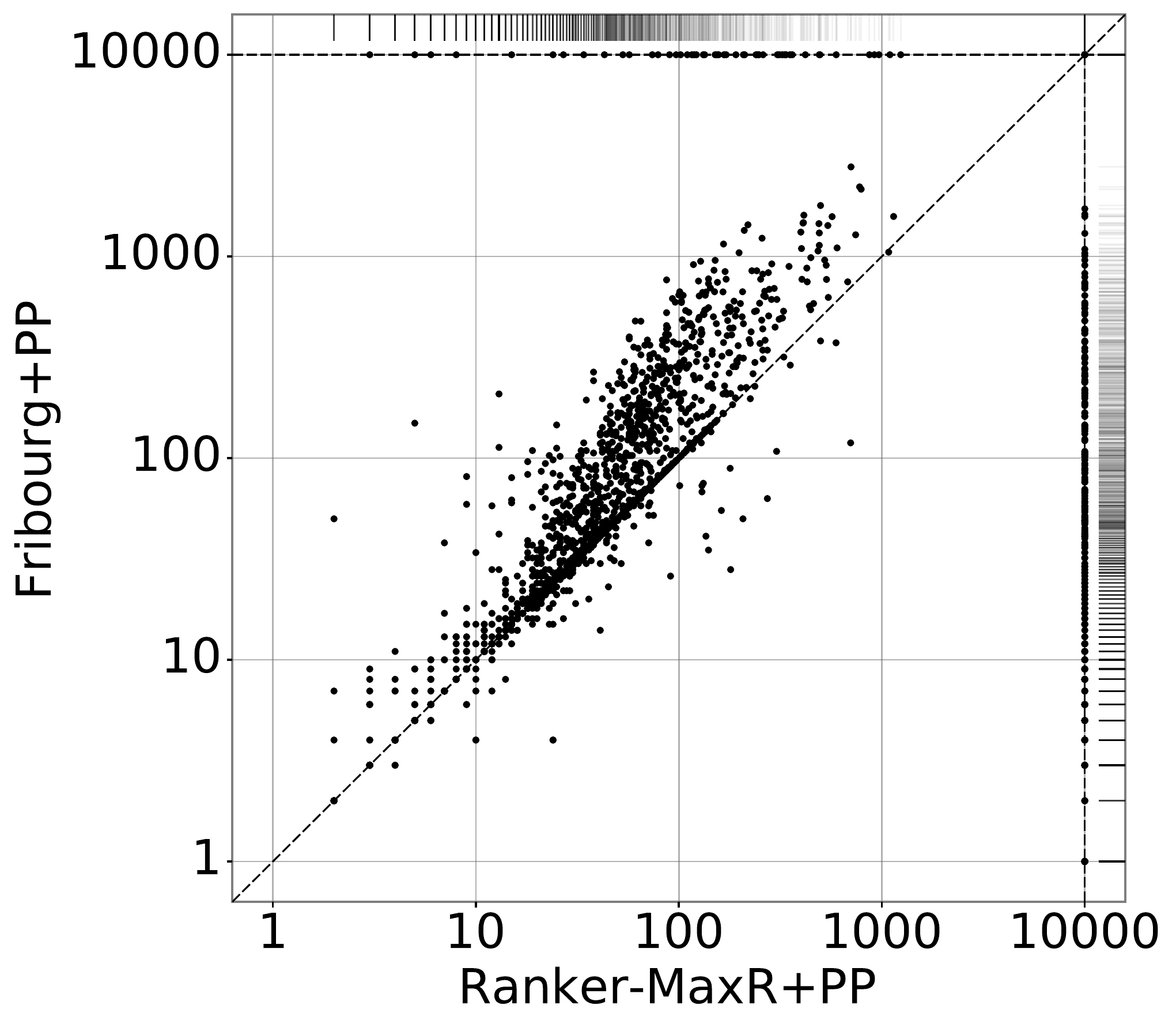}
\vspace*{-2mm}
\caption{\centering\footnotesize $\rankermaxrank$ vs \fribourg}
\label{fig:ranker_vs_fribourg}
\end{subfigure}
\begin{subfigure}[b]{0.49\linewidth}
	\centering
\includegraphics[width=5.5cm,keepaspectratio]{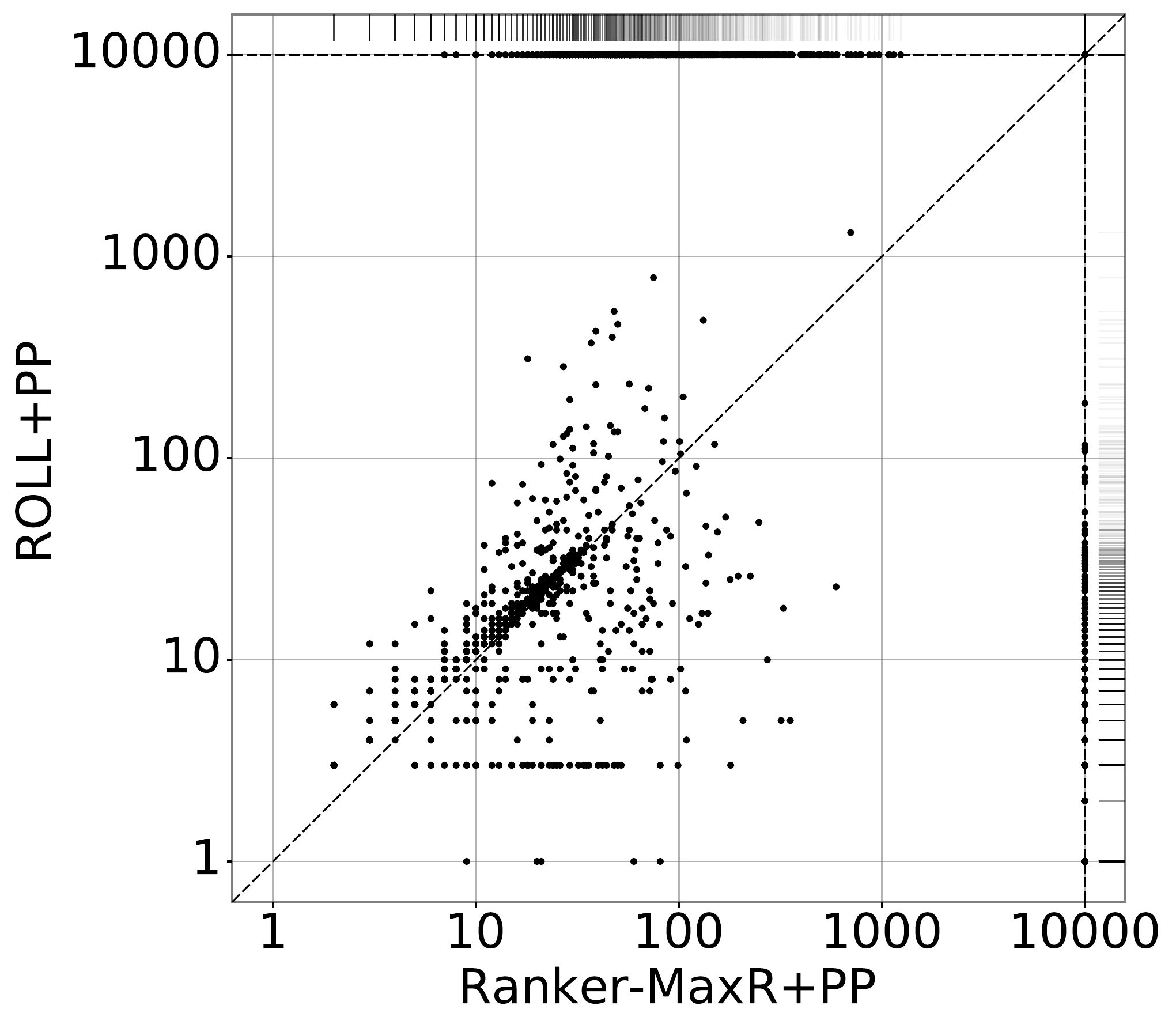}
\vspace*{-2mm}
\caption{\centering\footnotesize $\rankermaxrank$ vs \roll}
\label{fig:ranker_vs_roll}
\end{subfigure}
\caption{\footnotesize Comparison of the sizes of the BAs constructed using our optimized
  rank-based construction and other approaches.
  Timeouts are on the dashed lines.
  % The output of all tools was processed by \autfilt (simplification level
  % \texttt{--high}) from \spot.
  }
\label{fig:comparison_other}
\vspace*{-2mm}
\end{figure}
}

\newcommand{
\begin{table}[t]
\caption{
  \footnotesize
  Statistics for our experiments.
  The upper part compares different optimizations of the rank-based procedure
  (no postprocessing).
  The lower part compares our approach with other methods (with postprocessing).
  ``BO'' denotes the \algbackoff optimization.
  In the left-hand side of the table,
  the column ``\textbf{med.}'' contains the median,
  ``\textbf{std.\ dev}'' contains the standard deviation, and
  ``\textbf{TO}'' contains the number of timeouts (5\,mins).
  In the right-hand side of the table, we provide the number of cases where our
  tool ($\rankermaxrank$ without postprocessing in the upper part and with
  postprocessing in the lower part) was strictly better (``\textbf{wins}'') or
  worse (``\textbf{losses}'').
  The ``\textbf{(TO)}'' column gives the number of times this was because
  of the timeout of the loser.
  Approaches implemented in \goal are labelled with~\goalpic.
  }
\vspace{-5mm}
\label{tab:summary}
\begin{center}
	\scalebox{0.9}{
	\footnotesize
  \hspace*{-2mm}%
	\newcolumntype{d}[1]{D{.}{.}{#1}}
\begin{tabular}{lrd{4.2}rd{5.2}r@{\hskip 1mm}|rrrr}
\toprule
 \multicolumn{1}{c}{\bf method}                     &
  \multicolumn{1}{c}{\bf ~~max~~}&      \multicolumn{1}{c}{\bf ~~~mean~~~~} &
  \multicolumn{1}{c}{\bf med.} &
  \multicolumn{1}{c}{\bf ~~std.\ dev~~} &   \multicolumn{1}{c|}{\bf TO} &
  \multicolumn{1}{c}{\bf wins} &
  \multicolumn{1}{c}{\bf (TO)} &
  \multicolumn{1}{c}{\bf losses} &
  \multicolumn{1}{c}{\bf (TO)} \\
\midrule
 \emphcell  $\rankermaxrank$ & 319\,119 & 8\,051.58   &      185 & 28\,891.4    &        360 & \centercell{---} & \centercell{---} & \centercell{---} & \centercell{---} \\
 $\rankerrankrestr$          & 330\,608 & 9\,652.67   &      222 & 32\,072.6    &        854 &   1810 &             (495) &     109 &                (1)  \\
 $\algschewerao$~\goalpic             &  67\,780 & 5\,227.3    &      723 & 10\,493.8    &        844 &   2030 &             (486) &       3 &                (2)  \\
\midrule
\emphcell $\rankermaxrank$   &   1\,239 &   61.83 &       32 &   103.18 &        360 & \centercell{---} & \centercell{---} & \centercell{---} & \centercell{---} \\
 $\rankermaxrank$+BO         &   1\,706 &   73.65 &       33 &   126.8  &         17 & \centercell{---} & \centercell{---} & \centercell{---} & \centercell{---} \\
 \piterman~\goalpic                 &   1\,322 &   88.30 &       40 &   142.19 &         12 &   1\,069 &              (3) &     469 &              (351) \\
 \safra~\goalpic                    &   1\,648 &   99.22 &       42 &   170.18 &        158 &   1\,171 &            (117) &     440 &              (319) \\
 \spot                     &   2\,028 &   91.95 &       38 &   158.13 &         13 &    907 &              (6) &     585 &              (353) \\
\fribourg~\goalpic                 &   2\,779 &  113.03 &       36 &   221.91 &         78 &    996 &             (51) &     472 &              (333) \\
 $\ltldstar$                 &   1\,850 &   88.76 &       41 &   144.09 &        128 &   1\,156 &             (99) &     475 &              (331) \\
 $\seminator$                &   1\,772 &   98.63 &       33 &   191.56 &        345 &   1\,081 &            (226) &     428 &              (241) \\
 $\roll$                     &   1\,313 &   21.50 &       11 &    57.67 &       1\,106 &   1\,781 &           (1\,041) &     522 &              (295) \\
\bottomrule
\end{tabular}

}
\end{center}
\vspace*{-7mm}
\end{table}
}[0]{
\begin{table}[t]
\caption{
  \footnotesize
  Statistics for our experiments.
  The upper part compares different optimizations of the rank-based procedure
  (no postprocessing).
  The lower part compares our approach with other methods (with postprocessing).
  ``BO'' denotes the \algbackoff optimization.
  In the left-hand side of the table,
  the column ``\textbf{med.}'' contains the median,
  ``\textbf{std.\ dev}'' contains the standard deviation, and
  ``\textbf{TO}'' contains the number of timeouts (5\,mins).
  In the right-hand side of the table, we provide the number of cases where our
  tool ($\rankermaxrank$ without postprocessing in the upper part and with
  postprocessing in the lower part) was strictly better (``\textbf{wins}'') or
  worse (``\textbf{losses}'').
  The ``\textbf{(TO)}'' column gives the number of times this was because
  of the timeout of the loser.
  Approaches implemented in \goal are labelled with~\goalpic.
  }
\vspace{-5mm}
\label{tab:summary}
\begin{center}
	\scalebox{0.9}{
	\footnotesize
  \hspace*{-2mm}%
	\newcolumntype{d}[1]{D{.}{.}{#1}}
\begin{tabular}{lrd{4.2}rd{5.2}r@{\hskip 1mm}|rrrr}
\toprule
 \multicolumn{1}{c}{\bf method}                     &
  \multicolumn{1}{c}{\bf ~~max~~}&      \multicolumn{1}{c}{\bf ~~~mean~~~~} &
  \multicolumn{1}{c}{\bf med.} &
  \multicolumn{1}{c}{\bf ~~std.\ dev~~} &   \multicolumn{1}{c|}{\bf TO} &
  \multicolumn{1}{c}{\bf wins} &
  \multicolumn{1}{c}{\bf (TO)} &
  \multicolumn{1}{c}{\bf losses} &
  \multicolumn{1}{c}{\bf (TO)} \\
\midrule
 \emphcell  $\rankermaxrank$ & 319\,119 & 8\,051.58   &      185 & 28\,891.4    &        360 & \centercell{---} & \centercell{---} & \centercell{---} & \centercell{---} \\
 $\rankerrankrestr$          & 330\,608 & 9\,652.67   &      222 & 32\,072.6    &        854 &   1810 &             (495) &     109 &                (1)  \\
 $\algschewerao$~\goalpic             &  67\,780 & 5\,227.3    &      723 & 10\,493.8    &        844 &   2030 &             (486) &       3 &                (2)  \\
\midrule
\emphcell $\rankermaxrank$   &   1\,239 &   61.83 &       32 &   103.18 &        360 & \centercell{---} & \centercell{---} & \centercell{---} & \centercell{---} \\
 $\rankermaxrank$+BO         &   1\,706 &   73.65 &       33 &   126.8  &         17 & \centercell{---} & \centercell{---} & \centercell{---} & \centercell{---} \\
 \piterman~\goalpic                 &   1\,322 &   88.30 &       40 &   142.19 &         12 &   1\,069 &              (3) &     469 &              (351) \\
 \safra~\goalpic                    &   1\,648 &   99.22 &       42 &   170.18 &        158 &   1\,171 &            (117) &     440 &              (319) \\
 \spot                     &   2\,028 &   91.95 &       38 &   158.13 &         13 &    907 &              (6) &     585 &              (353) \\
\fribourg~\goalpic                 &   2\,779 &  113.03 &       36 &   221.91 &         78 &    996 &             (51) &     472 &              (333) \\
 $\ltldstar$                 &   1\,850 &   88.76 &       41 &   144.09 &        128 &   1\,156 &             (99) &     475 &              (331) \\
 $\seminator$                &   1\,772 &   98.63 &       33 &   191.56 &        345 &   1\,081 &            (226) &     428 &              (241) \\
 $\roll$                     &   1\,313 &   21.50 &       11 &    57.67 &       1\,106 &   1\,781 &           (1\,041) &     522 &              (295) \\
\bottomrule
\end{tabular}

}
\end{center}
\vspace*{-7mm}
\end{table}
}

\newcommand{\figrank}[0]{
\begin{figure}[t]
\begin{subfigure}[b]{0.49\linewidth}
	\centering
\includegraphics[width=5.5cm,keepaspectratio]{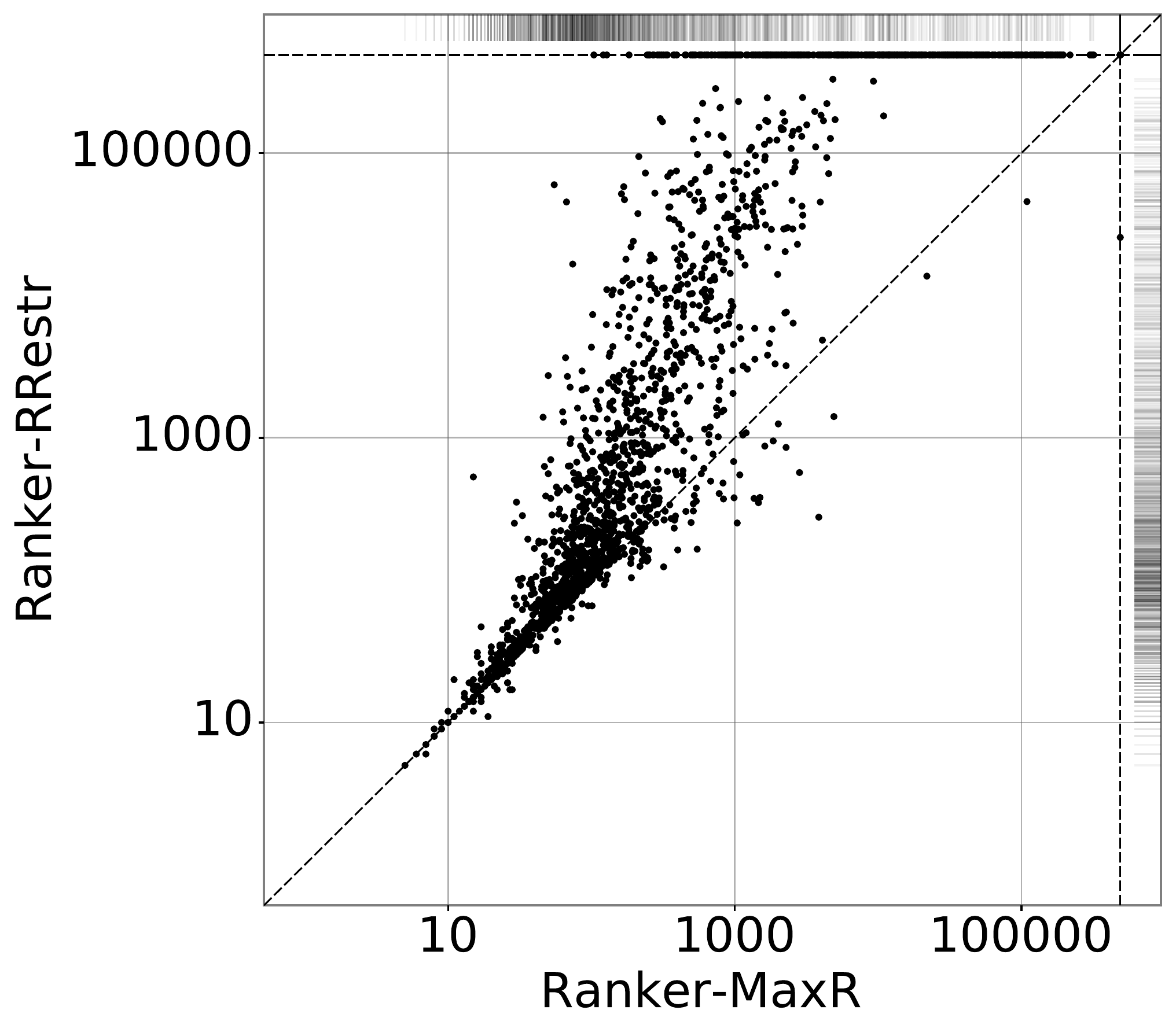}
\caption{\centering\footnotesize $\rankermaxrank$ vs $\rankerrankrestr$}
\label{fig:maxrank_vs_rankrestr}
\end{subfigure}
\begin{subfigure}[b]{0.49\linewidth}
	\centering
\includegraphics[width=5.5cm,keepaspectratio]{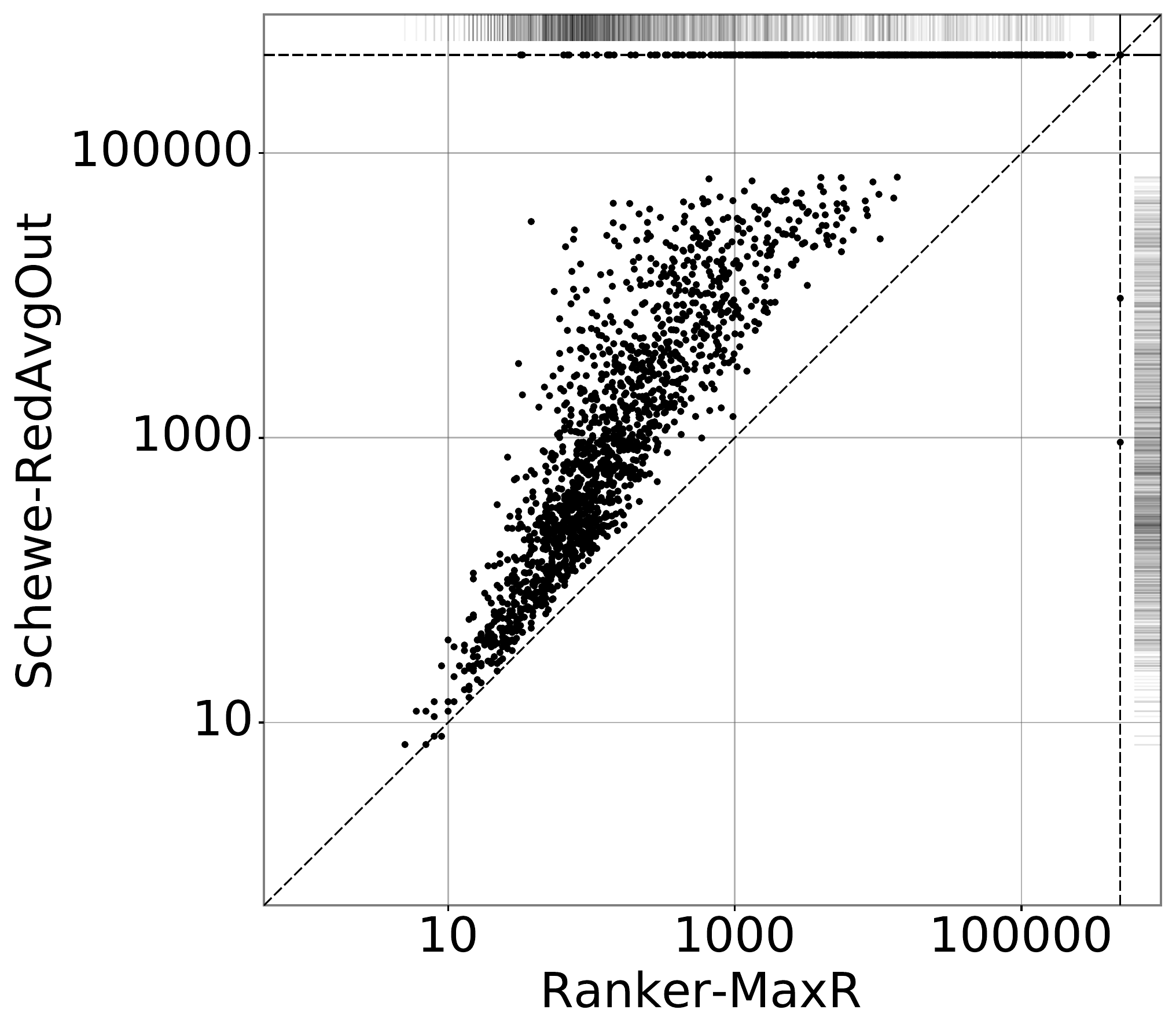}
\caption{\centering\footnotesize $\rankermaxrank$ vs $\algschewerao$}
\label{fig:ranker_vs_schewe}
\end{subfigure}
\caption{\footnotesize Evaluation of the effectiveness of our optimizations on the generated
  state space (axes are logarithmic).
  The horizontal and vertical dashed lines represent timeouts.}
\label{fig:rank-based}
\vspace*{-2mm}
\end{figure}
}

\newcommand{
\begin{table}[t]
	\begin{minipage}{0.4\textwidth}
		\caption{Run times of the tools [s]}
    \vspace{-2mm}
		\label{tab:times}
		\scalebox{0.9}{
		\footnotesize
		\begin{tabular}{lrrr}
\toprule
 \centercell{\bf method}                 &
  \centercell{\bf mean} &
  \centercell{\bf ~med.~} &
  \centercell{\bf ~std.\ dev} \\
\midrule
 $\rankermaxrank$     & 10.21 &    0.84  &   28.43  \\
 $\rankermaxrank$+BO  &  9.40 &    3.03  &   16.00  \\
 \piterman~\goalpic            &  7.47 &    6.03  &    8.46  \\
 \safra~\goalpic               & 15.49 &    7.03  &   35.59  \\
 \spot                &  1.07 &    0.02  &    8.94  \\
 \fribourg~\goalpic            & 19.43 &   10.01  &   32.76  \\
 \ltldstar            &  4.17 &    0.06  &   22.19  \\
 \seminator           & 11.41 &    0.37  &   34.97  \\
 \roll                & 42.63 &   14.92  &   67.31  \\
\bottomrule
\end{tabular}

		}
	\end{minipage}
	\hfill
	\begin{minipage}{0.5\textwidth}
		\caption{Wins and losses for $\rankermaxrank$+BO}
    \vspace{-2mm}
		\label{tab:winners_bo}
		\scalebox{0.9}{
		\footnotesize
		\newcolumntype{d}[1]{D{.}{.}{#1}}
\begin{tabular}{lrrrr}
\toprule
 \multicolumn{1}{c}{\bf method}                     &
  \multicolumn{1}{c}{\bf wins} &
  \multicolumn{1}{c}{\bf (TO)} &
  \multicolumn{1}{c}{\bf losses} &
  \multicolumn{1}{c}{\bf (TO)} \\
\midrule
 \piterman~\goalpic      &   1\,160 &              (4) &     112 &               (9) \\
 \safra~\goalpic         &   1\,255 &            (147) &     222 &               (6) \\
 \spot          &      985 &              (8) &     328 &              (12) \\
 \fribourg~\goalpic      &   1\,076 &             (71) &     287 &              (10) \\
 \ltldstar      &   1\,208 &            (118) &     272 &               (7) \\
 \seminator     &   1\,236 &            (333) &     253 &               (5) \\
 \roll          &   1\,923 &         (1\,096) &     360 &               (7) \\
\hline
\end{tabular}

		}
		\vspace*{1.7em}
	\end{minipage}
  \vspace{-5mm}
\end{table}
}[0]{
\begin{table}[t]
	\begin{minipage}{0.4\textwidth}
		\caption{Run times of the tools [s]}
    \vspace{-2mm}
		\label{tab:times}
		\scalebox{0.9}{
		\footnotesize
		\begin{tabular}{lrrr}
\toprule
 \centercell{\bf method}                 &
  \centercell{\bf mean} &
  \centercell{\bf ~med.~} &
  \centercell{\bf ~std.\ dev} \\
\midrule
 $\rankermaxrank$     & 10.21 &    0.84  &   28.43  \\
 $\rankermaxrank$+BO  &  9.40 &    3.03  &   16.00  \\
 \piterman~\goalpic            &  7.47 &    6.03  &    8.46  \\
 \safra~\goalpic               & 15.49 &    7.03  &   35.59  \\
 \spot                &  1.07 &    0.02  &    8.94  \\
 \fribourg~\goalpic            & 19.43 &   10.01  &   32.76  \\
 \ltldstar            &  4.17 &    0.06  &   22.19  \\
 \seminator           & 11.41 &    0.37  &   34.97  \\
 \roll                & 42.63 &   14.92  &   67.31  \\
\bottomrule
\end{tabular}

		}
	\end{minipage}
	\hfill
	\begin{minipage}{0.5\textwidth}
		\caption{Wins and losses for $\rankermaxrank$+BO}
    \vspace{-2mm}
		\label{tab:winners_bo}
		\scalebox{0.9}{
		\footnotesize
		\newcolumntype{d}[1]{D{.}{.}{#1}}
\begin{tabular}{lrrrr}
\toprule
 \multicolumn{1}{c}{\bf method}                     &
  \multicolumn{1}{c}{\bf wins} &
  \multicolumn{1}{c}{\bf (TO)} &
  \multicolumn{1}{c}{\bf losses} &
  \multicolumn{1}{c}{\bf (TO)} \\
\midrule
 \piterman~\goalpic      &   1\,160 &              (4) &     112 &               (9) \\
 \safra~\goalpic         &   1\,255 &            (147) &     222 &               (6) \\
 \spot          &      985 &              (8) &     328 &              (12) \\
 \fribourg~\goalpic      &   1\,076 &             (71) &     287 &              (10) \\
 \ltldstar      &   1\,208 &            (118) &     272 &               (7) \\
 \seminator     &   1\,236 &            (333) &     253 &               (5) \\
 \roll          &   1\,923 &         (1\,096) &     360 &               (7) \\
\hline
\end{tabular}

		}
		\vspace*{1.7em}
	\end{minipage}
  \vspace{-5mm}
\end{table}
}

\newcommand{
\begin{wraptable}[12]{r}{6.0cm}
\vspace{-7mm}
\caption{Wins and losses for $\rankermaxrank$+BO}
\vspace{-3mm}
\hspace*{-3mm}

\label{tab:winners_bo}
\end{wraptable}
}[0]{
\begin{wraptable}[12]{r}{6.0cm}
\vspace{-7mm}
\caption{Wins and losses for $\rankermaxrank$+BO}
\vspace{-3mm}
\hspace*{-3mm}

\label{tab:winners_bo}
\end{wraptable}
}

%
%*******************************************************************************
\vspace{-0.0mm}
\section{Experimental Evaluation}\label{sec:experiments}
\vspace{-0.0mm}
%*******************************************************************************

%------------------------------------------------------------------------------
\subparagraph*{Used tools and evaluation environment.}
We implemented the optimizations described in the previous sections in a~tool
called \ranker~\cite{ranker}
% \footnote{\ranker is available at
% \url{https://github.com/vhavlena/ba-inclusion}}
in C++ (we tested the correctness of our implementation using \spot's
\texttt{autcross} on all BAs in our benchmark).
%; in many cases it did not finish, but for those it finished, it never reported an error).
We compared our
complementation approach with other state-of-the-art tools, namely,
\goal~\cite{goal} (including the \fribourg plugin~\cite{fribourg}),
\spot~2.9.3~\cite{spot}, \seminator~\cite{seminator},
\ltldstar~0.5.4~\cite{KleinB07}, and \roll~\cite{roll}.
All tools were set to the mode where they output an automaton with the standard
state-based B\"{u}chi acceptance condition.
We note that some of the tools are aimed at
complementing more general flavours of $\omega$-automata, such as \seminator
focusing on generalized transition-based B\"{u}chi automata.
The experimental evaluation was performed on a~64-bit \textsc{GNU/Linux Debian}
workstation with an Intel(R) Xeon(R) CPU E5-2620 running at 2.40\,GHz with
32\,GiB of RAM.
The timeout was set to 5\,minutes.

\figrank

%------------------------------------------------------------------------------
\vspace{-0mm}
\subparagraph*{Dataset.}
The source of our main benchmark are the 11,000 BAs
used in~\cite{tsai-compl}, which were randomly generated using the
Tabakov-Vardi approach~\cite{TabakovV05} over a~two letter alphabet,
starting from 15 states and with various different parameters (see~\cite{tsai-compl} for
more details).
In preprocessing, the automata were reduced using a~combination of
\rabit~\cite{MayrC13} and \spot's \autfilt (using the \texttt{--high}
simplification level) and converted to the HOA format~\cite{BabiakBDKKM0S15}.
From this set, we removed automata that are
\begin{inparaenum}[(i)]
  \item  semi-deterministic,
  \item  inherently weak, or
  \item  unambiguous,
\end{inparaenum}
since for these kinds of automata there exist more efficient complementation
procedures than for unrestricted
BAs~\cite{BlahoudekHSST16,seminator,BoigelotJW01,li-unambigous}.
Moreover, we removed BAs with an empty language or empty language of complement.
We were left with \textbf{2,393} \emph{hard} automata.
(In~\cref{sec:expr_ltl} we also present additional results on a less challenging
benchmark used in~\cite{seminator}, containing BAs obtained by translation from LTL
formulae.)
% (In~\cite{techrep} we also present results on the benchmark from~\cite{seminator}.)
% benchmark used in~\cite{seminator}, containing BAs obtained by translation from LTL
% formulae.)

%------------------------------------------------------------------------------
\vspace{-0mm}
\subparagraph*{Selection of Optimizations.}
We use two settings of \ranker with different optimizations turned on.
Since the \rankrestr and \algmaxrank optimizations are incompatible, the main
difference between the settings is which one of those two they use.
The particular optimizations used in the settings are the following:
\vspace{-0mm}
{\small
\begin{align*}
  \rankermaxrank ={}&
  \algdelay + \succrankred + \ranksimred' + \algmaxrank \\
  \rankerrankrestr ={}&
  \algdelay + \succrankred + \ranksimred' + \rankrestr + \purgedi\\[-12mm]
\end{align*}}

\noindent
(The $\purgedi$ optimization is from~\cite{ChenHL19}.)
Note that the two settings include all optimizations compatible with \algmaxrank
and \rankrestr respectively.
Due to space constraints, we cannot give a~detailed analysis of the effect of
individual optimizations on the size of the obtained complement automaton.
Let us, at least, give a~bird's-eye view.
The biggest effect has \algmaxrank, followed by \algdelay---their use is key
to obtaining a~small state space.
The rest of the optimizations are less effective, but they still remove
a~significant number of states.

%*******************************************************************************
\vspace{-0.0mm}
\subsection{Comparison of Rank-Based Procedures}
\vspace{-0.0mm}
%*******************************************************************************

First, we evaluated how our optimizations reduce the generated state space,
i.e., we compared the sizes of complemented BAs with no postprocessing.
Such a~use case represents applications like testing inclusion or equivalence
of BAs, where postprocessing the output is irrelevant.

\begin{table}[t]
\caption{
  \footnotesize
  Statistics for our experiments.
  The upper part compares different optimizations of the rank-based procedure
  (no postprocessing).
  The lower part compares our approach with other methods (with postprocessing).
  ``BO'' denotes the \algbackoff optimization.
  In the left-hand side of the table,
  the column ``\textbf{med.}'' contains the median,
  ``\textbf{std.\ dev}'' contains the standard deviation, and
  ``\textbf{TO}'' contains the number of timeouts (5\,mins).
  In the right-hand side of the table, we provide the number of cases where our
  tool ($\rankermaxrank$ without postprocessing in the upper part and with
  postprocessing in the lower part) was strictly better (``\textbf{wins}'') or
  worse (``\textbf{losses}'').
  The ``\textbf{(TO)}'' column gives the number of times this was because
  of the timeout of the loser.
  Approaches implemented in \goal are labelled with~\goalpic.
  }
\vspace{-5mm}
\label{tab:summary}
\begin{center}
	\scalebox{0.9}{
	\footnotesize
  \hspace*{-2mm}%
	
}
\end{center}
\vspace*{-7mm}
\end{table}

More precisely, we first compared the sizes of automata produced by
our settings $\rankermaxrank$ and $\rankerrankrestr$ to see which of them behaves
better (cf.\ \cref{fig:maxrank_vs_rankrestr})
and then we compared $\rankermaxrank$, which had better results, with the
$\algschewerao$ procedure implemented in \goal (parameters \texttt{-m rank
-tr -ro}).
Scatter plots of the results are given in \cref{fig:ranker_vs_schewe} and
summarizing statistics in the upper part of \cref{tab:summary}.

We note that although $\rankermaxrank$ produces in the vast majority of cases
(1,810) smaller automata than $\rankerrankrestr$, in a~few cases
(109) $\rankerrankrestr$ still outputs a~smaller result (in 1 case this is due
to the timeout of $\rankermaxrank$).
The comparison with $\algschewerao$ shows that our optimizations indeed have
a~profound effect on the size of the generated state space.
Although the mean and maximum size of complements produced by
$\rankermaxrank$ and $\rankerrankrestr$ are larger than those of
$\algschewerao$, this is because for cases where the complement would be large,
the run of $\algschewerao$ in \goal timeouted before it could produce a~result.
Therefore, the median is a~more meaningful indicator, and it is significantly
(3--4$\times$) lower for both $\rankermaxrank$ and $\rankerrankrestr$.
%
% Although the mean size of the outputs does not differ too much, the median
% (which is a~more meaningful indicator because of the much higher number of
% timeouts) of $\algschewerao$ is almost double that of $\rankermaxrank$.
% We note that although it may seem that $\rankermaxrank$ and $\rankerrankrestr$ produced bigger
% automata (319,119 and 330,608 states respectively compared to 67,780 states of
% $\algschewerao$), this is because the run of $\algschewerao$ in \goal timeouted
% for the cases where the complement was bigger.

\figother

%*******************************************************************************
\vspace{-0.0mm}
\subsection{Comparison with Other Approaches}
\vspace{-0.0mm}
%*******************************************************************************

Further, we evaluated the complements produced by $\rankermaxrank$ and other
approaches.
In this setting, we focused on the size of the output BA
\emph{after} postprocessing (we, again, used \autfilt with simplification
\texttt{--high}; we denote this using ``+PP'').
We evaluated the following algorithms:
\safra~\cite{safra1988complexity},
%implemented in \goal (parameter \texttt{-m safra}),
its optimization
\piterman~\cite{piterman2006nondeterministic}
%implemented in \goal (parameter \texttt{-m piterman})
the optimization implemented in \ltldstar~\cite{KleinB07},
\fribourg~\cite{fribourg}, % implemented as a~plugin of \goal,
\spot (Redziejowski's algorithm~\cite{Redziejowski12}),
\roll's learning-based algorithm~\cite{li2018learning}, and
a~semideterminization-based algorithm~\cite{BlahoudekHSST16} in
\seminator.

In \cref{fig:comparison_other}, we give scatter plots of selected comparisons;
we omitted the results for \safra,
\spot, and $\ltldstar$, which on average performed slightly worse than \piterman.
We give summarizing statistics in the lower part of \cref{tab:summary} and
the run times in \cref{tab:times}.

Let us now discuss the data in the lower part of \cref{tab:summary}.
In the left-hand side, we can see that the mean and median size of BAs obtained
by $\rankermaxrank$ are both the lowest with the exception of \roll.
\roll implements a~learning-based approach, which means that it works on the
level of the \emph{language} of the input BA instead of the \emph{structure}.
Therefore, it can often find a~much smaller automaton than other approaches.
Its practical time complexity, however, seems to grow much faster with the number of
states of the output BA than other approaches
(cf.~\cref{tab:times}).
% (\roll had by far the largest mean and median run time, as shown in \cref{tab:times}).
$\rankermaxrank$ by itself had more timeouts than other approaches, but when
used with the \algbackoff strategy, is on par with \piterman and \spot.

In the right-hand side of \cref{tab:summary}, we give the numbers of times where
$\rankermaxrank$ gave strictly smaller and strictly larger outputs respectively.
Here, we can see that the output of $\rankermaxrank$ is often at least as small
as the output of the other method (this is not in the table, but can be
computed as $2,393 - \textbf{losses}$; the losses were caused mostly by timeouts;
results with the \algbackoff strategy would increase the
number even more) and often a~strictly smaller one (the \textbf{wins} column).
When comparing $\rankermaxrank$ with \emph{the best result of any other tool},
it obtained a~\emph{strictly smaller} BA in 539 cases (22.5\,\%) and
a~BA \emph{at least as small} as the best result of any other tool in 1,518
cases~(63.4\,\%).
Lastly, we note that there were four BAs in the benchmark that \emph{no
tool} could complement and one BA that \emph{only} $\rankermaxrank$ was able to
complement;
%(namely, \texttt{new-s-15-r-1.00-f-0.30--72-of-100.ba-red.hoa} with a~919-state-large complement).
there was no such a~case for any other tool.

Let us now focus on the run times of the tools in \cref{tab:times}.
\goal and \roll are implemented in Java, which
adds a~significant overhead to the run time (e.g., the fastest run time of \goal
was 3.15\,s; it is hard to
predict how their performance would change if they were reimplemented in
a~faster language); the other approaches are implemented in C++.
% The mean time of $\rankermaxrank$ (in particular with the \algbackoff strategy)
% is the second (after \spot) and the median is the third (after \spot and
% \seminator).

%------------------------------------------------------------------------------
\vspace{-0mm}
\subparagraph*{\algbackoff.}
Our \algbackoff setting in the experiments used the set of
constraints
%
% \begin{equation*}
$
\{(\StateSize_1 = 9, \RankMax_1 = 5),
  (\StateSize_2 = 8, \RankMax_2 = 6)\}
$
% \end{equation*}
%
and \piterman as the surrogate algorithm.
The \algbackoff strategy was executed 873 times and managed to decrease the
number of timeouts of $\rankermaxrank$ from 360 to 17 (row $\rankermaxrank$+BO
in \cref{tab:summary}).

%------------------------------------------------------------------------------
%\tablebo
\subparagraph*{Discussion.}
The results of our experiments show that our optimizations are key to making
rank-based complementation competitive to other approaches in practice.
Furthermore, with the optimizations, the obtained procedure in the majority of
cases produces a~BA at least as small as a~BA produced by any other approach,
and in a~large number of cases \emph{the smallest} BA produced by any existing
approach.
We emphasize the usefulness of the \algbackoff heuristic:
as there is no clear ``best'' complementation algorithm---different techniques
having different strengths and weaknesses---knowing which technique to use for
an input automaton is important in practice.
In \cref{tab:winners_bo}, we give a modification of the right-hand side of
\cref{tab:summary} giving wins and losses for $\rankermaxrank$+BO.
It seems that the combination of these two completely different algorithms
yields a quite strong competitor.

\begin{table}[t]
	\begin{minipage}{0.4\textwidth}
		\caption{Run times of the tools [s]}
    \vspace{-2mm}
		\label{tab:times}
		\scalebox{0.9}{
		\footnotesize
		
		}
	\end{minipage}
	\hfill
	\begin{minipage}{0.5\textwidth}
		\caption{Wins and losses for $\rankermaxrank$+BO}
    \vspace{-2mm}
		\label{tab:winners_bo}
		\scalebox{0.9}{
		\footnotesize
		
		}
		\vspace*{1.7em}
	\end{minipage}
  \vspace{-5mm}
\end{table}

%%%%%%%%%%%%%%%%%%%%%%%%%%%%%%%%%%%%%%%%%%%%%%%%%%%%%%%%%%%%%%%%%%%%%%%%%%%%%%%%
\vspace{-0.0mm}
\section{Related Work}\label{sec:related}
\vspace{-0.0mm}
%%%%%%%%%%%%%%%%%%%%%%%%%%%%%%%%%%%%%%%%%%%%%%%%%%%%%%%%%%%%%%%%%%%%%%%%%%%%%%%%

The problem of BA complementation has attracted researchers since B\"{u}chi's seminal
work~\cite{buchi1962decision}.
Since then, there have appeared several directions of
BA complementation approaches. \emph{Ramsey-based complementation} using
B\"{u}chi's original argument, decomposing the language accepted by an automaton
into a~finite number of equivalence classes, was later improved
in~\cite{breuers-improved-ramsey}.
\emph{Determinization-based complementation} was
introduced by Safra in~\cite{safra1988complexity}, later improved by
Piterman in~\cite{piterman2006nondeterministic}.
Determinization-based approaches convert an input BA
into an equivalent intermediate deterministic automaton with different accepting
condition (e.g. Rabin automaton) that can be easily complemented.
The result is then converted back into a~BA (often for the price of some blow-up).
\emph{Slice-based complementation} uses a~reduced abstraction on a run tree to
track the acceptance
condition~\cite{vardi2008automata,kahler2008complementation}.
\emph{A~learning-based approach} was presented in~\cite{li2018learning,roll}.
A~novel
optimal complementation algorithm by Allred and Utes-Nitsche was
presented in~\cite{fribourg}.
There are also specific approaches for
complementation of special types of BAs, e.g., deterministic~\cite{Kurshan87},
semi-deterministic~\cite{BlahoudekHSST16}, or unambiguous~\cite{li-unambigous}.
\emph{Semi-determinization based complementation} then uses a~conversion of
a~standard BA to a~semi-deterministic version~\cite{CourcoubetisY88} followed by
its complementation~\cite{seminator}.

\emph{Rank-based complementation}, studied
in~\cite{KupfermanV01,GurumurthyKSV03,FriedgutKV06,Schewe09,KarmarkarC09},
extends the subset construction for determinizing finite automata with
additional information kept in each macrostate to track the acceptance condition
of all runs of the input automaton.
We have described the refinement of the basic procedure from~\cite{KupfermanV01}
towards~\cite{FriedgutKV06} and~\cite{Schewe09} in \cref{sec:complement}.
The work in~\cite{GurumurthyKSV03} contains optimizations of an alternative
(sub-optimal) rank-based construction from~\cite{KupfermanV01} that goes through
\emph{alternating B\"{u}chi automata}.
Furthermore, the work in~\cite{KarmarkarC09} proposes an optimization of
\algschewe that in some cases produces smaller automata
(the construction is not compatible with our optimizations).
Rank-based construction can be optimized using simulation relations as shown
in~\cite{ChenHL19}.
Here the direct and delayed simulation relations can be used
to prune macrostates that are redundant for accepting a~word or to saturate
macrostates with simulation-smaller states.

% Since BA complementation is interesting from both practical and theoretical
% side, there have emerged a~lot of tools dealing with complementation of automata over
% infinite words. The tool \goal~\cite{goal} implements various algorithms related
% to $\omega$-languages, including various approaches for complementation of
% B\"{u}chi automata. Fribourg plugin for GOAL implements a complementation
% algorithm by Allred and Utes-Nitsche~\cite{freiburg}. The tool Seminator~2
% complements generalized transition-based BAs using
% semi-determinization~\cite{seminator}. The Spot library provides support for
% handling $\omega$-automata including the complementation~\cite{spot} and the ROLL
% tool creates the complemented automaton using learning~\cite{roll}.

%%%%%%%%%%%%%%%%%%%%%%%%%%%%%%%%%%%%%%%%%%%%%%%%%%%%%%%%%%%%%%%%%%%%%%%%%%%%%%%%
\vspace{-0.0mm}
\section{Conclusion and Future Work}\label{sec:label}
\vspace{-0.0mm}
%%%%%%%%%%%%%%%%%%%%%%%%%%%%%%%%%%%%%%%%%%%%%%%%%%%%%%%%%%%%%%%%%%%%%%%%%%%%%%%%

We developed a series of optimizations reducing the state space generated in
rank-based BA complementation.
Our experimental evaluation shows that our approach is competitive with other
state-of-the-art complementation techniques and often outperforms them.

% There are several possible directions for our future work.
% There are still ways to improve our procedure.
There are several possible directions for further improving our procedure.
In particular, we have ideas about refining \ranksimred to obtain an even larger
reduction.
Furthermore,
\algdelay can be further refined by a~smarter choice of \emph{when} to perform
the transition from the waiting to the tight part of the BA.
Currently, this is
done when a~cycle in the waiting part is closed, which does not need to be the
best choice---we could utilize information about the number of successors of
selected states to choose a better point.

In this paper, in order to keep the presentation easier to follow, we focused
on BAs with one state-based acceptance condition.
Our optimizations can, however, be extended to \emph{generalized BAs}
and also to \emph{transition-based generalized BAs} (TGBAs)
with a~modification of \succrankred (final states cannot be considered any more).
It is an open question whether the richer structure of TGBAs brings other
opportunities for reductions.
We also plan to extend our approach to efficient (TG)BA language inclusion
checking, where even more reduction (in the style of~\cite{AbdullaCHMV10}) of
the state space is possible.

%%%%%%%%%%%%%%%%%%%%%%%%%%%%%%%%%%%% BIBLIO %%%%%%%%%%%%%%%%%%%%%%%%%%%%%%%%%%
\bibliographystyle{plainurl}
\bibliography{literature}
%%%%%%%%%%%%%%%%%%%%%%%%%%%%%%%%%%%% BIBLIO %%%%%%%%%%%%%%%%%%%%%%%%%%%%%%%%%%

\newpage
\appendix
%%%%%%%%%%%%%%%%%%%%%%%%%%%%%%%%%%%%%%%%%%%%%%%%%%%%%%%%%%%%%%%%%%%%%%%%%%%%%%%%
\vspace{-0.0mm}
\section{Omitting Max-Rank Runs in $\algschewerao$}\label{sec:schewe_bug_example}
\vspace{-0.0mm}
%%%%%%%%%%%%%%%%%%%%%%%%%%%%%%%%%%%%%%%%%%%%%%%%%%%%%%%%%%%%%%%%%%%%%%%%%%%%%%%%

\ol{THIS NEEDS TO BE CHECKED}

In this appendix, we show an example where the algorithm $\algschewerao$
from~\cite[Section~4]{Schewe09} removes some max-rank runs from the complement
and, therefore, cannot be used with our optimizations.

Let us now recall the construction of $\algschewerao$
from~\cite[Section~4]{Schewe09}.
Let $\aut = (Q, \delta, I, F)$.
For $S \subseteq Q$,
we say that an~$S$-tight ranking~$f$ with rank~$r$ is \emph{maximal wrt.}~$S$ if
it maps every $q \in S \cap F$ to $r-1$, exactly one state of~$S$ to every odd
number $o < r$ and all remaining states of~$S$ to~$r$.
Let $\M_S$ be the set of tight rankings $\M_S = \{f \in
\cT \mid f \text{ is maximal wrt.\ } S\}$.

Then, $\dut = \algschewerao(\aut)$ is defined as the BA $\dut = (Q_1 \cup Q_2,
\gamma, I', F')$ where $Q_1, Q_2, I', F', \delta_1, \delta_2$, and~$\delta_3$ are the same
as in $\algschewe(\aut)$ and $\gamma = \delta_1 \cup \gamma_2 \cup \gamma_3 \cup
\gamma_4$ where (we emphasize the main differences from
\algmaxrank in \textcolor{red}{red})
\begin{itemize}
  \item $\gamma_2(S, a) := \textcolor{red}{\{(S', O, g, i) \in Q_2 \mid (S', O,
    g, i) \in \delta_2 \land g \in \M_{S'}\}}$,
	\item $\gamma_3(\sofi, a)$:
    Let~$\fpmax = \maxrankof{\sofi, a}$.
    Then, we set
    \vspace{-1mm}
    \begin{itemize}
      \setlength{\itemsep}{0mm}
      \item  $\gamma_3(\sofi, a) := \{(S', O', \fpmax, i')\}$ when $(S', O',
        \fpmax, i') \in \delta_3(\sofi, a)$ (i.e., if $\fpmax$ is tight) and
      \item  $\gamma_3(\sofi, a) := \emptyset$ otherwise.
    \end{itemize}
  \item $\gamma_4(\sofi, a)$: Let
    $\gamma_3(\sofi, a) = \{(S', P', h', i')\}$ and let
    \vspace{-1mm}
    \begin{itemize}
      \setlength{\itemsep}{0mm}
      \item  $f' = \variant{h'}{\{ u\mapsto h'(u)-1 \mid \textcolor{red}{u\in
        P'} \}}$ and
      \item  $O' = P'\cap f'^{-1}(i')$.
    \end{itemize}
    \vspace{-1mm}
    Then, if $i' \neq 0$, the algorithm sets $\gamma_4(\sofi, a) := \{\sofiprime\}$, else
    it sets \mbox{$\gamma_4(\sofi, a) := \emptyset$}.
\end{itemize}

% Recall that $\gamma_4$ was defined as follows (the other components are the same
% as in $\algmaxrank(\aut)$ from \cref{sec:maxrank}; the difference is emphasized
% in \textcolor{red}{red}):
%
% \begin{itemize}
%   \item  $(S', O'', g', i') \in \gamma_4(\sofi, a)$ iff
%     $(S', O', g, i') \in \eta_3(\sofi, a)$, $O'' = \emptyset$, $i' \neq 0 \lor
%     O' = \emptyset$, and $g'(q) = g(q) - 1$ for all $q \in O'$ and $g'(q) =
%     g(q)$ otherwise.
%   \ol{blah}
% \end{itemize}

% \begin{figure}[t]
%   \centering
%   \caption{A B\"{u}chi automaton over alphabet $\Sigma =\{a\}$ accepting the empty language.}
%   \label{fig:ba-ro-ex}
% \end{figure}

Next, consider the BA $\aut = (Q, \delta, I, F)$ over the single-symbol alphabet
$\Sigma = \{a\}$ given in the following figure:

\vspace{-2mm}
\begin{center}
  \begin{tikzpicture}[->,>=stealth',shorten >=0pt,auto,node distance=1.5cm,
                      scale = 0.8,transform shape,initial text={}]
    \tikzstyle{every state}=[inner sep=3pt,minimum size=5pt]

    \node[state,initial,label={[color=blue,label distance=-1mm]-45:5}] (p1) {$p_1$};
    \node[state,accepting,label={[color=blue,label distance=-1mm]-45:4}] (p2) [right of=p1] {$p_2$};
    \node[state,label={[color=blue,label distance=-1mm]-45:3}] (p3) [right of=p2] {$p_3$};
    \node[state,accepting,label={[color=blue,label distance=-1mm]-45:2}] (p4) [right of=p3] {$p_4$};
    \node[state,label={[color=blue,label distance=-1mm]-45:1}] (p5) [right of=p4] {$p_5$};

    \path (p1) edge[loop above]  node {$a$} (p1)
          (p1) edge  node {$a$} (p2)
          (p2) edge  node {$a$} (p3)
          (p3) edge[loop above]  node {$a$} (p3)
          (p3) edge  node {$a$} (p4)
          (p4) edge  node {$a$} (p5)
          (p5) edge[loop above]  node {$a$} (p5);

    \node[state,initial,label={[color=blue,label distance=-1mm]-45:5}] (q1) [below of=p1, node distance=20mm] {$q_1$};
    \node[state,accepting,label={[color=blue,label distance=-1mm]-45:4}] (q2) [right of=q1] {$q_2$};
    \node[state,label={[color=blue,label distance=-1mm]-45:3}] (q3) [right of=q2] {$q_3$};
    \node[state,accepting,label={[color=blue,label distance=-1mm]-45:2}] (q4) [right of=q3] {$q_4$};
    \node[state,label={[color=blue,label distance=-1mm]-45:1}] (q5) [right of=q4] {$q_5$};

    \path (q1) edge[loop above]  node {$a$} (q1)
          (q1) edge  node {$a$} (q2)
          (q2) edge  node {$a$} (q3)
          (q3) edge[loop above]  node {$a$} (q3)
          (q3) edge  node {$a$} (q4)
          (q4) edge  node {$a$} (q5)
          (q5) edge[loop above]  node {$a$} (q5);

  \end{tikzpicture}
\end{center}
\vspace{-2mm}
\noindent
Note that $\langof \aut = \emptyset$ and that $\Sigma^\omega = \{a^\omega\}$.
The highest rank assigned to a~node in the run DAG of~$\aut$ over~$a^\omega$
is~5 (ranks of nodes with a~given state in the run DAG are given in the figure
above in \textcolor{blue}{blue}).
It follows that a~super-tight (and therefore also a~max-rank) run of
$\algschewe(\aut)$ on $a^\omega$ will also have the rank~5.

% this can
% be obtained from the fact that both $p_i \dirsimby q_i$ and $q_i \dirsimby p_i$
% for each $i\in\{1,\ldots,5\}$).

Let us now consider a~max-rank run and $\algschewerao$~\cite[Section~4]{Schewe09}.
The only interesting macrostates from~$\gamma_2$ have the following form:
$p_1$~or~$q_1$ have the rank~5, $p_3$~or~$q_3$ have the rank~3, and,
finally, $p_5$~or~$q_5$ have the rank~1.
Other cases are not interesting because they cannot lead to an acceptance
of~$a^\omega$.
Also recall (from the definition of $\M_S$) that
only one state has the rank~3 and only one has the rank~1
(the remaining accepting states have the rank~4 and nonaccepting states have
the rank~5).

Now let us focus on the macrostate~$M = (Q, \emptyset, f, 0)$ from~$\gamma_2$ where
$f$ is given in \textcolor{red}{red} in the figure below:
\vspace{-2mm}
\begin{center}
  \begin{tikzpicture}[->,>=stealth',shorten >=0pt,auto,node distance=1.5cm,
                      scale = 0.8,transform shape,initial text={}]
    \tikzstyle{every state}=[inner sep=3pt,minimum size=5pt]

    \node[state,initial,label={[color=red,label distance=-1mm]-45:5}] (p1) {$p_1$};
    \node[state,accepting,label={[color=red,label distance=-1mm]-45:4}] (p2) [right of=p1] {$p_2$};
    \node[state,label={[color=red,label distance=-1mm]-45:3}] (p3) [right of=p2] {$p_3$};
    \node[state,accepting,label={[color=red,label distance=-1mm]-45:4}] (p4) [right of=p3] {$p_4$};
    \node[state,label={[color=red,label distance=-1mm]-45:5}] (p5) [right of=p4] {$p_5$};

    \path (p1) edge[loop above]  node {$a$} (p1)
          (p1) edge  node {$a$} (p2)
          (p2) edge  node {$a$} (p3)
          (p3) edge[loop above]  node {$a$} (p3)
          (p3) edge  node {$a$} (p4)
          (p4) edge  node {$a$} (p5)
          (p5) edge[loop above]  node {$a$} (p5);

    \node[state,initial,label={[color=red,label distance=-1mm]-45:5}] (q1) [below of=p1, node distance=20mm] {$q_1$};
    \node[state,accepting,label={[color=red,label distance=-1mm]-45:4}] (q2) [right of=q1] {$q_2$};
    \node[state,label={[color=red,label distance=-1mm]-45:5}] (q3) [right of=q2] {$q_3$};
    \node[state,accepting,label={[color=red,label distance=-1mm]-45:4}] (q4) [right of=q3] {$q_4$};
    \node[state,label={[color=red,label distance=-1mm]-45:1}] (q5) [right of=q4] {$q_5$};

    \path (q1) edge[loop above]  node {$a$} (q1)
          (q1) edge  node {$a$} (q2)
          (q2) edge  node {$a$} (q3)
          (q3) edge[loop above]  node {$a$} (q3)
          (q3) edge  node {$a$} (q4)
          (q4) edge  node {$a$} (q5)
          (q5) edge[loop above]  node {$a$} (q5);

  \end{tikzpicture}
\end{center}
\vspace{-3mm}

% ranking function corresponds to $f = \{ p_1 \mapsto 5, q_1 \mapsto 5, p_3 \mapsto 3,
% q_3 \mapsto 5, p_5 \mapsto 5, q_5 \mapsto 1 \} \cup \{ q_i \mapsto 4, p_i
% \mapsto 4 \mid i \in \{2,4\} \}$. In particular
% %
% $$
%   \M = (\{ p_i, q_i \mid i \in \{1, \dots, 5\} \}, \emptyset, f, 0).
% $$
%
\noindent
(Other cases with a different distribution of ranks 3 and 1 can be proved
analogously.)
Since~$q_3$ has at the beginning the rank 5 and~$q_2$ has the rank~4,
$q_3$~will have the rank~4 in all possible (one-step) successors of~$M$.
The only possible way how the rank of $q_3$ can be decreased is
a~$\gamma_4$ transition, with an additional assumption that~$q_3$ is in the
$O$~component of the corresponding macrostate.

Consider the first macrostate $M'$ (in an arbitrary run from~$M$) where~$q_3$
goes to the $O$-set.
Recall that~$q_3$ has the rank~4 and this value is not changed because~$q_3$
did not belong to the $O$-set.
Since~$q_2$ has also the rank~4, the $O$-set of~$M'$ contains at least
the states~$q_2$ and~$q_3$.
Moreover, for all successors of~$M'$ it holds that the $O$-set contains (at
least) the states~$q_3$ and~$q_4$.
Since~$q_4$ is a~final state, we are not allowed to use the
transition~$\gamma_4$ in any of the successors (the ``ranking function'' after
triggering the~$\gamma_4$ transition is not a~ranking function any more, since
an accepting state would be assigned an odd rank).
We do not flush the $O$-set and therefore, we do not accept the word
$a^\omega$ in a max-rank run.

%%%%%%%%%%%%%%%%%%%%%%%%%%%%%%%%%%%%%%%%%%%%%%%%%%%%%%%%%%%%%%%%%%%%%%%%%%%%%%%%
\vspace{-0.0mm}
\section{Experimental Results for BAs from LTL formulae}\label{sec:expr_ltl}
\vspace{-0.0mm}
%%%%%%%%%%%%%%%%%%%%%%%%%%%%%%%%%%%%%%%%%%%%%%%%%%%%%%%%%%%%%%%%%%%%%%%%%%%%%%%%

\newcommand{\figrankltl}[0]{
\begin{figure}[t]
\begin{subfigure}[b]{0.49\linewidth}
\includegraphics[width=6cm,keepaspectratio]{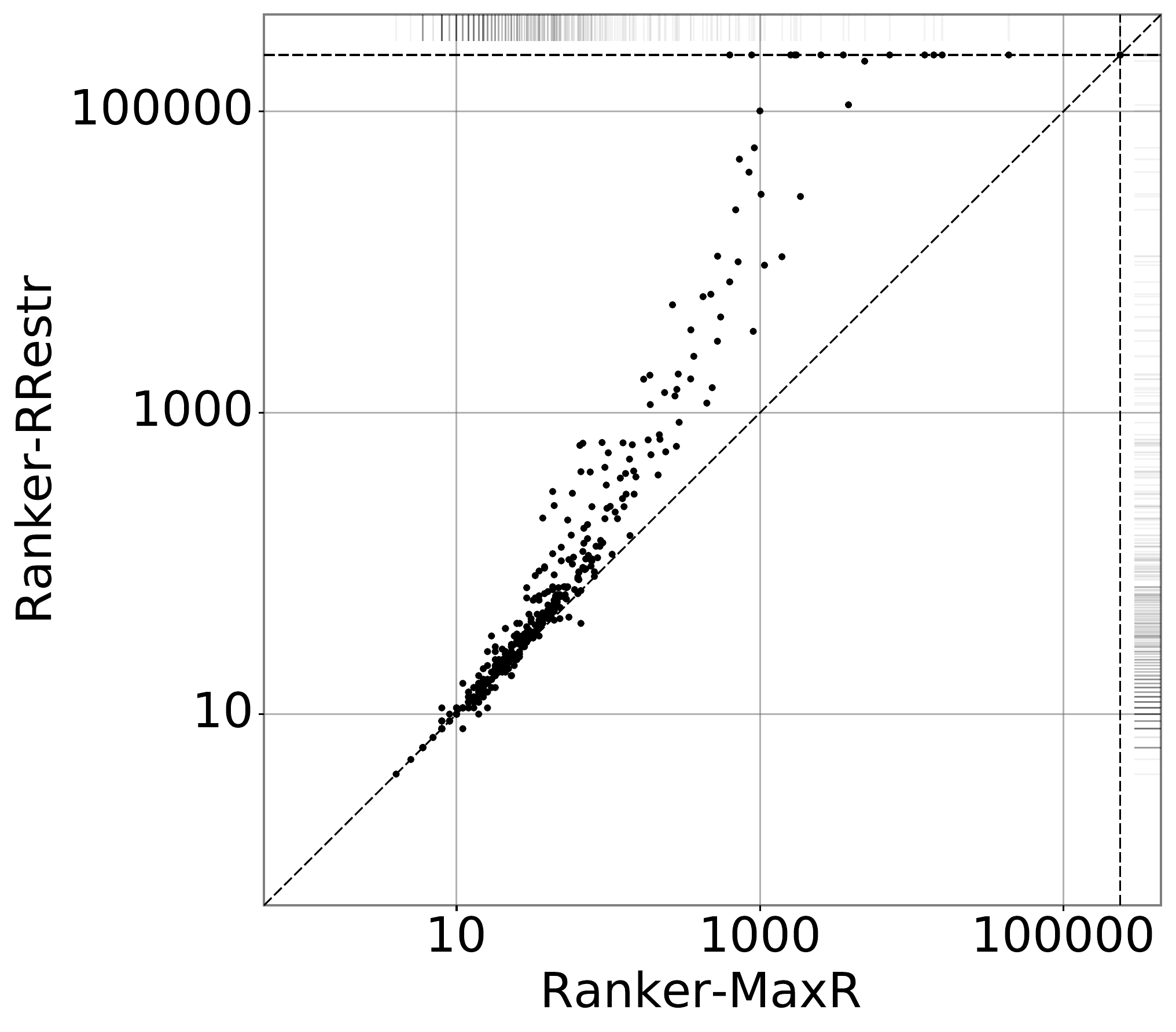}
\caption{\centering \footnotesize $\rankermaxrank$ vs $\rankerrankrestr$}
\label{fig:maxrank_vs_rankrestr_ltl}
\end{subfigure}
\begin{subfigure}[b]{0.49\linewidth}
\includegraphics[width=6cm,keepaspectratio]{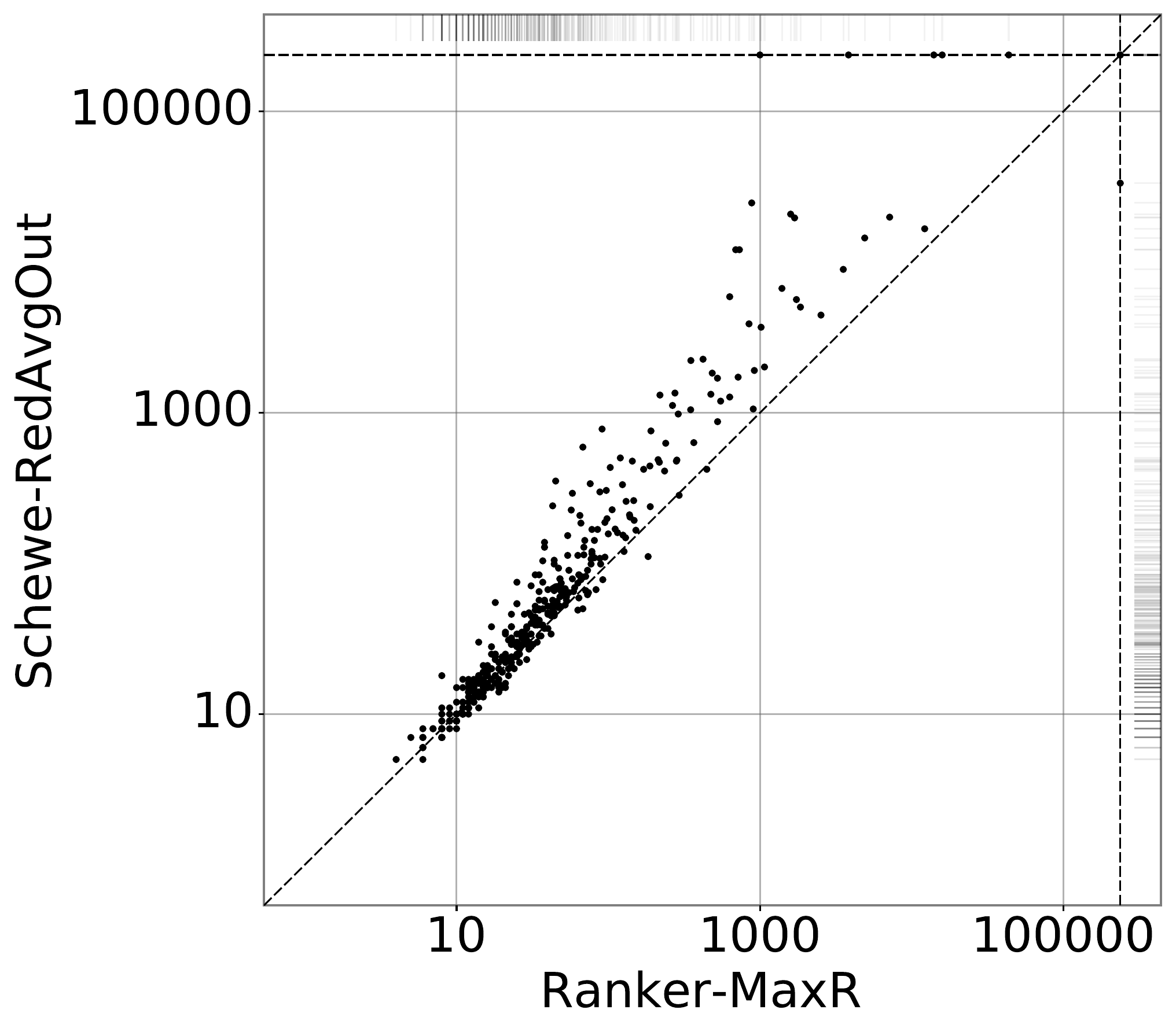}
\caption{\centering \footnotesize $\rankermaxrank$ vs $\algschewerao$}
\label{fig:ranker_vs_schewe_ltl}
\end{subfigure}
\caption{\footnotesize Evaluation of the effectiveness of our optimizations on the generated
  state space for the LTL benchmark.
  Both axes are logarithmic.
  A~point on the horizontal or vertical dashed lines represents a~timeout.}
\label{fig:rank-based-ltl}
\vspace*{-5mm}
\end{figure}
}

\newcommand{\figotherltl}[0]{
\begin{figure}[t]
\begin{subfigure}[b]{0.49\linewidth}
  \centering
\includegraphics[width=6cm,keepaspectratio]{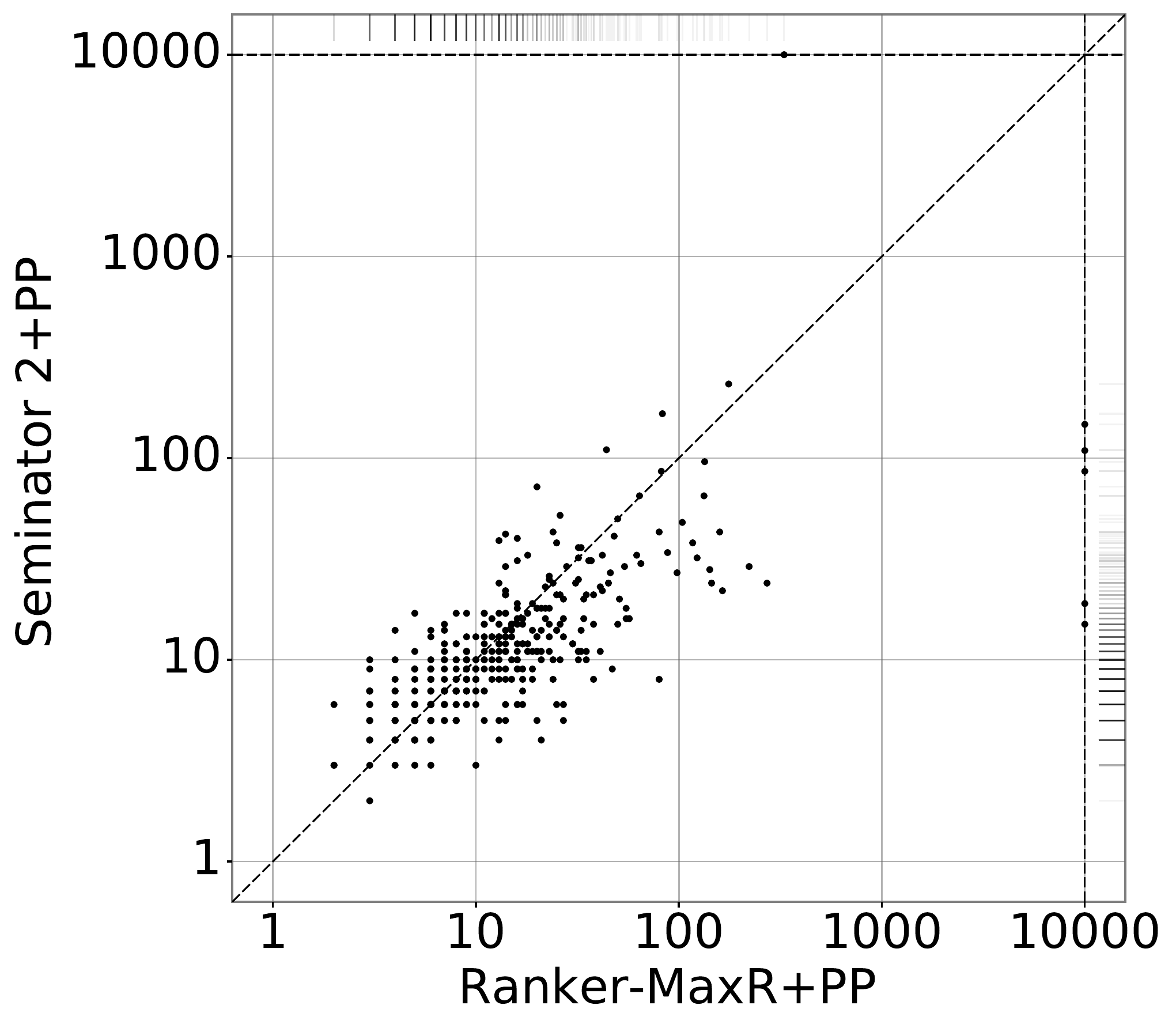}
\vspace*{0mm}
\caption{\centering \footnotesize $\rankermaxrank$ vs \seminator}
\label{fig:ranker_vs_seminator_ltl}
\vspace*{2mm}
\end{subfigure}
\begin{subfigure}[b]{0.49\linewidth}
  \centering
  \includegraphics[width=6cm,keepaspectratio]{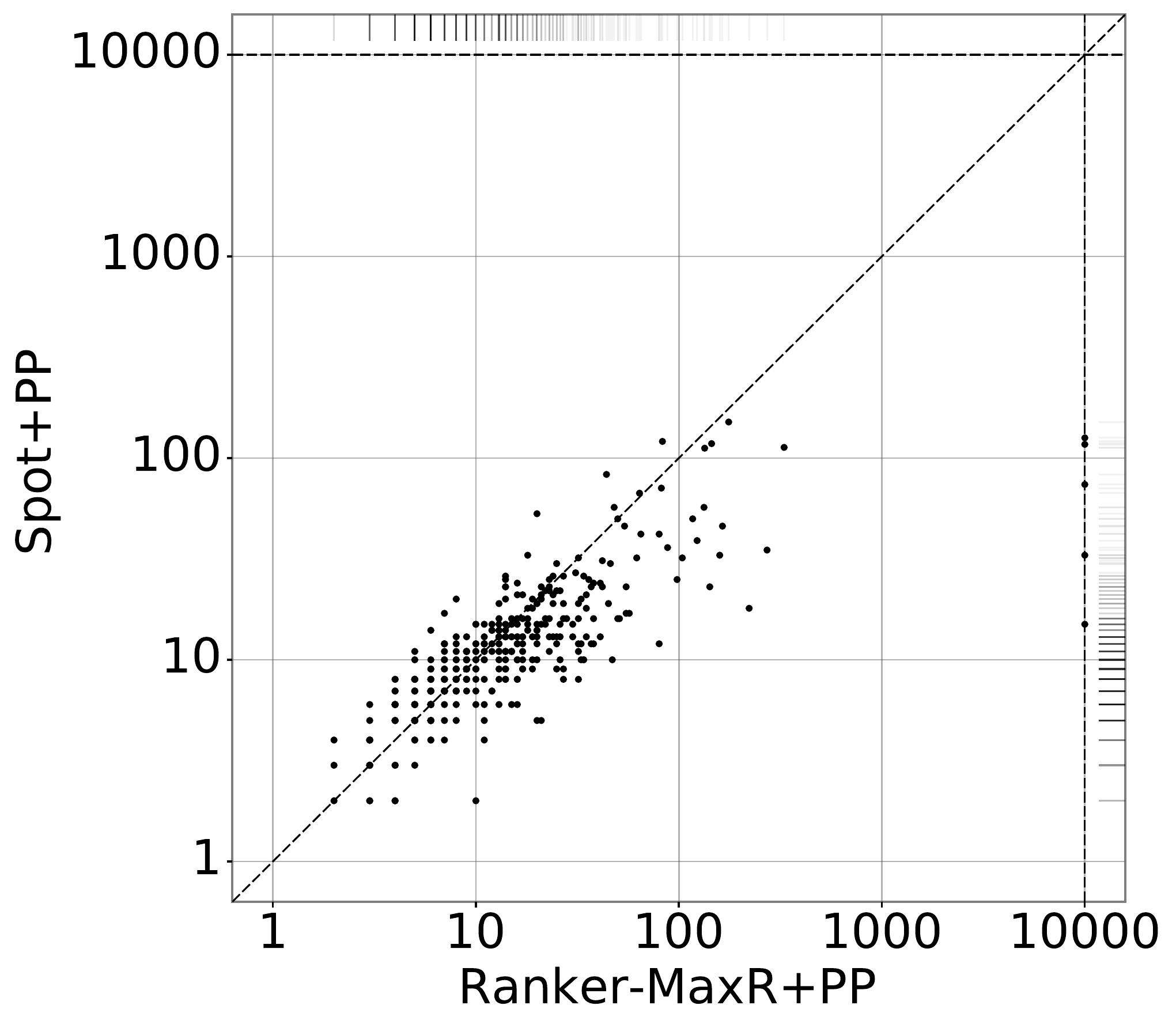}
  \vspace*{0mm}
  \caption{\centering \footnotesize $\rankermaxrank$ vs \spot}
  \label{fig:ranker_vs_spot_ltl}
\vspace*{2mm}
\end{subfigure}
\begin{subfigure}[b]{0.49\linewidth}
  \centering
  \includegraphics[width=6cm,keepaspectratio]{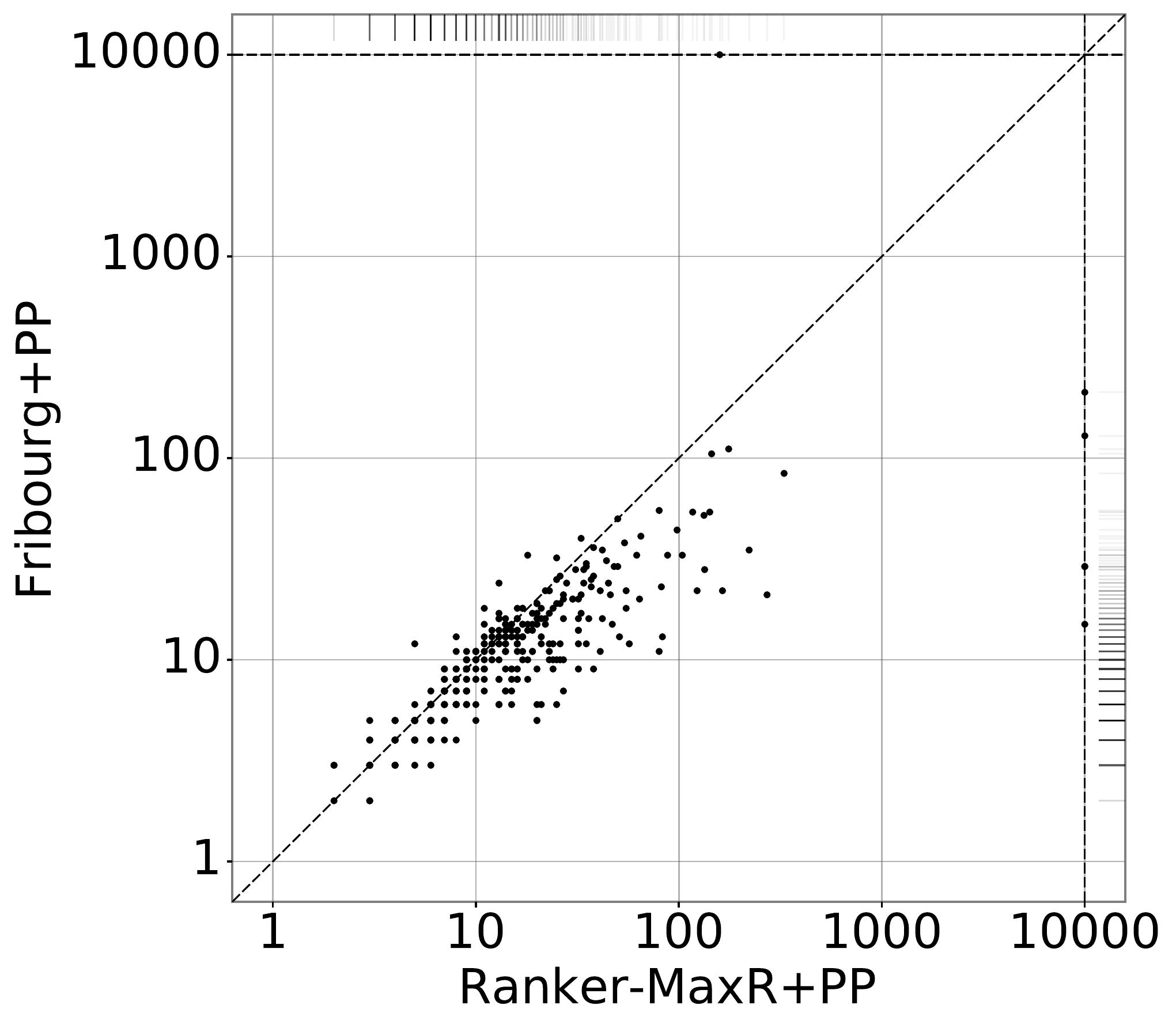}
  \vspace*{0mm}
  \caption{\centering \footnotesize $\rankermaxrank$ vs \fribourg}
  \label{fig:ranker_vs_fribourg_ltl}
\end{subfigure}
\hfill
\begin{subfigure}[b]{0.49\linewidth}
  \centering
  \includegraphics[width=6cm,keepaspectratio]{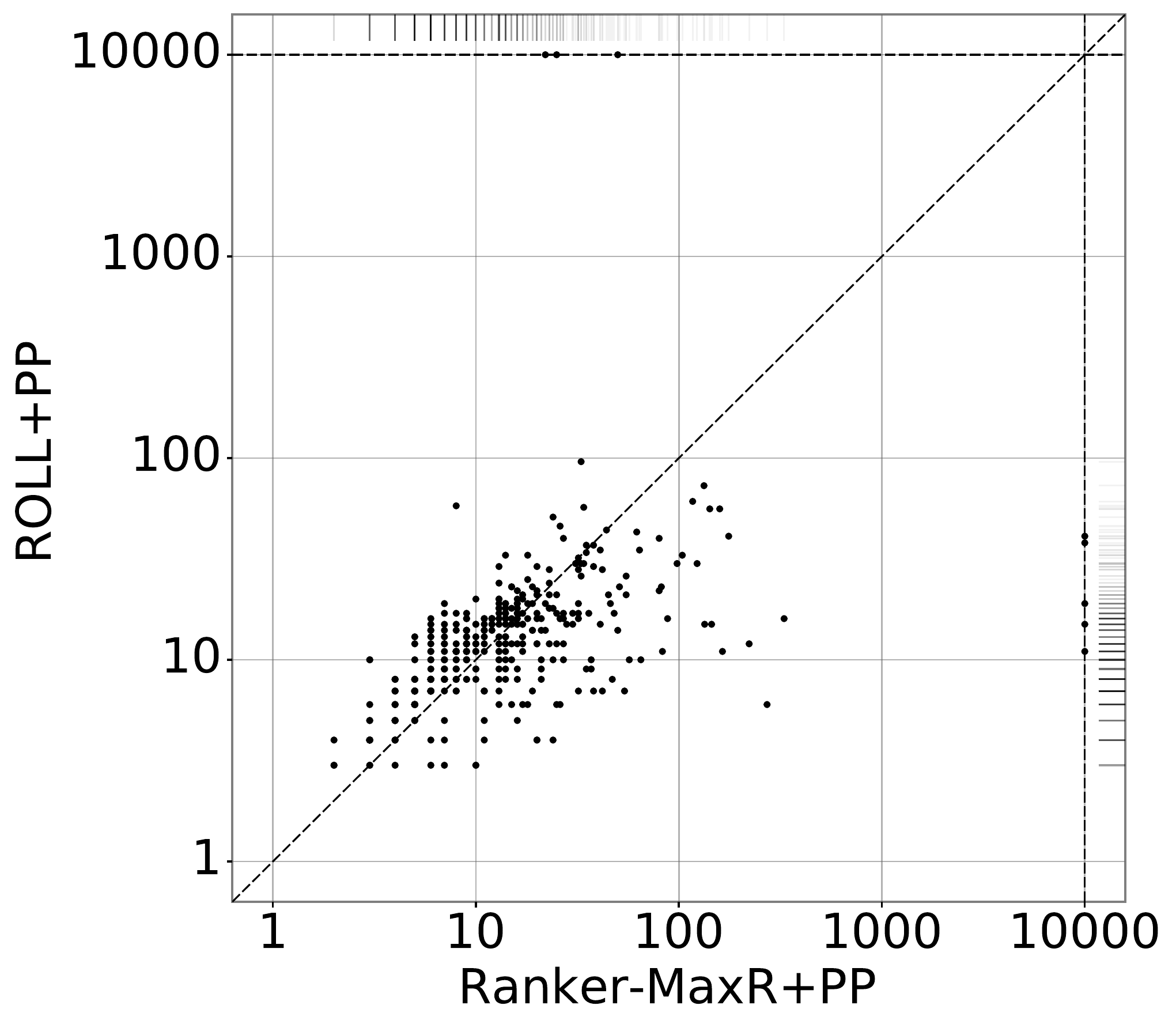}
  \vspace*{0mm}
  \caption{\centering \footnotesize $\rankermaxrank$ vs \roll}
  \label{fig:ranker_vs_roll_ltl}
\end{subfigure}
\caption{\footnotesize Comparison of the size of the BAs constructed using our optimized
  rank-based construction and other approaches on the LTL benchmark.
  Timeouts (5\,mins) are on the edges.
  }
\label{fig:comparison_other_ltl}
\vspace*{-5mm}
\end{figure}
}

\newcommand{
\begin{table}[b!]
\caption{Statistics for our experiments on the LTL benchmark (see the
  explanation of the columns in the description of \cref{tab:summary}).}
  \vspace*{-2mm}
  \begin{center}
  \scalebox{0.8}{
    \newcolumntype{d}[1]{D{.}{.}{#1}}
\begin{tabular}{lrd{4.2}d{2.1}d{5.2}r@{\hskip 1mm}|rrrr}
\toprule
 \multicolumn{1}{c}{\bf method}                     &
  \multicolumn{1}{c}{\bf ~~max~~}&      \multicolumn{1}{c}{\bf ~~~mean~~~~} &
  \multicolumn{1}{c}{\bf ~med.~} &
  \multicolumn{1}{c}{\bf ~~std.\ dev~~} &   \multicolumn{1}{c|}{\bf TO} &
  \multicolumn{1}{c}{\bf wins} &
  \multicolumn{1}{c}{\bf (TO)} &
  \multicolumn{1}{c}{\bf losses} &
  \multicolumn{1}{c}{\bf (TO)} \\
\midrule
  \emphcell  $\rankermaxrank$ &  43\,527 &  357.71  &     29   &  2\,510.31   &          5 & \centercell{---} & \centercell{---} & \centercell{---} & \centercell{---}\\
  $\rankerrankrestr$          & 214\,946 & 1\,948.26   &     33   & 13\,928      &         17 &    281 &              (12) &      35 &                (0)\\
  $\algschewerao$~\goalpic             &  33\,345 &  665.27  &     35   &  3\,081.89   &          9   &    292 &               (5) &      83 &                (1) \\
\midrule
\emphcell $\rankermaxrank$ &     330 &   20.07 &     10   &    32.86 &          5 & \centercell{---} & \centercell{---} & \centercell{---} & \centercell{---}\\
 $\rankermaxrank$+BO       &     440 &   21.09 &     11   &    38.28 &          2 & \centercell{---} & \centercell{---} & \centercell{---} & \centercell{---}\\
 $\piterman$~\goalpic               &     436 &   21.98 &     14.5 &    30.66 &          2 &    287 &              (1) &      76 &                (4) \\
 $\safra$~\goalpic                  &     361 &   30.39 &     17   &    44.24 &         14 &    330 &             (10) &      56 &                (1) \\
 $\spot$                   &     151 &   14.52 &     10   &    17.71 &          0 &    138 &              (0) &     195 &                (5) \\
 $\fribourg$~\goalpic               &     212 &   12.85 &      9   &    16.22 &          1 &     56 &              (1) &     238 &                (5) \\
 $\ltldstar$               &     223 &   21.05 &     13   &    24.17 &          6 &    249 &              (5) &     111 &                (4) \\
 $\seminator$              &     233 &   14.62 &     10   &    19.97 &          1 &    121 &              (1) &     218 &                (5) \\
 $\roll$                   &      96 &   13.81 &     11   &    10.62 &          3 &    243 &              (3) &     150 &                (5) \\
\bottomrule
\end{tabular}

  }
  \end{center}
\label{tab:ltl_stats}
\end{table}
}[0]{
\begin{table}[b!]
\caption{Statistics for our experiments on the LTL benchmark (see the
  explanation of the columns in the description of \cref{tab:summary}).}
  \vspace*{-2mm}
  \begin{center}
  \scalebox{0.8}{
    \newcolumntype{d}[1]{D{.}{.}{#1}}
\begin{tabular}{lrd{4.2}d{2.1}d{5.2}r@{\hskip 1mm}|rrrr}
\toprule
 \multicolumn{1}{c}{\bf method}                     &
  \multicolumn{1}{c}{\bf ~~max~~}&      \multicolumn{1}{c}{\bf ~~~mean~~~~} &
  \multicolumn{1}{c}{\bf ~med.~} &
  \multicolumn{1}{c}{\bf ~~std.\ dev~~} &   \multicolumn{1}{c|}{\bf TO} &
  \multicolumn{1}{c}{\bf wins} &
  \multicolumn{1}{c}{\bf (TO)} &
  \multicolumn{1}{c}{\bf losses} &
  \multicolumn{1}{c}{\bf (TO)} \\
\midrule
  \emphcell  $\rankermaxrank$ &  43\,527 &  357.71  &     29   &  2\,510.31   &          5 & \centercell{---} & \centercell{---} & \centercell{---} & \centercell{---}\\
  $\rankerrankrestr$          & 214\,946 & 1\,948.26   &     33   & 13\,928      &         17 &    281 &              (12) &      35 &                (0)\\
  $\algschewerao$~\goalpic             &  33\,345 &  665.27  &     35   &  3\,081.89   &          9   &    292 &               (5) &      83 &                (1) \\
\midrule
\emphcell $\rankermaxrank$ &     330 &   20.07 &     10   &    32.86 &          5 & \centercell{---} & \centercell{---} & \centercell{---} & \centercell{---}\\
 $\rankermaxrank$+BO       &     440 &   21.09 &     11   &    38.28 &          2 & \centercell{---} & \centercell{---} & \centercell{---} & \centercell{---}\\
 $\piterman$~\goalpic               &     436 &   21.98 &     14.5 &    30.66 &          2 &    287 &              (1) &      76 &                (4) \\
 $\safra$~\goalpic                  &     361 &   30.39 &     17   &    44.24 &         14 &    330 &             (10) &      56 &                (1) \\
 $\spot$                   &     151 &   14.52 &     10   &    17.71 &          0 &    138 &              (0) &     195 &                (5) \\
 $\fribourg$~\goalpic               &     212 &   12.85 &      9   &    16.22 &          1 &     56 &              (1) &     238 &                (5) \\
 $\ltldstar$               &     223 &   21.05 &     13   &    24.17 &          6 &    249 &              (5) &     111 &                (4) \\
 $\seminator$              &     233 &   14.62 &     10   &    19.97 &          1 &    121 &              (1) &     218 &                (5) \\
 $\roll$                   &      96 &   13.81 &     11   &    10.62 &          3 &    243 &              (3) &     150 &                (5) \\
\bottomrule
\end{tabular}

  }
  \end{center}
\label{tab:ltl_stats}
\end{table}
}

\newcommand{
\begin{table}[b!]
\caption{\footnotesize Run times of the tools on the LTL benchmark (in seconds)}
\vspace*{-2mm}
\centering
\label{tab:times_ltl}
\footnotesize
\begin{tabular}{lrrr}
\toprule
 \centercell{\bf method}                 &
  \centercell{\bf mean} &
  \centercell{\bf ~med.~} &
  \centercell{\bf ~std.\ dev} \\
\midrule
 $\rankermaxrank$     & 1.99 &    0.04  &  16.51 \\
 $\rankermaxrank$+BO  & 1.27 &    0.05  &   8.62 \\
 \piterman~\goalpic            & 6.65 &    5.62  &   3.73 \\
 \safra~\goalpic               & 8.37 &    5.80  &  13.45 \\
 \spot                & 0.06 &    0.02  &   0.71 \\
 \fribourg~\goalpic            & 7.22 &    5.48  &  13.22 \\
 \ltldstar            & 0.11 &    0.02  &   0.89 \\
 \seminator           & 0.08 &    0.02  &   0.83 \\
 \roll                & 7.28 &    2.74  &  16.06 \\
\bottomrule
\end{tabular}

\end{table}
}[0]{
\begin{table}[b!]
\caption{\footnotesize Run times of the tools on the LTL benchmark (in seconds)}
\vspace*{-2mm}
\centering
\label{tab:times_ltl}
\footnotesize
\begin{tabular}{lrrr}
\toprule
 \centercell{\bf method}                 &
  \centercell{\bf mean} &
  \centercell{\bf ~med.~} &
  \centercell{\bf ~std.\ dev} \\
\midrule
 $\rankermaxrank$     & 1.99 &    0.04  &  16.51 \\
 $\rankermaxrank$+BO  & 1.27 &    0.05  &   8.62 \\
 \piterman~\goalpic            & 6.65 &    5.62  &   3.73 \\
 \safra~\goalpic               & 8.37 &    5.80  &  13.45 \\
 \spot                & 0.06 &    0.02  &   0.71 \\
 \fribourg~\goalpic            & 7.22 &    5.48  &  13.22 \\
 \ltldstar            & 0.11 &    0.02  &   0.89 \\
 \seminator           & 0.08 &    0.02  &   0.83 \\
 \roll                & 7.28 &    2.74  &  16.06 \\
\bottomrule
\end{tabular}

\end{table}
}

\begin{table}[b!]
\caption{Statistics for our experiments on the LTL benchmark (see the
  explanation of the columns in the description of \cref{tab:summary}).}
  \vspace*{-2mm}
  \begin{center}
  \scalebox{0.8}{
    
  }
  \end{center}
\label{tab:ltl_stats}
\end{table}

\figrankltl

In the following, we give experimental evaluation of our optimizations on the
benchmark from~\cite{seminator}.
The benchmark contains automata obtained from LTL formulae
\begin{inparaenum}[(i)]
  \item  from literature (221) and
  \item  randomly generated (1500).
\end{inparaenum}
(The motivation behind complementing BAs
obtained from LTL formulae might be, e.g., testing correctness of LTL-to-BA
translation algorithms; otherwise, when handling LTL formulae, it is a well
known fact negating the formula and constructing a BA directly for the negation
only increases the size of the BA \emph{linearly} instead of
\emph{exponentially}).
From the benchmark, we selected only 414 \emph{hard} BAs (in the same way as
in~\cref{sec:experiments}).
Note that although we selected only 414 \emph{hard} instances, their structure is still
simpler than the structure of the BAs considered in~\cref{sec:experiments},
since LTL does not have the full power of $\omega$-regular languages (this
difference in the difficulty can be seen from the summary statistics in
\cref{tab:ltl_stats}).
The timeout was again set to 5\,minutes.

\begin{table}[b!]
\caption{\footnotesize Run times of the tools on the LTL benchmark (in seconds)}
\vspace*{-2mm}
\centering
\label{tab:times_ltl}
\footnotesize

\end{table}

In \cref{tab:ltl_stats}, we can see that the average sizes of the complements produced by
$\rankermaxrank$, compared to the other tools, are larger than in our main benchmark
in~\cref{sec:experiments}, the median is, however, comparable.
We believe that the larger average size is due to the two following facts:
\begin{inparaenum}[(i)]
  \item  that BAs from LTL formulae have a~simpler structure than general BAs
    that is more suitable for the other approaches and
  \item  our implementation does not take advantage of the symbolic alphabets
    present in the benchmark (we immediately translate the alphabet to an explicit
    alphabet, neglecting any relations among the symbols).
\end{inparaenum}
Moreover, note that $\rankermaxrank$ is no longer a clear winner here.
In particular, it is now comparable to \spot and \seminator (both provide smaller
automata slightly more often); \fribourg is the clear winner on this benchmark.
It is also interesting to see that the results for \piterman are significantly
worse when compared to the other tools than in our main experiments
in~\cref{sec:experiments}.
One possible explanation might be that the benchmarks from LTL formulae contain
symbolic alphabets; as far as we know, the implementation of \piterman in \goal
turns such an alphabet into an explicit one, so it cannot exploit the internal
structure of symbols on transitions).
Scatter plots comparing rank-based approaches are in \cref{fig:rank-based-ltl}.
Furthermore, scatter plots comparing $\rankermaxrank$ with other approaches are
in \cref{fig:comparison_other_ltl}.
Compared with the scatter plots in \cref{fig:comparison_other}, we substituted
\piterman with \spot, which had better results than both \piterman and \safra.

In \cref{tab:times_ltl}, we provide the times needed by the tools.
We note that \seminator performs much better on this benchmark, the performance
of \piterman goes down, and also that \roll does not perform as bad as in our
main benchmark (we believe that this is due to the lower difficulty of this
benchmark).

\figotherltl

\end{document}